\tikzstyle{edge}=[shorten <=2pt, shorten >=2pt, >=stealth, line width=1.5pt]
\tikzstyle{vertex}=[circle, fill=white, draw, minimum size=6pt, inner sep=0pt,
\tikzstyle{arc}=[->, shorten <=3pt, shorten >=3pt, >=stealth, line width=1.25pt]
\newtheorem{theorem}{Theorem}
\newtheorem{definition}{Definition}
\newtheorem{conjecture}{Conjecture}
\newtheorem{lemma}[theorem]{Lemma}
\newtheorem{corollary}[theorem]{Corollary}
\newtheorem{proposition}[theorem]{Proposition}
\newtheorem{observation}[theorem]{Observation}
\theoremstyle{remark}
\newtheorem{remark}{Remark}
\theoremstyle{definition}
\newtheorem{example}{Example}
\DeclareMathOperator{\Pol}{Pol}
\DeclareMathOperator{\Betw}{Betw}
\DeclareMathOperator{\Age}{Age}
\DeclareMathOperator{\Bip}{Bip}
\DeclareMathOperator{\Aut}{Aut}
\newcommand{\calF}{\mathcal F}
\newcommand{\calC}{\mathcal C}
\DeclareMathOperator{\CSP}{CSP}
\DeclareMathOperator{\injCSP}{injCSP}
\DeclareMathOperator{\NP}{NP}
\DeclareMathOperator{\PO}{P}
\DeclareMathOperator{\coNP}{coNP}
\DeclareMathOperator{\PCSP}{PCSP}
\DeclareMathOperator{\SP}{SP}
\DeclareMathOperator{\GR}{Grid}
\title{A CSP approach to Graph Sandwich Problems}
\author[1]{Manuel Bodirsky\thanks{manuel.bodirsky@tu-dresden.de}}
\author[1]{Santiago Guzm\'an-Pro\thanks{santiago.guzman\_pro@tu-dresden.de\\
This project has been funded by the European Research Council (Project POCOCOP, ERC Synergy Grant 101071674).
Views and opinions expressed are however those of the authors only and do not necessarily reflect those of
the European Union or the European Research Council Executive Agency. Neither the European Union nor the
granting authority can be held responsible for them.}}
\affil[1]{Institut f\"ur Algebra, TU Dresden}
\date{\today}
\begin{document}

\maketitle

\begin{abstract}
  The \emph{Sandwich Problem} (SP) for a graph class $\calC$ is the following computational problem. 
  The input is a pair of graphs $(V,E_1)$ and $(V,E_2)$ where $E_1\subseteq E_2$, and the task
  is to decide whether there is an edge set $E$ where $E_1\subseteq E\subseteq E_2$ such that the graph
  $(V,E)$ belongs to $\calC$. In this paper we show that many SPs correspond to the
  constraint satisfaction problem (CSP) of an infinite $2$-edge-coloured graph $H$. We then notice that several known complexity results for SPs also follow from general complexity classifications of infinite-domain CSPs, suggesting a
  fruitful application of the theory of CSPs to complexity classifications of SPs. We strengthen
  this evidence by using basic tools from constraint satisfaction theory to propose new complexity results 
  of the SP for several graph classes including line graphs of multigraphs, line graphs of bipartite multigraphs,
  $K_k$-free perfect graphs,
  and classes described by forbidding finitely many induced subgraphs, such as 
  $\{I_4,P_4\}$-free graphs, settling an open problem of Alvarado, Dantas, and Rautenbach (2019).
  We also construct a graph sandwich problem which is in coNP, but neither in P nor coNP-complete (unless P = coNP). 

\end{abstract}

\tableofcontents

\section{Introduction}
\label{sec:introduction}

\noindent\textbf{Graph sandwich problems} are computational problems in graph theory
introduced by Golumbic, Kaplan, and Shamir in~\cite{golumbicJA19}.  For a graph class $\calC$, 
the \emph{sandwich problem for ${\mathcal C}$} is the following computational problem, denoted by $\SP({\mathcal C})$:  
The input is a pair of graphs $((V,E_1),(V,E_2))$ where $E_1\subseteq E_2$, and the task
is to decide whether there is an edge set $E$ such that $E_1\subseteq E\subseteq E_2$ and the graph $(V,E)$ belongs to $\calC$. In particular, the SP for $\calC$ is at least as hard as the recognition problem
for $\calC$: a graph $(V,E)$ belongs to $\calC$ if and only if the input $((V,E), (V,E))$ is a yes-instance
of the SP problem for $\calC$. Classifying the computational complexity of the SP for well-structured
hereditary classes has been the main focus of  several papers~\cite{alvaradoAOR280,cameronDM348,dantasDAM182,dantasDAM143,
dantasENTCS346,dantasICCGI07,figueiredoDAM121,golumbicJA19}. 
The sandwich problem for the class of all perfect graphs is not known to be in P and not known
to be NP-hard, and its computational complexity
``remains one of the most prominent open questions in this area''~\cite{cameronDM348}.

The sandwich problem for $\calC$ can be equivalently modeled as follows. The input is a triple
$(V,E,N)$ where $V$ is a set of vertices, $E$ is a set of edges, and $N$ is a set of non-edges. 
The task is to determine whether there is a graph $(V,E') \in \calC$ such that $E\subseteq E'$ and 
$E'\cap N = \varnothing$. We can therefore think of graph sandwich problems as recognition
problems of $2$-edge-coloured graphs, where we think of $E$ as blue edges,
and of $N$ as red edges. We denote by $\SP(\calC)$  the class of $2$-edge-coloured graphs $(V,E,N)$
which are yes-instances to this computational problem for ${\cal C}$; we will also use $\SP(\calC)$ to denote
the sandwich problem for $\calC$ as a computational problem as introduced earlier.

In our context, a \emph{$2$-edge-coloured graph} is a triple $(V,B,R)$ where $(V,R)$ and $(V,B)$
are graphs, and $R\cap B= \varnothing$. A \emph{homomorphism} $f\colon G\to H$ between a pair of
edge-coloured graphs $G$ and $H$ is  a vertex mapping $f\colon V(G)\to V(H)$
that preserves adjacencies and colours, i.e., if $uv\in R(G)$ (resp.\ $uv\in B(G)$), then
$f(u)f(v)\in R(H)$ (resp.\ $B(H)$). In this case,  we write $G\to H$, and if no such homomorphism
exists we write $G\not\to H$. Using notation from constraint satisfaction theory, 
we denote by $\CSP(H)$ the class of finite $2$-edge-coloured graphs $G$ such that $G\to H$.
The \emph{constraint satisfaction problem} of $H$ consisting of deciding whether
an input graph $G$ belongs to $\CSP(H)$. Homomorphisms, and CSPs of uncoloured graphs are
defined analogously. For instance, $\CSP(K_3)$ corresponds to $3$-{\sc Colouring}.
As we will see below, several graph SPs correspond to the CSP of an (infinite) $2$-edge-coloured
graph $H$. Such examples include SP of cographs (i.e., $P_4$-free graphs),
of comparability graphs, of line graphs of bipartite multigraphs, and  of perfect graphs.

\medskip 

\noindent\textbf{Constraint satisfaction theory} can be traced back to Schaefer's classification
of boolean-domain CSPs~\cite{Schaefer}, and to the Hell-Ne\v{s}et\v{r}il theorem classifying the complexity
of CSPs of undirected finite graphs~\cite{HellNesetril} (also known as \emph{graph homomorphism problems}).
This theory provides a uniform approach for classifying the complexity of constraint satisfaction
problems. One of the elementary tools for this approach are \emph{primitive positive constructions}
(shortly introduced), which correspond to gadget reductions (see, e.g.,~\cite{wonderland}), i.e., 
if and only if 
there is a gadget reduction from $\CSP(H')$ to $\CSP(H)$.
This elementary concept turns out to play an important role in the finite-domain dichotomy theorem: if $H$ is a finite
$2$-edge-coloured graph (digraph, or  more generally a relational structure) primitively
positively constructs $K_3$, then $\CSP(H)$ is $\NP$-complete; otherwise, $\CSP(H)$ can 
be solved in polynomial time~\cite{Zhuk20} (originally announced by Bulatov~\cite{BulatovFVConjecture}
and Zhuk~\cite{ZhukFVConjecture}).

A prototypical example of an infinite-domain CSP is the class of acyclic digraphs,
which corresponds to the CSP of the digraph $(\mathbb Q,<)$, i.e., there is an edge
$x\to y$ if and only if $x < y$. Another classical example is the CSP of
the ternary structure $(\mathbb Q, \Betw)$ where $\Betw$ is the \emph{betweenness} relation, 
i.e., the set of triples $(x,y,z)$ with $x < y <z$ or $z < y < x$. The first example
is clearly a tractable example, while the second is a well-known $\NP$-complete infinite-domain CSP.
Notice that both structures enjoy the property that if $f\colon \mathbb{Q\to Q}$ is an 
order preserving bijection, then $f$ is an automorphism of the corresponding structure.
A structure whose vertex set is the set of rational numbers $\mathbb Q$ and satisfies the
previous condition is called a \emph{temporal} structure. It turns out that primitive
positive constructions also play an important role in the classification of temporal CSPs:
if a temporal structure primitively positively constructs $K_3$, then 
$\CSP(H)$ is $\NP$-complete; and otherwise $\CSP(H)$ can be solved in polynomial time~\cite{tcsps-journal}.
Most of the known P versus NP-complete classifications of infinite-domain CSPs
can be also stated in terms of primitive positive constructions~\cite{BMPP16,mottetACM71,posetCSP18}
(for recent examples of classifications of infinite graph and digraph CSPs see, e.g.,~\cite{bodirskySIDMA39,brunarARXIV}).\\

\noindent
\textbf{In this paper}, we establish a connection between graph sandwich problems and
infinite-domain CSPs, which allows us to propose a unified approach to the complexity
classification of several graph sandwich problems. In particular, we will see that
several hardness results about graph SPs can be obtained by means of primitive positive
constructions, resulting in a considerable shortening of previous proofs classifying their
complexity. We will also see that some of the tractable cases of graph SPs can be solved
by algorithms which align with general algorithms for infinite-domain CSPs.

To further illustrate the fruitfulness of the uniform CSP approach to graph SPs, 
we prove new complexity results for graph SPs.
Alvarado, Dantas, and Rautenbach~\cite{dantasAOR188}
studied graph sandwich problems for classes described by two forbidden induced
subgraphs on four vertices. In their classification project 
only eight cases were left open, among them the case ${\mathcal F} = \{I_4,P_4\}$
(where $I_n = nK_1$ denotes the edgeless graph on $n$ vertices).
We prove that the ${\mathcal F}$-free graph sandwich problem is NP-hard. In fact,
this result will be a special case of more general NP-hardness results for infinite classes
of graph sandwich problems, which also yields a complexity classification of
the SP for $K_k$-free perfect graphs (Corollary~\ref{cor:perfect-Kk-free}).
We also present some hardness results for classes of sandwich problems conditional on recent
conjectures from the field of \emph{promise constraint satisfaction problems} and the well-known
Gy\'arf\'as--Sumner conjecture
from graph theory.  Further new complexity results concern the graph sandwich problem for the class
of $\{P_n,K_k\}$-free graphs, of line graphs of multigraphs, and the class of line graphs
of bipartite multigraphs, the last class being closely related to perfect graphs whose SP
complexity is the main open problem in the area.

Inspired by results from constraint satisfaction, we also identify a graph sandwich problem
which is coNP-complete (Section~\ref{sect:non-dicho}). By adapting the proof of Ladner's theorem
we even find graph sandwich problems that are \emph{coNP-intermediate}, i.e., problems in
coNP and neither in P nor coNP-complete (unless P = coNP).
For us, this result complements the complexity classifications mentioned above by noticing that
graph sandwich problems  capture complexity classes beyond our current understanding, and thus, any
systematic approach to complexity classifications must impose certain restrictions on which
GSPs to consider. It remains open whether there are NP-intermediate graph sandwich problems.

The rest of this paper is structured as follows. In Section~\ref{sec:preliminaries},
we provide standard notions from graph and model theory, and introduce the
basic tools from constraint satisfaction theory needed for this work.
In Section~\ref{sec:CSP}, we establish some simple conditions upon a graph class $\calC$ so that the
SP for $\calC$ corresponds to the homomorphism problem of a $2$-edge-coloured graph $H$.
To illustrate the usefulness of the connection between graph sandwich problems and CSPs further,
we consider several hereditary properties for which the complexity of their SP has been
classified before, and see that these complexity classifications can be easily settled
via results from CSPs (Section~\ref{sect:known}).
In Sections~\ref{sect:line}, \ref{sect:salt-free}, and~\ref{sect:Gs--SP} we present new
complexity for the graph sandwich problems, including previously open cases from the literature.
Our non-dichotomy result can be found in Section~\ref{sect:non-dicho}. 
We conclude in Section~\ref{sec:conclusions} with a list of open problems towards a more systematic
understanding of the complexity of graph sandwich problems.

\section{Preliminaries}
\label{sec:preliminaries}

All graphs in this paper are simple graphs, unless we specifically talk about \emph{multigraphs},
i.e., graphs with possibly multiple edges between the same pair of vertices. All edge-coloured graphs
will also be simple, i.e., there are no multiple monochromatic edges
between two vertices $x,y$ (notice that this allows for multiple heterochromatic edges). 
We follow standard notions form graph theory (see, e.g.,~\cite{bondy2008}). In particular,
we denote by $\overline G$ the \emph{complement} of a graph $G$, i.e., $V(\overline G) = V(G)$,
and $E(\overline G) = \{xy\colon x\neq y, xy\not\in E(G)\}$. We will denote by $N(G)$ the set
of \emph{non-edges} of $G$, i.e., $N(G) = E(\overline G)$,
as usual, when $G$ is clear from context, 
we simply write $N$ instead of $N(G)$.

We also assume familiarity with basic first-order logic, and refer the
reader, e.g., to~\cite{HodgesLong}.

\subsection{Graph Classes and Universal Graphs}
All graph classes $\calC$ considered in this paper are closed under \emph{isomorphism}, i.e., 
if $H$ belongs to $\calC$ and there is a bijection $f\colon V(G)\to V(H)$
preserving the edge relation, and the non-edge relation, then $G$ belongs to $\calC$ as well. 
We say that $\calC$ has the \emph{hereditary property} (or simply, that $\calC$ is a
\emph{hereditary} class) if it is closed under taking induced subgraphs, 
i.e., if $H$ belongs to $\calC$, and $G$ is obtained from $H$ by removing some vertices, then
$G$ belongs to $\calC$ as well.  Finally, we say that $\calC$ has the \emph{joint embedding property (JEP)}
if for every pair of graphs $G$ and $H$ in $\calC$, there is a graph $F$ in $\calC$ such
that $G$ and $H$ are induced subgraphs of $F$. 

A graph $G$ \emph{embeds} into a graph $H$ if $G$ is isomorphic to an induced subgraph
of $H$. Given a (possibly infinite) graph $H$, we denote by $\Age(H)$ the class
of finite graphs that embed into $H$, and we call it the \emph{age} of $H$.
Notice that if a class $\calC$ corresponds to the age of some graph $H$, then 
$\calC$ is hereditary and has the joint embedding property. Actually, it is a
well-known model theoretic fact that a class $\calC$ is the age of some countable
graph $H$ if and only if $\calC$ is hereditary and has JEP~\cite[Theorem 7.1.1]{HodgesLong}
(see also~\cite[Theorem 2]{scheinermanDM55} for a graph theoretic version in terms of 
injective intersection classes).

For the purposes of this paper, we say that a graph $H$ is \emph{universal} in
a graph class $\calC$ if $H$ belongs to $\mathcal C$, and $\Age(H)$ coincides with
the class of finite graphs in $\calC$. 
If the class $\mathcal C$ is called ``name'', we will say that
$H$ is a universal ``name'' graph, e.g., we will talk about a universal bipartite graph,
a universal split graph, and a universal threshold graph.

\begin{example}\label{ex:random-graph}
   
        The \emph{Random Graph} is the graph with vertex set $\mathbb N$
        and for every pair of vertices $u,v\in \mathbb N$ there is an
        edge between $u$ and $v$ independently at random with probability $1/2$. This graph is
       a universal graph and unique up to isomorphism 
       (and is also called the \emph{Rado Graph};  
        it can also be constructed deterministically~\cite[Theorem 7.4.4]{HodgesLong}).
    \end{example}

        \begin{example}\label{ex:random-bipartite-graph}
        The \emph{Random Bipartite Graph}, denoted $\Bip$, is the graph whose vertex set
        consists of two disjoint countably infinite sets $X$ and $Y$ such that between each pair of vertices
        $x\in X$ and $y\in Y$ there is an edge $xy\in E$  independently at random with probability
        $1/2$; there are no edges with $X$ and within $Y$, so $x$ and $Y$ are independent sets in the graph theoretic sense.
        Again, this graph is unique up to isomorphism, and is a universal bipartite graph.
        It can also be constructed deterministically by standard model-theoretic constructions (see, e.g.,~\cite{Hodges}).
\end{example}

\subsection{Constraint Satisfaction Problems}
A \emph{relational signature} is a set $\tau$ of relation symbols $R,S,\dots$
each equipped with a positive integer $r$ called its \emph{arity}. A \emph{$\tau$-structure}
$H$ consists of a \emph{vertex set} $V(H)$ (also called the \emph{domain}), 
and for each $R\in \tau$ an $r$-ary relation $R(H)\subseteq V(H)^r$ called
the \emph{interpretation} of $R$ in $H$. Following this formalism,
a digraph is an $\{E\}$-structure where $E$ is a binary relation symbol, 
a graph is an $\{E\}$-structure $G$ where the interpretation of $E$ in 
$G$ is a symmetric (binary) relation, and a $2$-edge-coloured graph
is an $\{R,B\}$-structure $H$ where each interpretation $R(H)$ and $B(H)$
is a symmetric binary relation.

A \emph{homomorphism} between a pair of $\tau$-structures $A,B$
is a vertex mapping $f\colon V(A)\to V(B)$ such that for each $R\in \tau$  of
arity $r$, and every $r$-tuple $(v_1,\dots, v_r)\in R(A)$ the image
$(f(v_1),\dots, f(v_r))$ belongs to the interpretation $R(B)$. Naturally, 
this notion of homomorphism coincides with the previously introduced
notion of homomorphism between edge-coloured graphs, i.e., vertex
mappings $f\colon V(G)\to V(H)$ preserving adjacencies, and respecting
the colour of the edges.

The \emph{product} of a pair of $\tau$-structure $A$ and $B$ is the
$\tau$-structure $A\times B$ with vertex set $V(A)\times V(B)$, and
for each $R\in \tau$ of arity $r$ there is tuple $((a_1,b_1),\dots,
(a_r,b_r))$ in the interpretation of $R$ in $A\times B$ if and only if 
$(a_1,\dots, a_r)\in R(A)$ and $(b_1,\dots, b_r)\in R(B)$. For a positive
integer $k$ we denote by $A^k$ the product $A\times A^{k-1}$ where
$A^1 = A$.

The \emph{disjoint union} of a pair of $\tau$-structures $A$ and $B$
with $V(A)\cap V(B) = \varnothing$, is the structure $A+ B$
with domain $V(A)\cup V(B)$, and for each $R\in\tau$ the interpretation
$R(A+ B)$ is the union $R(A)\cup R(B)$. Notice that
if $A,B\in \CSP(C)$, then $A + B\in \CSP(C)$. It is straightforward
to observe that a class $\calC$ of finite structures is the CSP of some
(possibly infinite) countable structure $A$ if and only if $\calC$ is
closed under disjoin unions, and under inverse homomorphisms, i.e.,
if $B\in \calC$ and $A\to B$,  then $A\in \calC$ (see, e.g.,~\cite{Book}).

The \emph{constraint satisfaction problem} (CSP) of a $\tau$-structure
$A$ consists of deciding whether an input (finite) $\tau$-structure
$B$ maps homomorphically  to $A$. We denote this problem by $\CSP(A)$,
and use the same notation $\CSP(A)$ to denote the class of finite
structure $B$ that map homomorphically to $B$, i.e., the class
of yes-instances to the CSP of $A$.

\subsection{Primitive Positive Constructions}

A standard reduction between CSPs known to many graph theorists is the
following gadget reduction from $\CSP(K_5)$ (i.e., $5$-{\sc Colouring})
to $\CSP(C_5)$. Given an input $G$ to $\CSP(K_5)$ construct
a new graph $G'$ by replacing each edge $xy\in E(G)$ by a path of
length three $xuvy$ where the vertices $u$ and $v$ are fresh new
vertices for each  edge $xy\in E(G)$. It is standard procedure to verify that
$G$ admits a $5$-colouring if and only if the new graph $G'$
admits a homomorphism to $C_5$. Now, consider the binary relation
$E'$ defined on the vertices of $C_5$ by ``there is a path of length
$3$ between $x$ and $y$''. The reader can easily verify that the
graph $(V(C_5), E')$ is the complete graph on $5$ vertices. Primitive positive
constructions are a generalization of the previous definition of $K_5$ in $C_5$,
and in a similar way that the previous definition is  connected
to the gadget reduction above, general primitive positive
constructions also yield gadget reductions for general CSPs.

A \emph{primitive positive (pp) formula} $\varphi(\bar x)$ of graphs is a 
formula with existentially quantified variables $\bar y$ and 
whose quantifier-free part is a conjunction of equalities
$u = v$ and positive atoms $E(w,z)$ where each variable  $u,v,z,w$
is one of the free variables from $\bar x$ or one the existentially quantified
variables from $\bar y$.  For instance, 
the relation $E'(x,y)$ from the example above admits the following
pp-definition
\[
    E'(x,y):=\exists z,w (E(x,z)\land E(z,w)\land E(w,y)). 
\]
Primitive positive $\tau$-formulas are defined analogously, i.e., 
existentially quantified formulas whose quantifier-free part
is a conjunction of positive atoms from $\tau$ and equalities. 

A \emph{primitive positive definition} of digraphs in a signature
$\tau$ of dimension $d\in \mathbb Z^+$ is a primitive positive formula
$\delta_E(x_1,\dots, x_d, y_1,\dots, y_d)$ with $2d$ many free variables.
For every such primitive positive definition we associate a mapping  $\Pi$
from $\tau$-structures to digraphs as follows.
Given a $\tau$-structure $A$ the \emph{pp-power} $\Pi(A)$ of $A$
is the digraph
\begin{itemize}
    \item  with vertex set $V(A)^d$, and
    \item there is an edge $(a_1,\dots, a_d) \to (b_1,\dots, b_d)$ if and only if
    $A\models \delta_E(a_1,\dots, a_d,b_1,\dots, b_d)$.
\end{itemize}
We say that a structure $A$ \emph{pp-constructs} a digraph $D$ if there is a
primitive positive definition $\delta_E$ such that $\Pi(A) \to D \to  \Pi(A)$, 
i.e., $\Pi(A)$ is homomorphically equivalent to the digraph $D$. Continuing with our previous
example, consider the $1$-dimensional primitive positive definition
\[
\delta_E(x,y):=\exists z,w \; \big (E(x,z)\land E(z,w)\land E(w,y) \big),
\] 
then the pp-power $\Pi(C_5)$ of the $5$-cycle is the complete graph
$K_5$ (see also Figure~\ref{fig:C5-K5}).

\begin{figure}
\centering
\begin{tikzpicture}

  \begin{scope}[xshift = -6cm, scale = 0.6]
    \node [vertex] (1) at (90:2) {};
    \node [vertex] (2) at (18:2) {};
    \node [vertex] (3) at (306:2) {};
    \node [vertex] (4) at (234:2) {};
    \node [vertex] (5) at (162:2) {};
    \node (L1) at (0,-2.75) {$C_5$};
      
    \foreach \from/\to in {1/2, 2/3, 3/4, 4/5, 5/1}     
    \draw [edge] (\from) to  (\to);
  \end{scope}

  \begin{scope}[scale = 0.6]
    \node [vertex] (1) at (90:2) {};
    \node [vertex] (2) at (18:2) {};
    \node [vertex] (3) at (306:2) {};
    \node [vertex] (4) at (234:2) {};
    \node [vertex] (5) at (162:2) {};
    \node (L1) at (0,-2.75) {$\Pi(C_5)\cong K_5$};
      
    \foreach \from/\to in {1/2, 2/3, 3/4, 4/5, 5/1, 1/3, 3/5, 5/2, 2/4, 4/1}     
    \draw [edge] (\from) to  (\to);
  \end{scope}

\end{tikzpicture}
\caption{Consider  the primitive positive definition (of $\{E\}$ in $\{E\}$) 
where $\delta_E(x,y):=\exists z,w\; \big ( E(x,z)\land E(z,w)\land E(w,y) \big )$. 
Here, we depict $C_5$ and its pp-power $\Pi(C_5)\cong K_5$.
}
\label{fig:C5-K5}
\end{figure}
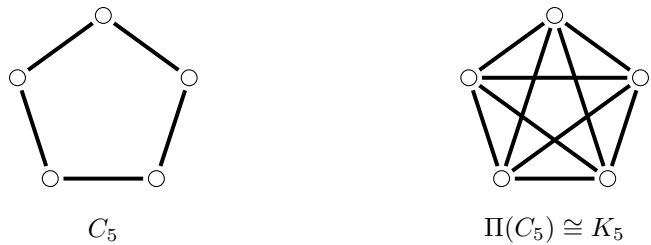

In general, a primitive positive definition of a signature
$\sigma$ in a signature $\tau$ is a set $\Delta$ of primitive positive
formulas $\delta_R$ indexed by relation symbols $R\in \sigma$, and such
that each formula $\delta_R$ has exactly $r\cdot d$ free variables, where
$r$ is the arity of $R\in \tau$. Pp-powers and pp-constructions of
general relational structures are then defined accordingly.

As mentioned above, pp-constructions yield a canonical class of log-space reductions
between constraint satisfaction problems. 

\begin{lemma}[Corollary 3.5 in \cite{wonderland}]
\label{lem:pp-construction-reduction}
    Let $A$ and $B$ be (possibly infinite) structures with finite relational
    signature. If $A$ pp-constructs $B$, then $\CSP(B)$ reduces in logarithmic space
    to $\CSP(A)$. 
\end{lemma}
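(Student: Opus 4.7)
The plan is to unpack the definition of pp-construction into a concrete gadget reduction. By hypothesis, there is a pp-definition $\Delta = \{\delta_R : R \in \tau_B\}$ of the signature $\tau_B$ of $B$ inside the signature $\tau_A$ of $A$, of some dimension $d \in \mathbb Z^+$, such that the pp-power $\Pi(A)$ is homomorphically equivalent to $B$. I would factor the reduction into two stages: first from $\CSP(B)$ to $\CSP(\Pi(A))$, then from $\CSP(\Pi(A))$ to $\CSP(A)$.

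The first stage is essentially free: homomorphic equivalence gives fixed homomorphisms $B \to \Pi(A)$ and $\Pi(A) \to B$, so for any finite $\tau_B$-structure $I$, one has $I \to B$ iff $I \to \Pi(A)$ by pre- and post-composition. Thus the identity map on instances is the reduction in this stage. The real work is in the second stage. Given an input $\tau_B$-structure $I$, I would construct a $\tau_A$-structure $I'$ as follows: replace each vertex $v \in V(I)$ by $d$ fresh copies $v^{(1)}, \dots, v^{(d)}$; then, for each tuple $(v_1, \dots, v_r) \in R(I)$ with $R \in \tau_B$, instantiate the free variables of $\delta_R$ by $v_1^{(1)}, \dots, v_1^{(d)}, \dots, v_r^{(1)}, \dots, v_r^{(d)}$, introduce fresh vertices for each existentially quantified variable of $\delta_R$ (fresh per constraint), and add all atomic $\tau_A$-conjuncts from the quantifier-free part of $\delta_R$ as relations of $I'$; equality atoms $u = v$ are handled by identifying the corresponding variables. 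Since each $\delta_R$ has constant size and each tuple of $I$ contributes a bounded-size gadget, the construction can be carried out in logarithmic space.

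For correctness, given a homomorphism $f \colon I \to \Pi(A)$ write each $f(v) = (f_1(v), \dots, f_d(v)) \in V(A)^d$ and define $g(v^{(i)}) := f_i(v)$; for each constraint of $I$, the fact that $A \models \delta_R(f_1(v_1), \dots, f_d(v_r))$ supplies values for the existentially quantified witnesses, extending $g$ to all of $V(I')$ as a $\tau_A$-homomorphism $g\colon I' \to A$. Conversely, given $g \colon I' \to A$, set $f(v) := (g(v^{(1)}), \dots, g(v^{(d)}))$; the gadget for each $R$-constraint of $I$ witnesses $A \models \delta_R(\dots)$, so $f$ preserves every relation of $\Pi(A)$.

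I do not expect a real conceptual obstacle. The main care points are bookkeeping: handling equality atoms by identification of variables, using fresh existential variables per constraint (rather than globally), and confirming that the $d$-fold blow-up in vertices and the per-constraint gadget cost still leaves the reduction in log-space. The overall proof is essentially a verification that the combinatorial semantics of pp-powers and pp-constructions match their logical definition.
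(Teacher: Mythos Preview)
Your argument is correct and is exactly the standard gadget reduction that pp-constructions encode. Note, however, that the paper does not actually prove this lemma: it is quoted as Corollary~3.5 from~\cite{wonderland} and used as a black box. The paper does sketch your construction informally in the paragraph preceding the lemma (the $C_5$/$K_5$ example, where each edge is replaced by a path of length three with fresh internal vertices), which is precisely the one-dimensional special case of your second stage; so your write-up matches the intended intuition, but there is no in-paper proof to compare against.
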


\section{Sandwich Problems and CSPs}
\label{sec:CSP}
Let us begin with a simple example. A graph $G$ is called a \emph{split graph}
if its vertex set can be covered by an independent set $I$ and
a set $K$ inducing a complete subgraph. Consider now the countable
graph $S = (X\cup Y,E)$ obtained from the Random Bipartite graph $\Bip = (X,Y,E(\Bip))$
(see Example~\ref{ex:random-bipartite-graph}) by adding to $E(\Bip)$ all edges between
pairs of distinct vertices in $Y$, i.e., $E:=E(\Bip)\cup \{yy'\colon y,y'\in Y, y\neq y'\}$.
Using the fact that the Random Bipartite graph is a universal graph  for the class of all
bipartite graphs,  one can notice that $S$ is a universal graph for the class of all split graphs.
Now, consider the $2$-edge-coloured graph $(X\cup Y,B,R)$ where
 the set of blue edges $B$ is $E$, and the set of red edges
is the set of non-edges of $S$, i.e., the set of edges of $\overline S$.
It follows that $\CSP(X\cup Y,B,R)$ is the sandwich problem for the class of split graphs.
Indeed, 
if an input $(V,E,N)$ to the SP of split graphs is a yes-instance, 
then there is an edge set $E'$ such that $(V,E')$ is a split graph, and
so there is an embedding $f\colon (V,E')\to S$, and this embedding
also defines a homomorphism of $2$-edge-coloured graphs
$f\colon (V,E,N)\to (X \cup Y,B,R)$. Conversely, suppose that there
is a homomorphism $f'\colon (V,E,N)\to (X \cup Y,B,R)$, and consider the edge set
\[
    E' = \{uv\colon f'(u),f'(v)\in Y\}~\cup~\{uv\colon f'(u)f'(v)\in B\}.
\]
It is not hard to observe that $(V,E')$ is a split graph. Also,
since $f'$ maps adjacent vertices $xy\in E$ to blue edges of $(X \cup Y,B,R)$, 
it follows that $E\subseteq E'$. One can similarly notice that
if $uv\in N$, then $f'(u)f'(v)\in R$, and in particular at least
one of the vertices $f'(u)$ or $f'(v)$ belongs to $X$, hence
$E'\cap N = \varnothing$. Therefore, $(V,E,N)$ is a yes-instance
to the SP for split graphs if and only if $(V,E,N)$ is a yes-instance to $\CSP(X \cup Y,B,R)$.

The aim of this section is to generalize the previous example, and 
to characterize those graph classes $\calC$ whose sandwich problem
is the CSP of some $2$-edge-coloured graph. 
(As a corollary, we also obtain a characterization of those CSPs of
$2$-edge-coloured graphs that are graph sandwich problems (Corollary~\ref{cor:CSP-SP}).
To do so, it will be convenient
to first consider the injective variant of CSPs. 
Given a $2$-edge-coloured graph $(V,B,R)$, 
the \emph{injective CSP of $(V,B,R)$} is 
the class of $2$-edge-coloured graphs that admit
an injective homomorphism into $(V,B,R)$, and will be denoted by $\injCSP(V,B,R)$. 

\begin{definition}
    We say that a $2$-edge-coloured graph $H$ is \emph{complete}
    if for every pair of vertices $u,v\in V$ there is exactly one edge
    between $u$ and $v$, i.e., $uv\in B$ and $uv\not\in R$, or vice versa.
    For a graph $H$ we denote by $H^\ast$ the complete $2$-edge-coloured graph
    with $V(H) = V(H^\ast)$, $B(H^\ast) = E(H)$, and $R(H^\ast) = N(H)$. 
\end{definition}

   Note that
    every complete $2$-edge-coloured graph is of the form $H^\ast$
    for some graph $H$.

\begin{remark}\label{rmk:injCSP}
For every  graph $H$ the injective CSP of $H^\ast$ is the sandwich problem for $\Age(H)$.
\end{remark}

Two vertices $u,v$ of a graph $G$ are called \emph{twins} if $N(u) = N(v)$,
and they are called \emph{co-twins} if $uv\in E$ and
$N(u)\setminus\{v\} = N(v)\setminus\{v\}$.We deviate from some the terminology in some references
where twins and co-twins are called false and true twins, respectively. 
Notice that $u$ and $v$ are co-twins in $G$ if and only if they are twins in
$\overline G$. We say that $G'$ is a \emph{blow-up} 
of $G$ if $G'$ can be obtained from $G$ by iteratively adding to $G$ a twin of some vertex of $G$. 
Analogously, we define the concept of a \emph{co-blow-up} of $G$.  
For instance, $C_4$ is
a blow-up of $P_3$, and $K_5$ is a co-blowup of $K_3$.

We say that a class of graphs $\calC$ is \emph{preserved} by \emph{blow-ups} (resp.\ by
\emph{co-blow-ups})
if for every graph $G$ in $\calC$, each blow-up $G'$ (resp.\ each co-blow-up)
of $G$ belongs to $\calC$ as well.
The reader familiar with graph theory may notice that several graph
classes are preserved by blow-ups or co-blow-ups; some examples include
cographs (preserved by blow-ups and co-blow-ups), perfect graphs
(preserved by blow-ups and co-blow-ups), chordal graphs (preserved by
co-blow-ups, but not by blow-ups), interval graphs (preserved by co-blow-ups,
but not by blow-ups), and comparability graphs (preserved by blow-ups and co-blow-ups).

Consider a graph $G$ and a partition $(I,C)$ of its vertex set (with possibly one empty part).
A \emph{split blow-up} of $G$ \emph{with respect to} $(I,C)$ is a graph $G'$ obtained 
from $G$ be iteratively substituting vertices $x\in I$ by a set of twins, and vertices 
$y\in C$ by a set of co-twins --- $I$ stands for ``independent'' meaning that a vertex
$x\in I$ is blown-up to an independent set, similarly, $C$ stands for ``complete''. 
We say that a class $\calC$ is preserved by \emph{split blow-ups} if for every
graph $G$ in $\calC$, there is a partition $(I,C)$ (with possibly one empty part) 
of the vertices of $G$ such that any split blow-up of $G$ with respect to $(I,C)$
belongs to $\calC$ as well. 
In particular, 
if $\calC$ is preserved by blow-ups (or is preserved by co-blow-ups), then $\calC$
is preserved by split blow-ups. 
The name of split blow-ups is clearly motivated by split graphs: the class of 
split graphs is a natural example of a class preserved under split blow-ups. Also
notice that it is neither preserved under blow-ups nor under co-blow-ups:
$2K_2$ is not a split graph, and it is a co-blow-up of $2K_1$ which is a split graph, 
and $C_4$ is not a split graph and it is a blow-up of $P_3$ which is a split graph.

\begin{lemma}\label{lem:H->H*}
    Let $\calC$ be a hereditary class of graphs with the joint embedding property and
    which is preserved under split blow-ups,   
    and let $H$ be a universal graph in $\calC$. Then $\SP(\calC) = 
    \injCSP(H^{\ast}) = \CSP(H^\ast)$.
\end{lemma}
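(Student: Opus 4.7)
The plan is to establish the two equalities $\SP(\calC) = \injCSP(H^\ast)$ and $\injCSP(H^\ast) = \CSP(H^\ast)$ separately. The first is essentially a direct consequence of Remark~\ref{rmk:injCSP}: for every graph $H$ we have $\injCSP(H^\ast) = \SP(\Age(H))$, and since $H$ is universal in $\calC$, the class of finite graphs of $\calC$ coincides with $\Age(H)$, so $\SP(\calC) = \SP(\Age(H))$. The inclusion $\injCSP(H^\ast) \subseteq \CSP(H^\ast)$ is immediate, so everything reduces to showing $\CSP(H^\ast) \subseteq \injCSP(H^\ast)$, and this is where the split blow-up hypothesis will do the real work.

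To prove this nontrivial inclusion, I will take a finite $G = (V,B,R)$ with a homomorphism $f\colon G \to H^\ast$ and build an injective one. Let $S = f(V)\subseteq V(H)$ and, for each $s\in S$, let $V_s := f^{-1}(s)$, so the $V_s$ partition $V$. The key observation is that fibers are ``discrete'': if $u,v \in V_s$ with $u\neq v$, then $uv \notin B \cup R$, since $H$ is a simple graph and therefore $H^\ast$ has no loop at $s$ in either colour. This freedom within fibers is what will let us realise each $V_s$ either as an independent set or as a clique without violating the coloured edge constraints.

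Next, $H[S]$ is an induced subgraph of $H$, hence a finite graph in $\calC$. By the split blow-up preservation property, there is a partition $(I,C)$ of $S$ such that every split blow-up of $H[S]$ with respect to $(I,C)$ is in $\calC$. Define $G^\ast$ by substituting each $s\in I$ with a twin set of size $|V_s|$ (identified with $V_s$) and each $s\in C$ with a co-twin set of size $|V_s|$ (again identified with $V_s$), so that $V(G^\ast)=V$. By construction $G^\ast \in \calC$. I would then check that $B \subseteq E(G^\ast)$ and $R \cap E(G^\ast) = \varnothing$: by fiber discreteness, any $uv\in B\cup R$ connects distinct fibers, and for vertices in distinct fibers $V_s,V_{s'}$ the adjacency in $G^\ast$ mirrors the adjacency of $s,s'$ in $H[S]$, which is exactly what $f$ being a homomorphism provides. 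Hence $G$ is a yes-instance of $\SP(\calC)$, so by the first equality it is a yes-instance of $\injCSP(H^\ast)$; concretely, an embedding of $G^\ast$ into $H$ (which exists since $G^\ast \in \calC = \Age(H)_{\mathrm{fin}}$) composed with the identification $V \to V(G^\ast)$ yields the required injective homomorphism into $H^\ast$.

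The main conceptual step, and the one where I expect the argument to require care, is packaging the fiber data into a split blow-up so that the partition $(I,C)$ granted by the preservation hypothesis can be used uniformly; everything else (discreteness of fibers, reading off the embedding) is bookkeeping once the blow-up $G^\ast$ has been set up correctly.
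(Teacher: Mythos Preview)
Your proposal is correct and follows essentially the same route as the paper: both arguments use Remark~\ref{rmk:injCSP} for the first equality, then for the nontrivial inclusion take the image of a homomorphism, observe it induces a finite member of $\calC$, invoke the split blow-up hypothesis to obtain a partition $(I,C)$, and build the sandwiched graph by blowing up fibers accordingly. Your presentation is arguably cleaner in that you invoke the partition $(I,C)$ explicitly and name the fiber-discreteness observation; the paper instead phrases the fiber decision vertex-by-vertex and appeals to well-definedness, but the content is the same.
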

\begin{proof}
    The equality $\SP({\mathcal C}) = \injCSP(H^*)$ follows from Remark~\ref{rmk:injCSP}.
    Let $(V,E,N)$ be a $2$-edge-coloured graph. 
    We show that
    $(V,E,N)$ is a yes-instance of the SP for $\calC$ if and only
    if $(V,E,N)\to H^\ast$. 
    If $(V,E,N)$ is a yes-instance to the SP for $\calC$, then
    $(V,E,N) \in \injCSP(H^\ast) \subseteq \CSP(H^\ast)$. 
    Conversely, suppose that there is a homomorphism $f\colon (V,E,N)\to H^\ast$. 
    Let $U$ be the image of $V$. By the definition of $H$,
    it follows that the (finite) subgraph $G$ of $H$ induced by
    $U$ belongs to $\calC$. We now construct an edge set $E'$ over
    the vertex set $V$ as follows. If $f(u)\neq f(v)$, then $uv\in E'$
    if and only if $f(u)f(v)\in E(G)$; if $f(u) = f(v)$, then
    $uv\not\in E'$ if the blow-up obtained from $G$ by substituting $f(u)$
    by a set of twins belongs to $\calC$, and $uv\in E$ if the co-blow-up
    obtained from $G$ by substituting $f(u)$ by a set of co-twins belongs to 
    $\calC$. The edge set $E'$ is well-defined since $\calC$ is preserved under
    split blow ups, and using the fact that $f$ is a homomorphism, we conclude
    that $E\subseteq E'$ and $N\cap E' =\varnothing$. Therefore,
    $\CSP(H^\ast) = \SP(\calC)$.
\end{proof}

We are now ready to characterize those graph classes whose sandwich problem
is the CSP of some $2$-edge-coloured graph.

\begin{proposition}\label{prop:SP->CSP}
    The following statements are equivalent for a class of 
    graphs $\calC$.
    \begin{enumerate}
        \item $\calC$ is a hereditary class, with the joint embedding property, 
        and preserved under split blow-ups.
        \item There is a $2$-edge-coloured graph $(V,R,B)$ such that
        $\CSP(V,R,B) = \SP(\calC)$.
        \item There is a graph $H$ such that 
         $\SP(\calC) = \CSP(H^\ast) = \injCSP(H^\ast)$.
    \end{enumerate}
\end{proposition}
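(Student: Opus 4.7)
The plan is to prove the cycle of implications $(1)\Rightarrow(3)\Rightarrow(2)\Rightarrow(1)$. The first implication follows immediately from Lemma~\ref{lem:H->H*} once a universal $H$ is produced: the model-theoretic result cited in Section~\ref{sec:preliminaries} from~\cite[Theorem 7.1.1]{HodgesLong} yields a countable graph $H$ whose age is the class of finite members of $\calC$, which is exactly the universal graph required by Lemma~\ref{lem:H->H*}. The implication $(3)\Rightarrow(2)$ is immediate since $H^{\ast}$ is itself a $2$-edge-coloured graph.

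The content is therefore $(2)\Rightarrow(1)$. Suppose $\SP(\calC)=\CSP(H)$ for some $2$-edge-coloured graph $H$, and write $\tilde G$ for $(V(G),E(G),N(G))$, so that $G\in\calC$ if and only if $\tilde G\in\CSP(H)$. Hereditariness follows by restricting any homomorphism $\tilde G\to H$ to an induced subgraph $G'$ of $G$: since $\tilde{G'}$ is already complete, the only sandwich witness is $G'$ itself, forcing $G'\in\calC$. The joint embedding property uses the closure of $\CSP(H)$ under disjoint unions recalled in Section~\ref{sec:preliminaries}: given $G_1,G_2\in\calC$, the $2$-edge-coloured graph $\tilde{G_1}+\tilde{G_2}$, whose cross pairs carry no colour, lies in $\CSP(H)=\SP(\calC)$, and any sandwich witness for it is a graph in $\calC$ containing $G_1$ and $G_2$ as induced subgraphs.

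The main obstacle is preservation under split blow-ups. Fix $G\in\calC$ and a homomorphism $f\colon\tilde G\to H$. For each $k\geq 1$ build the $2$-edge-coloured graph $I_k$ with vertex set $\bigsqcup_{x\in V(G)} F_x$, where $|F_x|=k$, colouring all pairs in $F_x\times F_y$ blue when $xy\in E(G)$, red when $xy\in N(G)$, and placing no colour on any pair that lies inside a single fibre $F_x$. Sending every vertex of $F_x$ to $f(x)$ defines a homomorphism $I_k\to H$, because the absence of loops in $H$ is only problematic on pairs that carry a colour, and no such pair lies inside a fibre. Thus $I_k\in\CSP(H)=\SP(\calC)$, so some $G_k\in\calC$ sandwich-solves $I_k$; between distinct fibres $G_k$ is forced to be the obvious blow-up of $G$, while inside each fibre $G_k$ may be an arbitrary graph. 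Choosing $k\geq R(m,m)$ and applying Ramsey's theorem inside each fibre of $G_k$, we extract an $m$-element subset inducing either $K_m$ or $I_m$; together these subsets span an induced subgraph of $G_k$ that is a uniform size-$m$ split blow-up of $G$ with respect to some partition $(I_m,C_m)$ of $V(G)$, and it lies in $\calC$ by hereditariness.

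Since $V(G)$ is finite there are only finitely many partitions of $V(G)$, so the pigeonhole principle delivers a single partition $(I,C)$ for which the uniform size-$m$ split blow-up of $G$ with respect to $(I,C)$ belongs to $\calC$ for arbitrarily large $m$. Every split blow-up of $G$ with respect to $(I,C)$ embeds as an induced subgraph of a sufficiently large uniform one, so hereditariness completes the argument. I expect this Ramsey-plus-pigeonhole step to be the main technical point: it is what converts the abstract hypothesis $\SP(\calC)=\CSP(H)$ into the combinatorial split-blow-up conclusion, and it crucially uses the fact that $2$-edge-coloured graphs have no loops, so that within-fibre pairs impose no constraint on the collapsing map.
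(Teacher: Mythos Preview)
Your proposal is correct and follows the same cycle $(1)\Rightarrow(3)\Rightarrow(2)\Rightarrow(1)$ as the paper, with identical arguments for $(1)\Rightarrow(3)$, $(3)\Rightarrow(2)$, and the hereditary and JEP parts of $(2)\Rightarrow(1)$.

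The one substantive difference is how you handle preservation under split blow-ups. The paper blows up a \emph{single} vertex $v$ of $G$ to size $r$ (the Ramsey number for $n$), applies Ramsey once inside that fibre, and concludes that either the $n$-twin or the $n$-co-twin replacement of $v$ lies in $\calC$; it then asserts that this local statement ``suffices''. Your version blows up \emph{all} vertices simultaneously, applies Ramsey in every fibre, and then pigeonholes over the finitely many partitions of $V(G)$ to extract one fixed $(I,C)$ working for arbitrarily large uniform blow-ups. This is cleaner: it delivers the required global partition $(I,C)$ directly, whereas the paper's one-vertex argument, as written, still owes an explanation of why ``for each $v$ and each $n$, one of the two single-vertex blow-ups stays in $\calC$'' upgrades to a single partition $(I,C)$ for which all iterated split blow-ups of $G$ remain in $\calC$. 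Both arguments rest on the same Ramsey idea; yours just packages the bookkeeping more carefully. (One small remark: the collapsing map $I_k\to H$ is a homomorphism simply because no pair inside a fibre carries a colour; the looplessness of $H$ is not actually what is being used there.)
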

\begin{proof}
    If the first statement holds, and $H$ is a universal graph 
    in the class $\calC$, then we have $\CSP(H^\ast) = 
    \injCSP(H^{\ast}) = \SP(\calC)$ by Lemma~\ref{lem:H->H*}. So the first item
    implies the third one.
    The third item clearly implies the second one.
    We now show that the second item implies the first one. 
    
    We first show that $\calC$ is a hereditary class. Suppose that
    $G\in \calC$, and let $H$ be an induced subgraph of $G$.
    Clearly, $H^\ast\to G^\ast$, so if $G\in \calC$, then $G^\ast$
    is a yes-instance to the SP for $\calC$, which by the second item
    implies that $G^\ast\to (V,R,B)$.
    Thus, $H^\ast\to (V,R,B)$, which shows that $H\in \calC$. 
    
    Now suppose that $G,H \in {\mathcal C}$ are 
    graphs with disjoint vertex sets. We then have $G^\ast,H^\ast \in \CSP(V,R,B)$,
    and $G^\ast + H^\ast\in \CSP(V,R,B)$, because CSPs are closed under disjoint unions. 
    Notice that $G^\ast+ H^\ast$ is an instance to 
    $\SP(V(G)\cup V(H), E(G)\cup E(H), N(G)\cup N(H))$, and thus
    there is an edge set $E'$ such that $(V(G)\cup V(H), E')\in \calC$
    where $E(G)\cup E(H) \subseteq E'$, and $E\cap (N(G)\cap N(H)) = \varnothing$. 
    Therefore, $G$ and $H$ are induced subgraphs of $(V,E')\in \calC$,
    proving that $\calC$ has the JEP. 
    
    We finally
    show that $\calC$ is preserved under split blow ups. To do so, it
    suffices to show that if $G\in \calC$, $v\in V(G)$, and $n$ is a
    positive integer, then the graph $G'$ obtained from $G$ be substituting
    $v$ by $n$ twins or by $n$ co-twins also belongs to $\calC$.
    Let $r$ be larger than the Ramsey number of $n$, i.e., any $2$-edge-coloured
    graph on $r$ vertices contains either a blue or a red clique on $n$ vertices. 
    Again, we know that $G^\ast\to (V,R,B)$. 
    Let $G^\ast_r$ be
    the $2$-edge coloured graph obtained from $G^\ast$ be blowing up
    $v$ to an independent set $I$ of size $r$ such that $u\in I$
    and $w\in V(G^\ast)\setminus I$ there is a red edge $uw$ if and only
    if there is a red edge $vw$, and there is a blue edge $uw$ if and
     only if there is a blue edge $vw$. Clearly, $G^\ast_r\to G^\ast\to (V,R,B)$, 
     and thus $G^\ast_r$ is a yes-instance to the SP for $\calC$. Let
     $E'$ be an edge set such that $(V(G^\ast_r), E')\in \calC$, 
     $B(G^\ast_r)\subseteq E'$, and $E'\cap R(G^\ast_r) = \varnothing$.
     By the definition of $r$, the subgraph
     of $(V(G^\ast_r),E')$ induced by $I$ contains a subset $I'$
     on $n$ vertices that induces either a clique or a complete subgraph.
     Since the only undetermined edges in $G^\ast_r$ are between
     pairs of vertices in $I$, we see that for $w\in V(G)\setminus I'$
     there is an edge $uw\in E'$ if and only if $vw\in E'$. We thus 
     conclude that the subgraph $G'$ of $(G^\ast_r,E')$ induced by
     $V(G)\cup I'$ is the desired blow-up or co-blow-up of $G$
     (here we also used the already proved fact that $\calC$ is a hereditary
     class). This concludes the proof.
\end{proof}

Building on this proposition we can also characterize those constraint satisfaction problems
that are graph sandwich problems.

\begin{corollary}\label{cor:CSP-SP}
    The $\CSP$ of a $2$-edge-coloured graph $(V,B,R)$ equals the SP of some graph class $\calC$
    if and only if there is a graph $H$ such that
        $$\CSP(V,B,R) = \CSP(H^\ast) = \injCSP(H^\ast).$$
\end{corollary}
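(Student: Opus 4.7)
The plan is to derive both directions directly from Proposition~\ref{prop:SP->CSP} together with Remark~\ref{rmk:injCSP}, without any new combinatorial work. The corollary essentially reorganizes the equivalence ``(2) $\Leftrightarrow$ (3)'' from the proposition by focusing on the 2-edge-coloured graph rather than on the class.

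For the forward direction, I would assume that $\CSP(V,B,R) = \SP(\calC)$ for some graph class $\calC$. This assumption is exactly the hypothesis appearing in item (2) of Proposition~\ref{prop:SP->CSP}. Applying the implication (2) $\Rightarrow$ (3) from that proposition yields a graph $H$ with $\SP(\calC) = \CSP(H^\ast) = \injCSP(H^\ast)$. Substituting back the equality $\CSP(V,B,R) = \SP(\calC)$ gives $\CSP(V,B,R) = \CSP(H^\ast) = \injCSP(H^\ast)$, as required.

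For the backward direction, I would assume $\CSP(V,B,R) = \CSP(H^\ast) = \injCSP(H^\ast)$ and set $\calC := \Age(H)$, which is a legitimate graph class (hereditary and closed under isomorphism by definition of the age). By Remark~\ref{rmk:injCSP}, we have $\injCSP(H^\ast) = \SP(\Age(H)) = \SP(\calC)$, and hence $\CSP(V,B,R) = \SP(\calC)$. This exhibits the desired graph class.

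There is no real obstacle here: both directions are bookkeeping on top of Proposition~\ref{prop:SP->CSP} and Remark~\ref{rmk:injCSP}. The only thing worth flagging is that in the backward direction one must avoid the temptation to also recover the properties ``hereditary, JEP, preserved under split blow-ups'' for $\calC$; these follow automatically from $\calC = \Age(H)$ (and from the implication (2) $\Rightarrow$ (1) of Proposition~\ref{prop:SP->CSP}), but they are not needed for the statement being proved, which only demands existence of some graph class $\calC$.
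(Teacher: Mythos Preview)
Your proposal is correct and matches the paper's own proof essentially line for line: both directions are obtained by combining Remark~\ref{rmk:injCSP} with the equivalence (2)$\Leftrightarrow$(3) of Proposition~\ref{prop:SP->CSP}, and for the backward direction the paper likewise takes $\calC=\Age(H)$.
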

\begin{proof}
    If $\CSP(V,B,R) = \injCSP(H^*)$, then $\CSP(V,B,R) = \SP(\Age(H))$ 
    by Remark~\ref{rmk:injCSP}. Conversely, if $\CSP(V,B,R)$ equals $\SP({\mathcal C})$ for
    some graph class $\mathcal C$, then, by the equivalence between the second and 
    third statements in Proposition~\ref{prop:SP->CSP}, there is a graph $H$ such that 
    $\SP(\calC) = \CSP(H^\ast) = \injCSP(H^\ast)$. Since $\CSP(V,B,R) = \SP(\calC)$, 
    this implies the statement. 
\end{proof}

A \emph{minimal obstruction} to a hereditary class $\calC$ is a graph
$G\not\in \calC$ such that each proper induced subgraph of $G$ belongs
to $\calC$. It is well-known and easy to observe that every hereditary
class of graphs is uniquely determined by its set of minimal obstructions.
A graph $G$ is called \emph{point-determining} if $G$ contains no pair
of twins (e.g., odd-holes and odd-anti-holes are point-determining graphs).
It is not hard to see that if $\calF$ is a set of point-determining
graphs, then the class of $\calF$-free graphs is preserved by blow-ups
(e.g., the class of perfect graphs is preserved by blow-ups). 

\begin{corollary}\label{cor:examples}
    Let $\calC$ be a hereditary class of graphs with the joint embedding property. 
    If the minimal obstructions of $\calC$ are point-determining graphs, then 
    the sandwich problem for $\calC$ is a CSP. In particular, the SP for the class of 
    perfect graphs is a CSP. 
\end{corollary}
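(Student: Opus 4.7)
The plan is to derive Corollary~\ref{cor:examples} from Proposition~\ref{prop:SP->CSP}. Since $\calC$ is already assumed hereditary and with the joint embedding property, the only missing ingredient is preservation under split blow-ups. My first move is to observe that preservation under ordinary blow-ups is enough: for each $G \in \calC$ the trivial partition $I := V(G)$, $C := \varnothing$ is admissible, and with respect to it every split blow-up of $G$ is just a blow-up of $G$.

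The key step is a twin-cancellation argument. Fix $G \in \calC$, pick $v \in V(G)$, and let $G'$ be obtained from $G$ by adding a twin $v'$ of $v$. Suppose for contradiction that $G' \notin \calC$; then some minimal obstruction $F$ of $\calC$ embeds as an induced subgraph into $G'$. Because $v$ and $v'$ are twins in $G'$, they remain twins in every induced subgraph that contains both, but $F$ is point-determining and so contains no pair of twins. Hence $F$ omits at least one of $\{v,v'\}$, and consequently embeds into the induced subgraph of $G'$ obtained by deleting that vertex, which is isomorphic to $G$. This contradicts $G \in \calC$. Iterating across all vertices shows closure under arbitrary blow-ups, which, combined with the first paragraph, yields preservation under split blow-ups. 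Proposition~\ref{prop:SP->CSP} then gives $\SP(\calC) = \CSP(H^\ast)$ for any universal graph $H$ in $\calC$.

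For the perfect graph case I would invoke the Strong Perfect Graph Theorem to identify the minimal obstructions as the odd holes $C_{2k+1}$ and odd antiholes $\overline{C_{2k+1}}$ with $k \geq 2$. In any cycle of length at least five no two vertices share a neighborhood, so odd holes are point-determining; the antihole case follows because being point-determining is preserved under complementation (twins in $G$ are exactly co-twins in $\overline{G}$, and a point-determining graph with no co-twin pair is still point-determining — indeed odd antiholes have neither twins nor co-twins by the same cycle argument applied to $\overline{C_{2k+1}}$). JEP for perfect graphs is standard: a disjoint union of perfect graphs is perfect.

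The main obstacle here is more notational than mathematical — the whole argument is an unpacking of definitions combined with a one-line twin-cancellation. The only genuinely delicate point worth flagging is that the ``point-determining minimal obstructions'' hypothesis yields closure under blow-ups but \emph{not} under co-blow-ups, and that this is nevertheless enough thanks to the flexibility in the definition of split blow-ups to pick the trivial partition $C = \varnothing$; without that flexibility, a separate (and in general unavailable) co-twin argument would be needed.
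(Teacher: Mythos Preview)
Your argument is correct and matches the paper's approach exactly: the paper states (without a written-out proof) that point-determining minimal obstructions imply closure under blow-ups, then invokes Proposition~\ref{prop:SP->CSP} via the observation that closure under blow-ups implies closure under split blow-ups; you have simply supplied the details of both steps. One small remark: the sentence ``being point-determining is preserved under complementation'' is not true in general (twins in $G$ become co-twins in $\overline G$, not twins), but you immediately recover with the correct observation that odd cycles of length $\geq 5$ have neither twins nor co-twins, hence the same holds for their complements.
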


\section{Known Examples}
\label{sect:known} 

In this section we provide a (probably incomplete) list of graph sandwich
problems whose computational complexity has been classified in the literature,
and which are CSPs. We also revisit some of these complexity classifications
through the lens of constraint satisfaction theory with the intent of building
intuition and providing examples of the CSP approach to sandwich problems. However, 
this section can be safely skipped by many readers.

\begin{example}\label{ex:literature}
    The following is a list of graph classes $\calC$ such that
    the complexity of $\SP(\calC)$ has been classified, 
    and moreover the sandwich problem is a CSP (the latter
    can be verified via Proposition~\ref{prop:SP->CSP}).
    \begin{multicols}{3}
    \begin{itemize}[itemsep = 0.2pt]{\small
        \item Chordal~\cite{golumbicJA19}.
        \item Comparability~\cite{golumbicJA19}.
        \item Circle~\cite{golumbicJA19}.
        \item Path graphs~\cite{golumbicJA19}.
        \item Directed path graphs~\cite{golumbicJA19}.
        \item Split~\cite{golumbicJA19}.
        \item Permutation~\cite{golumbicJA19}.
        \item Trivially perfect~\cite{alvaradoAOR280}.
        \item Cographs~\cite{golumbicJA19}. 
        \item Circular-arc~\cite{golumbicJA19}. 
        \item Interval graphs~\cite{golumbicJA19}.
        \item Proper circular-arc~\cite{golumbicJA19}.
        \item Unit circular-arc~\cite{golumbicJA19}.
        \item Proper interval~\cite{golumbicJA19}.
        \item Threshold~\cite{golumbicJA19}.
        \item $(k,l)$-graphs~\cite{dantasDAM143}.
        \item Clique-helly~\cite{douradoJBCS14}.
        \item Hereditary clique-helly~\cite{douradoJBCS14}.
        \item Strongly chordal~\cite{figueiredoTCS381}
        \item Bipartite chain~\cite{dantasAOR188}.
        \item Odd-hole-free~\cite{cameronDM348}.
        \item Even-hole-free~\cite{cameronDM348}. 
        \item $3PC(\cdot,\cdot)$-free~\cite{dantasDAM159}.
        \item $C_n$-free~\cite{dantasDAM159}.
        \item Complete multipartite~\cite{dantasDAM159}.
        \item $P_n$-free~\cite{figueiredoDAM251}.}
    \end{itemize}
     \end{multicols}
\end{example}

In the rest of this section we revisit 
a few sandwich problems listed in Example~\ref{ex:literature}.
Our objective is to argue that  the seemingly ad-hoc complexity
classifications of certain sandwich problems  actually align with
general uniform approaches to complexity classifications in the world
of CSPs. 

\begin{itemize}
    \item 
     \textbf{Complete multipartite graphs} (Section~\ref{sect:multipartite}). 
     We show that this sandwich problem
    can be solved by a \emph{Datalog program}. Datalog is another tool to solve
    infinite-domain CSPs in polynomial time (see, e.g.,~\cite[Chapter 8]{Book}).
    We also show that the algorithm from~\cite{dantasDAM159} used to solve the
    SP for $(K_n-e)$-free graphs can be formulated as a Datalog Program as well.
    \item 
    \textbf{Split graphs} (Section~\ref{sect:split}).   
    This SP falls into a family of CSPs expressible
    in \emph{Monotone Monadic Strict NP (MMSNP)} whose complexity has been classified~\cite{MMSNP-Journal}.
    Moreover, the algorithm proposed in~\cite{golumbicJA19} to solve the SP for split graphs
    coincides with a prominent technique used to solved CSPs in the previous family
    sometimes called \emph{reduction to the finite}~\cite[Section 10.5.3]{Book}.
    \item 
    \textbf{Threshold graphs} (Section~\ref{sect:threshold}). 
    We use this example
    to illustrate the \emph{algebraic approach} to CSPs. This approach provides
    some abstract algebraic conditions for a CSP to be solved by certain
    polynomial-time algorithms, e.g.,   by \emph{consistency checking}, or 
    algebraic conditions guaranteeing that the corresponding CSP is $\NP$-complete.
    We will see that the polynomial-time tractability of the SP for threshold graphs can be explained by such
    algebraic conditions. 
    \item 
    \textbf{Comparability graphs} (Section~\ref{sect:comparability}).      
    We consider this example
    of a hard sandwich problem, to illustrate an application of a classification
    of certain reducts of \emph{finitely bounded homogeneous} structures; classifying
    the complexity of such CSPs is a very active research line in the theory of infinite-domain
    CSPs, in particular, it is conjectured that these CSPs exhibit a P versus NP-complete
    dichotomy~\cite[Conjecture 3.7.1]{Book}.
    In the case of the SP for the class of comparability graphs, the NP-hardness follows from
    a classification result of Kompatscher and van Pham~\cite{posetCSP18}.
    \item 
    \textbf{Generalized split graphs} (Section~\ref{sect:gensplit}). 
    We consider the P vs.\ NP-complete classification of the sandwich problem
    for $(p,q)$-split graphs from~\cite{dantasDAM143}. We reprove the hard cases
    to illustrate a hardness proof of an infinite-domain CSP via a primitive
    positive construction of a finite structure with a hard CSP.
    \item 
    \textbf{Permutation graphs} (Section~\ref{sect:permutation}). 
    Finally, we consider the sandwich problem for the class of permutation graphs to propose
    a translation of a gadget proof from~\cite{golumbicJA19} to a primitive positive
    construction,  of an infinite-domain structure with a hard CSP, namely the temporal
    constraint satisfaction problem $(\mathbb Q, \Betw)$
    (defined in Section~\ref{sec:introduction})).
\end{itemize}

\subsection{Complete Multipartite Graphs and Datalog}
\label{sect:multipartite}

A graph $G$ is a \emph{complete multipartite graph} if there is a partition of $V$
into independent sets $V_1,\dots, V_k$, such that every pair of vertices in different
parts $V_i$, $V_j$ are adjacent. 
Consider the graph $K_{\omega}[I_{\omega}]$ obtained from the infinite
clique (with vertex set $\mathbb N$) by blowing up each vertex to a countable
set of twins. It is straightforward to observe that a graph $G$
is a complete multipartite graph if and only if it embeds into $K_{\omega}[I_{\omega}]$.
A simple way to solve the CSP of $K_{\omega}[I_{\omega}]^\ast$ is with the following procedure
on input $(V,E,N)$:
\begin{enumerate}
    \item initiate $E' = \varnothing$,
    \item for each $xy\in E$, add $xy$ to $E'$,
    \item for each $xy,yz\in E'$ add $yz$ to $E'$,
    \item if there is $xz\in E'\cap N$, reject,
    \item repeat 3, 4 until $E'$ does not change.
\end{enumerate}
It is not hard to see that if the program rejects, then $(V,E,N)\not\in\CSP(K_{\omega}[I_{\omega}]^\ast)$, 
and if it does not reject, then  $(V,E')$ is
a complete multipartite graph, such that $E\subseteq E'$ and $E'\cap N = \varnothing$, 
i.e., $(V,E,N)\in \CSP(K_{\omega}[I_{\omega}]^\ast)$.

The previous procedure is a particular example of a \emph{Datalog Program}\footnote{The
definition of Datalog is not needed for this work; we refer the interested reader
to~\cite[Chapter 8]{Book}.} which is often used to solve finite- and infinite-domain CSPs
(see, e.g.,~\cite[Chapter 8]{Book}). By strengthening the power of these programs
one can solve other problems that are not necessarily CSPs.
One possible way to enrich the power of Datalog is by allowing
inequalities in the definition of $E'$. 
For instance, the SP for $(K_n-e)$-free graphs is not a CSP whenever $n \ge 4$:
this can be easily seen via Proposition~\ref{prop:SP->CSP} by noticing 
that this class is not closed under split blow-ups.
The polynomial-time algorithm proposed by~\cite{dantasDAM159} to solve the SP for $(K_n-e)$-free graphs 
can be naturally encoded as a Datalog Program enriched with inequalities.

\subsection{Split Graphs and Reduction to the Finite}
\label{sect:split}
We first consider the following toy problem.
The task is to decide whether the
vertices of an input graph $G$ can be coloured with two colours $0,1$
in such a way that you avoid the following colourings of $K_4$, where filled
vertices represent colour $1$ and non-filled ones represent colour $0$.
\begin{center}
    \begin{tikzpicture}[scale = 0.6]
    \begin{scope}[xshift=-5cm]
        \node [vertex] (1) at (-1,0) {};
        \node [vertex] (2) at (-1,2) {};
        \node [vertex] (3) at (1,2) {};
        \node [vertex]  (4) at (1,0) {};
        
        \foreach \from/\to in {1/2, 2/3, 3/4, 4/1, 1/3, 2/4} 
        \draw [edge] (\from) to (\to);

    \end{scope}
    \begin{scope}
        \node [vertex, fill = black] (1) at (-1,0) {};
        \node [vertex, fill = black] (2) at (-1,2) {};
        \node [vertex] (3) at (1,2) {};
        \node [vertex]  (4) at (1,0) {};
        
        \foreach \from/\to in {1/2, 2/3, 3/4, 4/1, 1/3, 2/4} 
        \draw [edge] (\from) to (\to);

    \end{scope}
    \begin{scope}[xshift=5cm]
        \node [vertex, fill = black] (1) at (-1,0) {};
        \node [vertex, fill = black] (2) at (-1,2) {};
        \node [vertex, fill = black] (3) at (1,2) {};
        \node [vertex, fill = black]  (4) at (1,0) {};
        
        \foreach \from/\to in {1/2, 2/3, 3/4, 4/1, 1/3, 2/4} 
        \draw [edge] (\from) to (\to);

    \end{scope}
    \end{tikzpicture}
\end{center}
Consider the $4$-uniform hypergraph $L$ with vertex set $\{0,1\}$
and symmetric edges $0001$ and $0111$. Now, on a given input 
graph $G$ consider the $4$-uniform hypergraph $H_G$ with vertex
set $V(G)$ and an edge $xyzw$ if and only if $x,y,z,w$
induce a $4$-clique in $G$. It is not hard to notice that
$G$ is a yes-instance to the colouring problem described above
if and only if there is a homomorphism $H_G\to L$, i.e., if
$H_G\in \CSP(L)$. Some readers may have noticed that $\CSP(L)$
corresponds to systems of linear equations over $\mathbb Z_2$
where each equation is of the form $x_1 + x_2 + x_3 + x_4 =1$:
each edge $xyzw$ of an input $H$ of $\CSP(L)$ corresponds to an
equation $x + y + z + w = 1$. Thus, we can solve $\CSP(L)$
(and the colouring problem above) by solving a system of linear equations
over $\mathbb Z_2$. Actually, this corresponds
to one of the four tractable boolean-domain CSPs
according to Schaefer's Theorem~\cite{Schaefer}. 

The technique used above is sometimes referred to as \emph{reduction to
the finite} (see, e.g.,~\cite[Section 10.5.3]{Book}). This technique is in 
particular useful for solving  infinite-domain CSPs expressible in
\emph{Monotone Monadic Strict NP} (MMSNP)~\cite{MMSNP-Journal}.
In short, the CSP of a $\tau$-structure $A$ is in MMSNP 
if there is a finite set of $k$-vertex-coloured $\tau$-structures
$\mathcal F$ such that a  finite $\tau$-structure $B$ maps homomorphically
to $A$ if and only if there is a  $k$-vertex colouring $B'$ of $B$ for which there
is no homomorphism $F\to B'$ for any $F\in \mathcal F$. So the graph colouring problem
described above is a CSP expressible in MMSNP. Actually, such reductions
to the finite were used to prove that infinite-domain CSPs expressible in MMSNP
exhibit  a P versus NP-complete dichotomy~\cite{FederVardi,MMSNP-Journal}.

Going back to the sandwich problems of split graphs, first recall that this
SP is the CSP of the complete $2$-edge-coloured graph $S^\ast$
consisting of a pair of countable sets of  vertices $X,Y$
where $X$ induces a red complete graph, $Y$ induces a complete blue
graph, and the edges between vertices in $X$ and vertices in $Y$
induce the random bipartite graph.
Let  $\mathcal F$ be the set of $2$-vertex-coloured $2$-edge-coloured graph
consisting of a blue edge with endpoints coloured with $0$, a red edge
with endpoints coloured with $1$, and both $2$-vertex-colourings of a red
loop, and both of a blue loop. Clearly, 
$\CSP(S^\ast)$ corresponds to the class of $2$-edge-coloured graphs
that admit a $2$-vertex-colouring avoiding all coloured graphs
from $\mathcal F$: vertices coloured with $0$ correspond to vertices
mapped to $X$, and vertices coloured with $1$ correspond to vertices
mapped to $Y$. The natural reduction to the finite for this CSP
reduces the SP of split graphs to the CSP of the following finite $2$-edge-coloured
graph with solid blue edges, and dashed red edges.
\begin{center}
    \begin{tikzpicture}[scale = 0.6]
    \begin{scope}[xshift=-5cm]
        \node [vertex] (1) at (-1,0) {};
        \node [vertex]  (2) at (1,0) {};

        \draw [edge, red, dashed] (1) to [bend left] (2);
        \draw [edge, red, dashed] (1) to [out=55,in=125,looseness=12] (1);
        \draw [edge, blue] (2) to [bend left] (1);
        \draw [edge, blue] (2) to [out=55,in=125,looseness=12] (2);
        \node (L1) at (0,-1) {$K$};
        
    \end{scope}
    \end{tikzpicture}
\end{center}
The CSP of this boolean structure $K$ encodes 
the restriction of 2-SAT where the inputs satisfy
that all clauses contain two positive literals or two negative literals:
on an instance $\phi:=(x\lor y) \land (\lnot z \lor \lnot w)\land \dots$ with variables $V$
construct a $2$-edge-coloured graph $G$ with vertex set $V$, and
there is a blue edge $uv$ if $u$ and $v$  appear positively in a clause $(v\lor u)$,
and a red edge $uv$ if $u$ and $v$ that appear
negatively in a clause $(\lnot v \lor \lnot u)$; there is a homomorphism $(V,B,R)\to K$
if and only if $\phi$ has a satisfiable assignment.

Finally, notice that this coincides with the reduction to 2-SAT considered
by Golumbic, Kaplan and Shamir~\cite{golumbicJA19}: on an input $(V,E,N)$ to the
sandwich problem of split graphs, they construct a 2-SAT instance with variables
$V$. For every blue edge $uv$ they include a clause $(x_u \lor x_v)$ and for
every red edge $uv$ a clause $(\lnot x_u\lor \lnot x_v)$.

\subsection{Threshold Graphs and Polymorphisms}
\label{sect:threshold}
A \emph{polymorphism} of a structure $A$ is a homomorphism $f\colon A^k\to A$,
where $k$ is the \emph{arity} of the polymorphism. The set of all polymorphisms 
of a structure $A$, denoted by $\Pol(A)$, forms a \emph{clone} (the precise definition is
not needed for this work). The algebraically  approach to CSPs arises from the observation
that if the polymorphism clone of a finite structure $A$ maps (in some algebraic
meaningful sense) to the polymorphism clone of a finite structure $B$, then $A$
pp-constructs $B$, and thus $\CSP(B)$ reduces in polynomial-time to $\CSP(A)$ (see, e.g.,~\cite{wonderland}).

A \emph{$4$-ary Siggers polymorphism} is a polymorphism $f\colon A^4\to A$ such that
$f(a,r,e,a) = f(r,a,r,e)$ for all $a,e,r \in A$. 
We now state the finite-domain dichotomy theorem~\cite{Zhuk20} (announced in~\cite{BulatovFVConjecture} and, independently, in~\cite{ZhukFVConjecture}).

\begin{theorem}\label{thm:dicho}
If a finite structure $A$ has a $4$-ary Siggers polymorphism, then
$\CSP(A)$ is polynomial time solvable; otherwise, the polymorphism clone
of $A$ maps to the polymorphism clone of $K_3$, and thus $\CSP(A)$ is $\NP$-complete.
\end{theorem}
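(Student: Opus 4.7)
The plan is to treat the two implications in Theorem~\ref{thm:dicho} separately, as they rely on entirely different techniques. The hardness direction is short and purely algebraic, while the tractability direction is the deep content of the Bulatov--Zhuk dichotomy and can only be cited in the space available.

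For the hardness direction, assume that $A$ has no $4$-ary Siggers polymorphism. First I would reduce to the case where $\Pol(A)$ is idempotent by passing to the core of $A$ and adjoining every element as a constant; this is a standard manoeuvre that preserves the complexity of $\CSP(A)$ up to polynomial time. By Siggers' characterisation of Taylor algebras, a finite idempotent algebra has a $4$-ary Siggers term if and only if it generates a Taylor variety. Hence $\Pol(A)$ is not Taylor, and Taylor's theorem yields a clone homomorphism from $\Pol(A)$ to the clone of projections on $\{0,1\}$, which coincides with $\Pol(K_3)$ up to the usual identifications. Invoking the \emph{wonderland} theorem of Barto, Opr\v{s}al, and Pinsker, the existence of such a (minor-preserving) map is equivalent to $A$ pp-constructing $K_3$, so $\NP$-hardness of $\CSP(A)$ follows from Lemma~\ref{lem:pp-construction-reduction} together with the fact that $\CSP(K_3)$ is $3$-colouring.

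For the tractability direction, assume $\Pol(A)$ contains a $4$-ary Siggers polymorphism. Following Zhuk's strategy, the main steps are: (i) reduce to a rigid core in which every singleton is pp-definable, so that $\Pol(A)$ is idempotent; (ii) develop the theory of absorbing, central, and PC (projectively connected) subuniverses, and show that every proper subuniverse of a subdirect power contains one of these types; (iii) design a recursive algorithm that on an instance either detects inconsistency, or restricts the solution set at some variable to a proper absorbing or central subuniverse and recurses on a strictly smaller problem, or decomposes the instance along a PC subuniverse; (iv) handle the \emph{linear} base case, in which the relevant factor algebra is a module over a finite ring, by reducing to systems of linear equations and applying Gaussian elimination. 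The main obstacle is the interaction of steps (ii), (iii), and (iv): ensuring termination and correctness when absorption, centrality, and linearity coexist at different levels of the congruence lattice is the technically most delicate part of Zhuk's proof. Since a faithful reproduction would span many pages and is not the focus of the present paper, the theorem is invoked as a black box via the cited sources.
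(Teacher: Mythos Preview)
The paper does not prove Theorem~\ref{thm:dicho} at all: it is stated with attribution to Zhuk~\cite{Zhuk20} (and the independent announcement by Bulatov), and then used as a black box. So there is no ``paper's own proof'' to compare against; your decision to invoke the tractability half by citation is precisely what the paper does, and your outline of Zhuk's strategy, while not wrong at the level of a sketch, is simply absent from the paper.

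One genuine inaccuracy in your hardness sketch: $\Pol(K_3)$ is not the clone of projections on $\{0,1\}$. It is a clone on a three-element set consisting of the essentially unary operations given by automorphisms of $K_3$. What is true, and what you need, is that the projection clone and $\Pol(K_3)$ are equivalent as targets of minor-preserving maps (each maps to the other), so a minion homomorphism from $\Pol(A)$ to projections exists if and only if one to $\Pol(K_3)$ exists. Your phrase ``up to the usual identifications'' hides this step; it should be stated rather than glossed over. With that correction, the hardness argument via Siggers/Taylor plus the Barto--Opr\v{s}al--Pinsker correspondence and Lemma~\ref{lem:pp-construction-reduction} is the standard route and is fine.
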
 

Before the finite-domain dichotomy was proved, there were other families of 
polymorphisms which implied that the corresponding CSP could be solved in polynomial
time. A \emph{totally symmetric} polymorphism is a polymorphism
$f\colon A^k\to A$ such that $f(x_1,\dots, x_k) = f(y_1,\dots, y_k)$ whenever
$\{x_1,\dots, x_k\} = \{y_1,\dots, y_k\}$. If a finite structure
$A$ has totally symmetric polymorphisms of all arities, then $\CSP(A)$ can 
be solved in polynomial time --- specifically, by a simple algorithm
called the \emph{arc-consistency procedure}~\cite{DalmauPearson}.
This tractability condition is known to fail in the infinite-domain setting in general (see~\cite[Section 13.1]{Book}).
However, in~\cite{BodMacpheTha}, the authors show that if $A$ has a first-order \emph{interpretation}
over $(\mathbb Q, <)$, and $A$ has totally symmetric polymorphisms of all arities,
then $\CSP(A)$ can be solved in polynomial time.

For this, we consider the following restricted case of first-order interpretations. 
A graph $G$ has a \emph{first-order  interpretation} over $(\mathbb Q, <)$
of \emph{dimension} $d\in \mathbb Z^+$, if there are first-order formulas $\delta_V(x_1,\dots, x_d)$
and $\delta_E(x_1,\dots, x_d, y_1,\dots, y_d)$ such that
\begin{itemize}
    \item the vertex set of $G$ is the set of tuples $(q_1,\dots, q_k)\in \mathbb Q^d$
    such that $\mathbb Q\models \delta_V(q_1,\dots, q_k)$, and
    \item there is an edge $(v_1,\dots, v_k)(u_1,\dots, u_k)$ if and only
    if $\mathbb Q\models \delta_E(v_1,\dots, v_d,u_1,\dots, u_d)$. 
\end{itemize}
For instance, consider the graph $T$ with vertex set $\{(x,y)\in \mathbb Q^2\colon x \neq y\}$, 
and where $(x,y)(z,w)$ forms an edge if and only if $(x,y) <_{lex} (z,y)$ and $x < y$
or if $(z,w) <_{lex} (x,y)$ and $z < w$, where $<_{lex}$ denotes the lexicographical
order, i.e., $(x,y) <_{lex} (z,w)$ if and only if $x < z$ or $x = z$ and $y < w$. 
It follows from the definition of $T$ that it has a first-order interpretation over $(\mathbb Q, <)$. 
We now see that $T$ is a universal  \emph{threshold graph}.

Threshold graphs have several equivalent definitions (see, e.g.,~\cite{mahaved1995}); 
we consider the most useful for this example. A graph $G$ is a threshold graph
if it can be obtained from $K_1$ by iteratively adding an independent vertex 
or a universal vertex. Note that every finite induced subgraph of 
$T$ is a threshold graph.  Indeed, let $(u_1,v_1),\dots, (u_n,v_n)$
be the vertices
of a finite induced subgraph $G_n$ of  $T$, 
and assume that $(u_i,v_i) <_{lex} (u_j,v_j)$ for every $1\leq i < j \le n$. 
Notice that if $u_i < v_i$, then $(u_i,v_i)$ is a universal vertex
in the graph $G_i$ with vertices $(u_i,v_i),\dots, (u_n,v_n)$, and if $v_i < u_i$,
then it is an isolated vertex in $G_i$. Therefore, $G_n$ can be obtained
from $K_1$ by iteratively adding isolated or universal vertices. On the other
hand, it is straightforward to use an inductive argument to embed
each finite threshold graph into $T$.

We now observe that threshold graphs are preserved by split blow-ups.
Let $G$ be a threshold graph with vertex set $v_1,\dots, v_n$ where for each $i\in[n]$
the vertex $v_i$ is either a universal or an isolated vertex in the subgraph
$G_i$ induced by $\{v_i,\dots, v_n\}$. Clearly, 
if $I :=\{v_i\in V\colon v_i$ is an isolated vertex in $G_i\}\cup\{v_n\}$,
then any split blow-up of $G$ with respect to $(I, V\setminus I)$ is a threshold graph. 
It thus follows from Lemma~\ref{lem:H->H*} that the SP for the class of threshold graphs 
is the CSP of $T^\ast$. We leave to the reader verifying that
the $k$-ary function $f\colon V(T)^k\to V(T)$ mapping a tuple $(x_1,\dots, x_k)$
to the lexicographical minimum of $\{x_1,\dots, x_k\}$ is indeed a polymorphism
$f\colon (T^\ast)^k\to T^\ast$ (the fact that it is totally symmetric follows
from the definition of $f$). Hence, it follows
from~\cite[Theorem 2.5 + Lemma 3.11]{BodMacpheTha} that the SP for threshold graphs
can be solved in polynomial time. 

\subsection{Comparability Graphs and the Bodirsky--Pinsker Conjecture}
\label{sect:comparability}
A relational structure $A$ is \emph{homogeneous} if for all finite
induced substructures $B_1,B_2$ of $A$ and every isomorphism
$f\colon B_1\to B_2$, there is an automorphism $g\colon A\to A$
 that agrees with $f$ on the vertices of $B_1$. For instance,
every complete graph $K_n$ is a homogeneous graph, but the path
on four vertices $v_1v_2v_3v_4$ is not: the mapping $f$ defined by
$v_1\mapsto v_1$ and $v_3\mapsto v_4$ is an isomorphism between the
substructures with vertices $\{v_1,v_3\}$ and $\{v_1,v_4\}$, but there
is no automorphism $g\colon P_4\to P_4$ that agrees with $f$ on $\{v_1,v_3\}$. 
Actually, finite and infinite homogeneous graphs have been fully 
classified~\cite{Gardiner,LachlanWoodrow};  in particular, the Random Graph is a homogeneous
graph. A prototypical example of a homogeneous digraph is $(\mathbb Q, <)$, i.e., 
the digraph where $x\to y$ if and only if $x < y$.

A structure $A$ is \emph{finitely bounded} if there is a finite
set of structures $\mathcal F$ such that the age of $A$ is the
class of $\mathcal F$-free structures. In graph theoretic terms, 
a graph $G$ is finitely bounded if its age has finitely many minimal
obstructions. For instance, the digraph $(\mathbb Q, <)$ is finitely
bounded: the minimal obstructions of the age of $(\mathbb Q, <)$
are a loop, a pair of independent vertices $2K_1$, the symmetric edge
$K_2$, and the directed triangle $\vec{C}_3$. So $\mathbb Q, <)$ is
a finitely bounded homogeneous structure. 
The Bodirsky--Pinsker Conjecture asserts that if
$B$ is first-order definable in a finitely bounded homogeneous structure
$A$, and $B$ does not pp-construct $K_3$, then $\CSP(B)$ is polynomial-time
solvable (see~\cite{BPP-projective-homomorphisms}; the presented formulation
is equivalent to the given formulation by a result from~\cite{BKOPP}; also
see~\cite[Conjectures 3.1 and 4.1]{Book}). As mentioned above, this conjecture has been
settled for the case when $B$ is first-order definable in $(\mathbb Q, <)$,
i.e., for temporal CSPs. 

The \emph{random poset} is the unique countable poset $P$ (up to isomorphism)
such that every countable poset embeds into $P$~\cite{Poset-Reducts} ---
we think of a poset as a transitive acyclic digraph. This digraph
is homogeneous and finitely bounded, and the Bodirsky--Pinsker Conjecture has
also been settled for structures that admit a first-order definition in $P$~\cite{posetCSP16,posetCSP18}.
We now see that the SP problem for comparability graphs is the CSP
of such a structure. 

A graph $G$ is a \emph{comparability graph} if there is a partial order $<$ on $V$
such that $uv\in E$ if and only if $u < v$ or $v < u$. So if $P_U$ is the underlying
graph of $P$, then $P_U$ is a universal comparability graph. Notice that
$P_U$  and $P_U^\ast$ are first-order definable in $P$.
Thus, the complexity classification of the SP for comparability 
graphs follows from Kompatscher and Pham~\cite{posetCSP16,posetCSP18} classification
of CSPs of reducts of the random poset. 
In particular, they showed that $\CSP(P_U^\ast)$ is $\NP$-complete~\cite[Example 3]{posetCSP16}.

\subsection{Generalized Split Graphs and PP-constructing Finite Structures}

\label{sect:gensplit}
For positive integers $p,q$ we say that a graph $G$ is a \emph{$(p,q)$-split graph}
if its vertex set $V$ can be partitioned into a pair of sets $(A,B)$ (where one of them
might be the empty set) such that $G[A]$ has independence number at most $p$, and
$G[B]$ has clique number at most $q$, i.e., $G[A]$ is a $(p+1)K_1$-free graph
and $G[B]$ is $K_{q+1}$-free graph. 
For instance, a graph $G$ is a $(1,1)$-split graph if and only if $G$ is a split graph, and it is a 
$(1,2)$-split graph if and only if it admits a partition into a clique and a triangle-free graph.

It follows from~\cite{golumbicJA19} and Section~\ref{sect:split} that 
if $p = q =1$, then the SP for $(1,1)$-split graphs can be solved in polynomial
time by reducing this problem to a tractable finite-domain CSP. Dantas,
Figueiredo, and Faria showed that the SP for $(p,q)$-split graphs is $\NP$-complete
whenever $p+q >2$~\cite{dantasDAM143}. Here, we reprove this  result to illustrate
pp-constructions of finite structures as a tool to obtain hardness results.
Specifically,  we show that if $p + q >2$, then there is a pp-construction of
$\text{1-IN-3}$ in a $2$-edge-coloured graph whose CSP describes the SP
for $(p,q)$-split graphs. We denote by $\text{1-IN-3}$ the boolean structure
encoding positive 1-IN-3 SAT, i.e.,  $\text{1-IN-3}:= (\{0,1\},\{(0,0,1),(0,1,0),(1,0,0)\})$.

Let $S$ be the graph obtained from the disjoint union $H_3 + I_\omega$, where $H_3$ is
the  homogeneous triangle-free graph, and $I_\omega$ is the countable independent set, 
and we also add edges $uv$ for $u\in V(H_3)$ and $v\in I_\omega$ independently with probability $1/2$.
So $S$ is a universal $(1,2)$-split graph. Since 
the class of $(1,2)$-split graphs is closed under split blow-ups, it follows that
$\CSP(S^\ast)$ is the SP for $(1,2)$-split graphs.

\begin{lemma}\label{lem:12-split}
    The $2$-edge-coloured graph $S^\ast$ pp-constructs $\text{1-IN-3}$.
\end{lemma}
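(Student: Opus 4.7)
The plan is to exhibit a pp-construction of $\text{1-IN-3}$ in $S^\ast$ by exploiting the structure of blue triangles. The central structural observation is that every blue triangle in $S^\ast$ contains exactly two vertices from $V(H_3)$ and exactly one vertex from $V(I_\omega)$: three pairwise blue-adjacent vertices in $V(H_3)$ are ruled out by triangle-freeness of $H_3$, and two or more in $V(I_\omega)$ are ruled out because $I_\omega$ is an independent set in $S$ (hence carries no blue edges). Conversely, blue triangles of this ``mixed'' type exist in abundance: for any blue edge $p_1p_2$ in $H_3$ there are vertices $q\in V(I_\omega)$ with both $p_1q$ and $p_2q$ blue, by density of the bipartite random graph between $V(H_3)$ and $V(I_\omega)$.

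Interpreting $V(H_3)$ as $0$ and $V(I_\omega)$ as $1$ yields a natural candidate for the homomorphism $\Pi(S^\ast)\to \text{1-IN-3}$: any blue triangle projects to a tuple with exactly one $1$, i.e., a $\text{1-IN-3}$ tuple. The plan is therefore to design a pp-formula $\delta_R$ of some dimension $d$ that captures the blue-triangle relation (up to encoding) while also admitting the reverse homomorphism $\text{1-IN-3}\to \Pi(S^\ast)$, which requires realizing the three constraint tuples $(0,0,1)$, $(0,1,0)$, $(1,0,0)$ as elements of the pp-defined relation.

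The main obstacle is that the most natural one-dimensional formula $\delta_R(x,y,z):=B(x,y)\wedge B(y,z)\wedge B(x,z)$ admits only tuples with three distinct vertices, so a tuple such as $(0,0,1)$ would demand a blue loop in $S^\ast$, which does not exist. To circumvent this, I plan to pass to pp-power dimension $d\geq 2$ and encode each Boolean value as a $d$-tuple, so that the two repeated $0$-arguments in $(0,0,1)$ are realized by different but same-type tuples. Concretely, with $d=2$ I will pick distinct blue-adjacent $p_1,p_2\in V(H_3)$ together with a vertex $q\in V(I_\omega)$ blue-adjacent to both $p_i$, and use tuples built from $\{p_1,p_2\}$ and tuples involving $q$ to play the roles of $0$ and $1$. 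The formula $\delta_R$ will distribute the three blue edges of a triangle across distinct coordinates of the three $d$-tuples (for instance via a ``cyclic'' conjunction of the form $B(x_1,y_2)\wedge B(y_1,z_2)\wedge B(z_1,x_2)$), so that (i)~every satisfying tuple still encodes a genuine blue triangle in $S^\ast$ and hence a $\text{1-IN-3}$ tuple under the type-coloring of first coordinates, and (ii)~the three required $\text{1-IN-3}$ tuples are realized via the chosen witnesses without invoking loops. Verifying the two homomorphism directions then reduces to checking finitely many blue-adjacency patterns in $S^\ast$, all of which are realizable by homogeneity of $H_3$ and randomness of the bipartite part.
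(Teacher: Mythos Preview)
Your structural observation about blue triangles is exactly right, and it is the starting point of the paper's proof as well. The gap is in your proposed cure for the ``loop problem''. With the cyclic formula
\[
\delta_R(x_1,x_2,y_1,y_2,z_1,z_2)\;:=\;B(x_1,y_2)\wedge B(y_1,z_2)\wedge B(z_1,x_2),
\]
claim~(i) is simply false: the three conjuncts are three \emph{unrelated} blue edges on six a priori distinct vertices, not a triangle. For a concrete counterexample, pick any $p\in V(H_3)$ and $q\in V(I_\omega)$ with $pq\in B$; then $((p,q),(p,q),(p,q))$ satisfies $\delta_R$, yet under your type-of-first-coordinate map it is sent to $(0,0,0)\notin\text{1-IN-3}$. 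More generally, taking $x_1,y_1,z_1\in V(H_3)$ arbitrary and $x_2,y_2,z_2\in V(I_\omega)$ chosen (via the random bipartite part) so that the three required edges are blue already shows that $\delta_R$ imposes no constraint whatsoever on the types of first coordinates. Variants that add more conjuncts so that first coordinates \emph{do} form a blue triangle (e.g.\ $\bigwedge_{i,j}B(x_i,y_j)\wedge\cdots$) immediately reintroduce the loop problem when two arguments coincide, so the issue is not the particular ``for instance'' formula but the strategy itself.

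What the paper does instead is stay in dimension~$1$ and pp-define the ``same part'' equivalence relation $\sim$ (i.e., $x\sim y$ iff both lie in $V(H_3)$ or both in $V(I_\omega)$); this is the actual technical content of the lemma, carried out via an explicit formula $\delta_\sim(x,y)$ with four existential witnesses. Once $\sim$ is pp-definable, the blue-triangle relation $R(x,y,z):=B(x,y)\wedge B(y,z)\wedge B(x,z)$ can be $\sim$-saturated, and the quotient $(V(S^\ast)/{\sim},R)$ is literally isomorphic to $\text{1-IN-3}$: the homomorphism from $\text{1-IN-3}$ back now needs only $\sim$-equivalent (not equal) witnesses for the repeated $0$, which exist in abundance. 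Your two-dimensional detour was aimed at manufacturing such distinct same-type witnesses implicitly, but without an analogue of $\delta_\sim$ somewhere in the construction there is nothing tying the coordinates to the part decomposition, and the forward homomorphism collapses.
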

\begin{proof}
    Consider the equivalence relation $x\sim y$ on $V(S^\ast)$
    defined as $V(I_\omega)^2 \cup V(H_3)^2$. 
    Consider also the ternary relation
    $R(x,y,z)$ pp-defined by $B(x,y)\land B(y,z)\land B(x,z)$.
    It is not hard to observe that $(V(S^\ast)/{\sim}, R)$ is isomorphic
    to 1-IN-3. Hence, it suffices to show
    that the equivalence relation $\sim$ is pp-definable in $S^\ast$. 
    We claim that $\sim$ is pp-definable by the following formula (see Figure~\ref{fig:formula})
    \begin{align*}
    \delta_\sim(x,y):= \exists z,w_1,w_2,w_3\; & \big( R(x,z)\land R(z,y)\land B(x,w_1) \land B(x,w_2)\land B(y,w_2)\land B(y,w_3)  \\
    \land &  B(w_1,w_2) \land B(w_2, w_3) \land B(z,w_1) \land B(z,w_2) \land B(z,w_3) \big ). \qedhere
    \end{align*}
\end{proof}

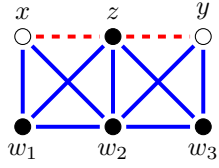
\begin{figure}
        \begin{center}
            \begin{tikzpicture}[scale = 0.8]
            \begin{scope}[xshift=5cm]
                \node [vertex, label = above:{$x$}] (x) at (-1.5,1.5) {};
                \node [vertex, fill = black, label = above:{$z$}] (z) at (0,1.5) {};
                \node [vertex, label = above:{$y$}] (y) at (1.5,1.5) {};

                \node [vertex, fill = black, label = below:{$w_1$}] (w1) at (-1.5,0) {};
                \node [vertex, fill = black, label = below:{$w_2$}] (w2) at (0,0) {};
                \node [vertex, fill = black, label = below:{$w_3$}]  (w3) at (1.5,0) {};

        \foreach \from/\to in {x/w1, x/w2, z/w1, z/w2, z/w3, y/w2, y/w3, w1/w2, w2/w3} 
        \draw [edge, blue] (\from) to (\to);

        \foreach \from/\to in {x/z, z/y} 
        \draw [edge, red, dashed] (\from) to (\to);
    \end{scope}
    \end{tikzpicture}
\end{center}
\caption{A picture for the pp $\delta_{\sim}$ from the proof of Lemma~\ref{lem:12-split}, where the filled vertices
    represent existentially quantified variables.}
    \label{fig:formula}
\end{figure}

\begin{theorem}\label{thm:pq-split}
    For every pair of positive integers $p,q$ one of the following statements holds.
    \begin{itemize}
        \item Either $p,q\le 1$, and in this case the sandwich problem for $(p,q)$-split graphs
        is polynomial-time solvable, or
        \item $p\ge 2$ or $q\ge 2$, and the sandwich problem for $(p,q)$-split graphs
        is $\NP$-complete.
    \end{itemize}
\end{theorem}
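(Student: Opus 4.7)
The polynomial-time half ($p = q = 1$) coincides with the split-graph sandwich problem already handled in Section~\ref{sect:split}. For the remaining cases, membership in $\NP$ is immediate (an extending edge set $E'$ is a certificate), so the task is to prove NP-hardness whenever $p \ge 2$ or $q \ge 2$.

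Complementing a $(p,q)$-split graph produces a $(q,p)$-split graph, because the conditions ``independence number at most $p$'' and ``clique number at most $q$'' swap under complementation. On the $2$-edge-coloured input this is merely the polynomial-time exchange of $B$ and $R$, so $\SP((p,q)\text{-split})$ and $\SP((q,p)\text{-split})$ are polynomial-time equivalent, and we may assume $q \ge 2$. The class of $(p,q)$-split graphs is hereditary, satisfies JEP, and is closed under split blow-ups, so by Proposition~\ref{prop:SP->CSP} its sandwich problem equals $\CSP(S_{p,q}^*)$ for a universal $(p,q)$-split graph $S_{p,q}$, built as the disjoint union of a universal $(p+1)K_1$-free graph and the universal $K_{q+1}$-free graph $H_{q+1}$, with a random bipartite graph between the two parts.

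The plan is then to generalize Lemma~\ref{lem:12-split} and show that $S_{p,q}^*$ pp-constructs $\text{1-IN-3}$. As in the $(1,2)$-split case, one first pp-defines the equivalence relation $\sim$ that collapses each side of the bipartition to a single point, and then uses a monochromatic clique relation---naturally, the $(q+1)$-ary blue clique $\bigwedge_{i<j} B(x_i,x_j)$---which, modulo $\sim$, cannot be realized entirely on the $K_{q+1}$-free side and therefore descends to a relation of the shape $\text{1-IN-}(q+1)$ on the two-element quotient. Since $\text{1-IN-}k$ pp-constructs $\text{1-IN-}3$ for every $k \ge 3$, an appeal to Lemma~\ref{lem:pp-construction-reduction} concludes the hardness proof.

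The principal obstacle is the pp-definition of $\sim$ in this richer setting. The witness formula $\delta_\sim$ used in Lemma~\ref{lem:12-split} leaned heavily on the extreme rigidity of the two sides there (no blue edges on one side, no blue triangles on the other), so that no small configuration could be accommodated entirely in a single part. When $p \ge 2$ the ``independence-bounded'' side is no longer a clique, and a more elaborate gadget must be produced whose size scales with Ramsey-type parameters, engineered so that placing $x$ and $y$ on opposite sides is obstructed either by a forbidden $K_{q+1}$ on the $K_{q+1}$-free side or by a forbidden $(p+1)K_1$ on the $(p+1)K_1$-free side. Designing this gadget, and verifying it works uniformly in $p$ and $q$, is where I expect the bulk of the technical work.
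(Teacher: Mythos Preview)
Your plan diverges from the paper's, and it contains a genuine error that would need to be repaired before the strategy could succeed.

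\medskip
\noindent\textbf{The error.} You claim that the $(q+1)$-ary blue-clique relation, taken modulo $\sim$, ``descends to a relation of the shape $\text{1-IN-}(q+1)$''. It does not. In $S_{p,q}^{*}$, the side with bounded independence number (the universal $(p{+}1)K_1$-free graph) contains cliques of \emph{every} size; hence a blue $K_{q+1}$ may have $1,2,\dots,q{+}1$ of its vertices on that side. The only excluded pattern is ``all $q{+}1$ vertices on the $K_{q+1}$-free side''. So the quotient relation is the $(q{+}1)$-ary \textsc{or} (``not all zero''), not $\text{1-IN-}(q{+}1)$. On its own this is a tractable Boolean relation, so the pp-construction you sketched does not yield NP-hardness. (The reason the paper's Lemma~\ref{lem:12-split} obtains $\text{1-IN-}3$ is specific to the base case it treats, where the two sides are simultaneously very restrictive with respect to blue triangles.) You would need to combine this with a dual red-side gadget and further glue, at which point the pp-definition of $\sim$ for general $p$ is still outstanding---and you already flag that as the hard part.

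\medskip
\noindent\textbf{What the paper does instead.} The paper avoids all of this by proving a single base case and then stepping up by a direct combinatorial reduction between sandwich problems, rather than by a uniform pp-construction. Concretely: starting from NP-hardness for $(1,2)$ via Lemma~\ref{lem:12-split} (and for $(2,1)$ by complementation), the paper shows that $\SP((p,q)\text{-split})$ reduces in polynomial time to $\SP((p,q{+}1)\text{-split})$ (and symmetrically to $(p{+}1,q)$). The reduction adds $p{+}1$ fresh vertices, each with a forced edge to every old vertex and a forced non-edge to every other fresh vertex. These fresh vertices form an independent set of size $p{+}1$, so in any valid $(p,q{+}1)$-partition at least one of them lies on the $K_{q+2}$-free side; being universal to $V$, it then forces the clique bound on $V$ down to $q$. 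This bootstraps hardness to all $(p,q)$ with $\max(p,q)\ge 2$ without ever needing the generalized $\delta_\sim$ gadget you anticipate.
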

\begin{proof}
    The first item holds because the SP for split graphs can be solved in polynomial
    time~\cite{golumbicJA19} (see also Section~\ref{sect:split}).
    Since positive 1-IN-3 SAT is $\NP$-complete, it follows via Lemma~\ref{lem:12-split} and
    Lemma~\ref{lem:pp-construction-reduction} that $\CSP(S^\ast)$ is $\NP$-complete, and so
    the SP for $(1,2)$-split graphs
    is $\NP$-complete (Lemma~\ref{lem:H->H*}). By reducing to the complement, 
    we see that the SP for $(2,1)$-split graphs is $\NP$-complete. Finally, we
    show that the SP for $(p,q)$-split graphs reduced in polynomial time
    to the SP for $(p,q+1)$-split graphs (and symmetrically, to the SP for
    $(p+1,q)$-split graphs). Consider an input $(V,E,N)$ to the SP
    for $(p,q)$-split graphs, and let $v_1,\dots, v_{p+1}$ be new vertices
    not in $V$. Let $V_1 = V\cup \{v_1,\dots, v_{p+1}\}$,
    $E_1 = E\cup \{uv_i\colon u\in V, 1\le i\le p+1\}$, and $N_1 = N\cup \{v_iv_j\colon i\neq j, 1\leq i,j\le p+1\}$. 
    It is straightforward to observe that if $(V,E')$ is a $(p,q)$-split graph
    with $E\subseteq E'$ and $E'\cap N= \varnothing$, then $(V_1, E_1\cup E')$ is
    a $(p,q+1)$-split graph with $E_1\subseteq E_1\cup E'$ and $(E_1\cup E')\cap N_1 = \varnothing$.
    In other words, if $(V,E,N)$ is a yes-instance, then $(V_1,E_1,N_1)$ is a yes-instance. 
    Conversely, if $(V_1,E')$ is a $(p,q+1)$-split graph such that $E_1\subseteq E'$
    and $E'\cap N_1=\varnothing$, then $(V, E'\cap V^2)$ is a $(p,q)$-split graph
    with $E\subseteq E'\cap V^2$ and $( E'\cap V^2)\cap N = \varnothing$. 
    Therefore, the reduction $(V,E,N)\mapsto (V_1,E_1,N_1)$ is a polynomial-time
    reduction from the SP for $(p,q)$-graphs to the SP for $(p,q+1)$-graphs. 
    One can argue analogously that the SP for $(p,q)$-split graphs reduces in
    polynomial-time to the SP for $(p+1,q)$-split graphs, and since the SPs for $(1,2)$-split
    graphs and for $(2,1)$-split graphs are $\NP$-complete, we conclude that 
    the SP for $(p,q)$-split graphs is $\NP$-complete whenever $p\ge 2$
    or $q\ge 2$.
\end{proof}

\subsection{Permutation Graphs and PP-Constructing Infinite Structures}
\label{sect:permutation}
A graph $G$ is a \emph{permutation graph} if its vertices can be represented
as straight line segments between two parallel lines, and two vertices are adjacent
if and only if the corresponding line segments intersect. One may further require
that no two different line segments intersect in their end-points. However, 
for every finite amount of line segments, we can apply a small enough perturbation to their
end-points so that no two different line segments have a common end-point, and 
two line segments intersect in the perturbed configuration if and only if they intersect in 
the original configuration. Hence, both definitions of permutation graphs are equivalent for finite graphs,
and we choose to allow for different line segments to have a common end-point. 
One can easily construct a universal permutation graph $P$: the vertex set
of $P$  is $\mathbb Q^2$, and there is an edge $(x_1,y_1)(x_2,y_2)$ if $x_1 \le x_2$ and $y_2 \le y_1$,
or if $x_2 \le x_1$ and $y_1 \le y_2$. So the blue neighborhood of a vertex $v$ in $P^\ast$
corresponds to the closed bottom right and top left quadrants defined by $v$, and the red neighbors
to the open bottom left and top right quadrants defined by $v$ --- we illustrate this in
Figure~\ref{fig:permutation}.

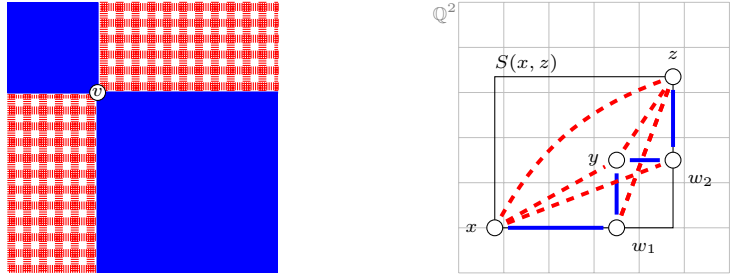
\begin{figure}[ht!]
\centering
\begin{tikzpicture}

  \begin{scope}[xshift = -6cm, scale = 0.6]
    \fill[blue] (-3,1) rectangle (-1,3);
    \fill[blue] (-1,-3) rectangle (3,1);
    \draw[step=0.05cm, red, dashed] (-0.95,1.05) grid (3,3);
    \draw[step=0.05cm, red, dashed] (-3,-3) grid (-1.05,0.95);
    \draw[-, thick, blue] (-3,1) -- (3,1);
    \draw[-, thick, blue] (-1,-3) -- (-1,3);
    \node [vertex] (v) at (-1,1) {\scriptsize $v$};

  \end{scope}

  \begin{scope}[scale = 0.6]    
    \draw[step=1cm, lightgray, very thin] (-3,-3) grid (3,3);
    \draw (-2.2,-2) rectangle (1.75,1.35);
    
    \node [vertex, label = left:{\scriptsize $x$}] (1) at (-2.2,-2) {};
    \node [vertex, label = 315:{\scriptsize $w_1$}] (2) at (0.5,-2) {};
    \node [vertex, label = left:{\scriptsize $y$}] (3) at (0.5,-0.5) {};
    \node [vertex, label = above:{\scriptsize $z$}] (5) at (1.75,1.35) {};
    \node [vertex, label = 315:{\scriptsize $w_2$}] (4) at (1.75,-0.5) {};
    \node at (-1.5,1.65) {\scriptsize $S(x,z)$};
    
    \node at (-3.3,2.75) {\scriptsize \color{gray} $\mathbb Q^2$};
      
    \foreach \from/\to in {1/2, 2/3, 3/4, 4/5}     
    \draw [edge, blue] (\from) to  (\to);
    \foreach \from/\to in {1/3, 2/5, 2/5, 1/4, 3/5}     
    \draw [edge, red, dashed] (\from) to  (\to);
    \draw [edge, red, dashed] (1) to [bend left = 20]  (5);
  \end{scope}

\end{tikzpicture}
\caption{To the left, a depiction of the blue (solid) and red (dashed) neighbourhoods
of a vertex $v$ in the CSP template $(\mathbb Q^2, B, R)$ that describes the SP
for permutation graphs. To the right, a depiction of  witnesses $w_1,w_2$
$(x,y,z)\in R$ whenever $x$ and $z$ are the bottom left and top right
corners of the rectangle $S(x,z)$ and $y$ lies inside $S(x,z)$.
}
\label{fig:permutation}
\end{figure}

The hardness proof of the SP for permutation graphs from~\cite{golumbicJA19}
corresponds to the following pp-construction of betweenness in $P^\ast$.
We first pp-define a ternary relation $R(x,y,z)$ over $P^\ast = (\mathbb Q^2, E, N)$,
and we then see that $(\mathbb Q^2, R)$ is homomorphically equivalent to $(\mathbb Q, \Betw)$.
Consider the following definition. 
\begin{align*}
    R(x,y,z):=  \exists & w_1,w_2 \big (B(x,w_1) \land B(w_1,y) \land B(y,w_2) \land B(w_2,z)\\
            \land & R(x,y) \land R(x,w_2) \land R(x,z) \land R(w_1,w_2) \land R(w_1,z) \land R(y,z) \big ) 
\end{align*}
We first propose the following geometric interpretation of the ternary relation $R$. 
Any pair of points $x,y\in \mathbb Q^2$ that do not belong to the same horizontal
or the same vertical line defines a unique rectangle $S$ where $x$ and $y$ are opposite
corners of the rectangle. Notice that if $x$ and $y$ correspond to the
top left and bottom right corners of $S(x,y)$, then any vertex inside $S$ is 
connected to both $x$ and $y$ by a blue edge in $P^\ast$; if $x$ and $y$ are
the bottom left and top right corner of $S(x,y)$, then any vertex inside $S$ is
connected to $x$ and to $y$ by a red edge. We claim that $(x,y,z) \in R$
if and only if $x$ and $z$ are the bottom left and top right corners of $S(x,z)$
and $y$ lies inside $S(x,z)$. Indeed, suppose that the latter holds. For each $a\in \{x,y,z\}$ let $a = (a_1,a_2)$. If $x$ is the bottom left
corner of $S(x,z)$ and $z$ the top right one, then the points $w_1 = (y_1,x_2)$
and $w_2 = (z_1,y_2)$ witness that $(x,y,z)$ belongs to $R$ (see  Figure~\ref{fig:permutation})
--- the case where $x$ is the top right corner of $S(x,z)$ and $z$ the bottom left is 
symmetric. We prove the contrapositive of the remaining implication. Assume 
that the two vertices $a,b\in \{x,y,z\}$ are the top right and bottom left
corners of $S(a,b)$. Then $c \in \{x,y,z\}\setminus \{a,b\}$ lies inside $S(a,b)$, because 
otherwise $x,y,z$ do not induce a red triangle, and hence $(x,y,z)\not\in R$. 
Since $y$ does not belong to the square $S(x,z)$, we may assume without loss of generality
that $c = z$ (the case where $c = x$ can be shown analogously). In this case, notice
that any neighbor $w_1$ of $x$ and of $y$ is also a neighbor of $z$, and so
$(x,y,z)\not\in R$. 

Using the geometric description of $R$ it follows that
the function $f\colon \mathbb Q\to \mathbb Q^2$ mapping $\mathbb Q$ to
the diagonal by $q\mapsto (q,q)$ defines an embedding of $(\mathbb Q,\Betw)$ into
$(\mathbb Q^2, R)$. In particular, $(\mathbb Q, \Betw)$ is isomorphic
to the substructure $(\Delta, R_{|\Delta})$ of $(\mathbb Q^2, R)$ induced
by the diagonal $\Delta = \{(q,q)\in \mathbb Q^2\}$. Using the same
geometric description of $R$ it also follows that the orthogonal
projection $h\colon \mathbb Q^2\to \Delta$ of $\mathbb Q^2$ onto the
diagonal $\Delta$ defines a homomorphism
$h\colon (\mathbb Q^2, R)\to (\Delta, R_{|\Delta})\cong (\mathbb Q, \Betw)$.
Hence, $(\mathbb Q^2, R)$ is homomorphically equivalent to 
$(\mathbb Q, \Betw)$, and since $R$ has a primitive positive definition
on $P^\ast$, we conclude that $P^\ast$ pp-constructs
$(\mathbb Q, \Betw)$. This implies that the SP for permutation graphs
is NP-complete. 

We point out that the gadget reduction from the betweenness problem
to the SP for permutation graphs proposed in~\cite{golumbicJA19} corresponds
to the gadget obtained from the pp-construction above via Lemma~\ref{lem:pp-construction-reduction}
and Proposition~\ref{prop:SP->CSP}.

\section{Line graphs}
\label{sect:line}
Given a graph $G$, its \emph{line graph} $L(G)$ is the graph whose vertex set consists of the edges of $G$, and where two vertices
are adjacent if and only if the corresponding edges are incident to a common vertex in $G$.
A graph $H$ is \emph{a line graph} if there is a graph $G$ such that $H$ is isomorphic to the line graph of $G$. 
Unfortunately (for the CSP approach to sandwich problems), 
the class of line graphs is not  
closed under split blow-ups. However, line graphs of multigraphs are closed under co-blow-ups;
this class has 
been considered in the literature and admits 
nice structural characterizations~\cite[Theorem 4.1]{petersonDAM126}.
Similarly, the class of line graphs of triangle-free multigraphs is closed under co-blow-ups as well, and corresponds to the class
of \emph{dominoes} studied in~\cite{kloksGTCS1995,metelskySIDMA16}.
In this section we classify the complexity of 
the SP for line graphs of multigraphs (Corollary~\ref{cor:line-multigraphs}),
and for line graphs of  bipartite multigraphs (Proposition~\ref{prop:line-multi-bip}).

\subsection{Line graphs of multigraphs}
The \emph{Johnson graph} $J(2)$ is the line graph of the infinite clique --- the Johnson graph $J(k)$
is defined for every positive integer $k$ (see, e.g.~\cite{Thomas96}), but its definition
is not needed for the present work. Observe that the Johnson graph $J(2)$ is a
universal 
line graph. So if $J(2)'$ is obtained from $J(2)$ by substituting each
vertex of $J(2)$ by a countable set of co-twins, then $J(2)'$ is a universal line
graph of multigraphs. It now follows via Lemma~\ref{lem:H->H*} that $\CSP((J(2)')^\ast)$ describes
the SP for the class of line graphs of multigraphs (which are simple graphs). 

We will use the following simple lemma a couple of times in this section. We say that a
graph $H$ is \emph{point-incomparable} if for every pair of vertices $x,y\in V$ with
$x\neq y$, there are $w_1,w_2\in V\setminus \{x,y\}$ such that
$w_1x, w_2y\in E$ and $w_1y, w_2x\not\in E$. 

\begin{lemma}\label{lem:H'-pp-H}
    Let $H$ be a graph and let $H'$ be a split blow-up of $H$. If $H$ is a point-incomparable graph,
    then $(H')^\ast$ pp-constructs $H^\ast$, and so $\CSP(H^\ast)$ reduces in polynomial-time to
    $\CSP((H')^\ast)$. 
\end{lemma}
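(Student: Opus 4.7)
The plan is to give a one-dimensional primitive positive construction of $H^\ast$ in $(H')^\ast$; the reduction statement then follows from Lemma~\ref{lem:pp-construction-reduction}. Fix a partition $(I,C)$ of $V(H)$ such that $H'$ is obtained as a split blow-up of $H$ with respect to $(I,C)$, and let $\phi \colon V(H') \to V(H)$ be the natural projection. By construction, each fibre $\phi^{-1}(v)$ induces an independent set in $(H')^\ast$ if $v \in I$ and a clique if $v \in C$, and any two vertices in the same fibre have the same $B$- and $R$-neighbours outside that fibre.

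The crucial step is to pp-define the binary relation
\[
  \mathrm{Dist}(x,y) := \exists\, w_1,w_2\; \bigl( B(x,w_1) \wedge R(y,w_1) \wedge R(x,w_2) \wedge B(y,w_2) \bigr)
\]
in $(H')^\ast$, and to check that $\mathrm{Dist}(x,y)$ holds precisely when $\phi(x) \neq \phi(y)$. If $\phi(x) = u \neq v = \phi(y)$, the point-incomparability of $H$ supplies $w_1, w_2 \in V(H) \setminus \{u,v\}$ with $w_1u, w_2v \in E(H)$ and $w_1v, w_2u \notin E(H)$, and any representatives of $w_1,w_2$ in $V(H')$ serve as witnesses. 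For the converse I would argue that a same-fibre pair $x,y \in \phi^{-1}(v)$ admits no witness $w_1$: if $w_1 \in \phi^{-1}(v)$, then the fibre structure (clique or independent set) together with the absence of loops in $(H')^\ast$ forces one of $B(x,w_1)$ or $R(y,w_1)$ to fail; and if $w_1 \notin \phi^{-1}(v)$, then by the twin/co-twin property $w_1$ is $B$-adjacent to $x$ iff it is $B$-adjacent to $y$, so $R(y,w_1)$ fails.

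Taking the pp-definitions
\[
  \delta_B(x,y) := B(x,y) \wedge \mathrm{Dist}(x,y), \qquad
  \delta_R(x,y) := R(x,y) \wedge \mathrm{Dist}(x,y),
\]
I would verify that the resulting pp-power $\Pi((H')^\ast)$ is homomorphically equivalent to $H^\ast$. The projection $\phi$ is a homomorphism $\Pi((H')^\ast) \to H^\ast$, because every $\delta_B$- or $\delta_R$-edge joins distinct fibres and so inherits the unique colour that $H^\ast$ assigns to the corresponding pair of vertices of $H$. Conversely, any section $r \colon V(H) \to V(H')$ of $\phi$ gives a homomorphism $H^\ast \to \Pi((H')^\ast)$, since distinct vertices of $H$ map to representatives in distinct fibres and the colour between their representatives in $(H')^\ast$ agrees with the colour prescribed by $H^\ast$.

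The main obstacle is the case analysis showing that $\mathrm{Dist}$ fails on same-fibre pairs, which must separately rule out the degenerate possibilities $w_1 \in \{x,y\}$ and the sub-case $v \in C$ (where the fibre is a blue clique); in both situations the no-loops convention on $(H')^\ast$ is what closes the argument.
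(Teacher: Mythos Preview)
Your proposal is correct and follows essentially the same approach as the paper: the paper uses the identical pp-formulas (your $\mathrm{Dist}$ is exactly the existential conjunct the paper attaches to $B$ and $R$), and argues that the resulting $2$-edge-coloured graph is $(H')^\ast$ with all intra-fibre edges removed, hence homomorphically equivalent to $H^\ast$ via the quotient. Your write-up is in fact more detailed than the paper's, which leaves the verification that $\mathrm{Dist}$ coincides with ``different fibres'' and the explicit homomorphisms (your $\phi$ and section $r$) to the reader.
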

\begin{proof}
    Let $\sim$ be the equivalence relation on $V(H')$ defined by the blow-up equivalence, i.e.,
    $x\sim y$ if and only if there is a vertex $v\in V(H)$ such that $x$ and $y$ belong to the
    blow-up or co-blow-up of $v$. Since $H$ is point-distinguishing, it neither has 
    a pair of twins nor of co-twins. Hence, 
    $x\sim y$ holds if and only if $x$ and $y$ are twins or co-twins in $H'$.
    Now, consider the following pp-definitions.
    \begin{align*}
        \delta_R(x,y) := R(x,y) \land \exists w_1,w_2\; \big (B(x,w_1) \land B(y,w_2)\land R(x,w_2)\land R(y,w_1) \big)\\
        \delta_B(x,y) := B(x,y) \land \exists w_1,w_2\; \big (B(x,w_1) \land B(y,w_2)\land R(x,w_2)\land R(y,w_1) \big )
    \end{align*}
    On a high level, the $2$-edge-coloured graph $H^\delta:=(V(H'), \delta_B,\delta_R)$ is the
    $2$-edge-coloured graph obtained from $(H')^\ast$ be removing all edges inside each $\sim$-equivalence
    class. Hence, the quotient $H^\delta/{\sim}$ is isomorphic to $H^\ast$, and it readily
    follows that $H^\ast$ and $H^\delta$ are homomorphically equivalent. Hence, $(H')^\ast$ pp-constructs
    $H^\ast$. The fact that  $\CSP(H^\ast)$ reduces in polynomial-time to
    $\CSP((H')^\ast)$ follows via Lemma~\ref{lem:pp-construction-reduction}. 
\end{proof}

In terms of the CSP approach to graph sandwich problems, this lemma has the following application.

\begin{proposition}\label{prop:C'-C}
    Let $\mathcal C$ be a class of graphs, and let $\mathcal C'$ be the class of graphs
    obtained as co-blow-ups of graphs in $\mathcal C$. If $H$ is a point-incomparable
    universal graph in $\mathcal C$, and $\CSP(H^\ast)$ is $\NP$-hard, then the sandwich
    problem for $\mathcal C'$ is $\NP$-hard as well. 
\end{proposition}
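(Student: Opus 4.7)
The plan is to construct a universal graph $H'$ for $\mathcal{C}'$, apply Lemma~\ref{lem:H'-pp-H} to reduce $\CSP(H^\ast)$ to $\CSP((H')^\ast)$, and finally to identify $\CSP((H')^\ast)$ with $\SP(\mathcal{C}')$. Let $H'$ denote the co-blow-up of $H$ in which every vertex $v \in V(H)$ is substituted by a countably infinite set of co-twins. Then $H' \in \mathcal{C}'$ by construction, and I would verify that every finite graph in $\mathcal{C}'$ embeds into $H'$: any such graph is a finite co-blow-up of some $G \in \mathcal{C}$, $G$ embeds into $H$ by universality, and the co-blow-up transfers fiberwise into $H'$.

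Since $H$ is point-incomparable and $H'$ is a split blow-up of $H$ (every vertex is co-blown-up), Lemma~\ref{lem:H'-pp-H} gives that $(H')^\ast$ pp-constructs $H^\ast$. Combining with Lemma~\ref{lem:pp-construction-reduction} yields a log-space reduction $\CSP(H^\ast) \le \CSP((H')^\ast)$, so $\CSP((H')^\ast)$ inherits the $\NP$-hardness of $\CSP(H^\ast)$.

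The remaining step is to show $\CSP((H')^\ast) = \SP(\mathcal{C}')$ as classes of yes-instances. The easy direction: if $(V,E,N)$ is a yes-instance of $\SP(\mathcal{C}')$ witnessed by $E'$, then $(V,E')$ embeds into the universal graph $H'$, and this embedding is already a homomorphism $(V,E,N)\to (H')^\ast$. For the converse, given a homomorphism $f \colon (V,B,R) \to (H')^\ast$, I would set
\[
E' := \{uv : u \ne v,\ f(u) = f(v)\} \cup \{uv : f(u) \ne f(v),\ f(u)f(v) \in E(H')\}.
\]
The inclusions $B \subseteq E'$ and $E' \cap R = \varnothing$ follow immediately from $f$ being a homomorphism (using that $(H')^\ast$ has no monochromatic loops, so no $B$- or $R$-edges occur inside a fiber). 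The graph $(V,E')$ is, by construction, the co-blow-up of the induced subgraph $H'[f(V)]$ in which each $z \in f(V)$ is expanded by the fiber $f^{-1}(z)$ into a clique of co-twins. Since $H'[f(V)]$ is itself a finite co-blow-up of an induced subgraph of $H$ (and $\Age(H) \subseteq \mathcal{C}$), and co-blow-ups compose to co-blow-ups, $(V,E')$ is a co-blow-up of a graph in $\mathcal{C}$, hence lies in $\mathcal{C}'$.

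The main obstacle is precisely the last verification: upgrading a non-injective homomorphism $f$ to an honest sandwich solution $E'$. The trick is that the only freedom on a fiber $f^{-1}(z)$ is over pairs that carry neither a $B$- nor an $R$-edge, and declaring all such pairs adjacent corresponds exactly to further co-blowing up $z$, which is the one closure property of $\mathcal{C}'$ available to us (a twin blow-up would not in general stay in $\mathcal{C}'$, which is why the hypothesis that $\mathcal{C}'$ is defined via \emph{co}-blow-ups is essential, as is the point-incomparability of $H$, needed for Lemma~\ref{lem:H'-pp-H}).
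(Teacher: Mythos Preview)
Your proof is correct and follows essentially the same route as the paper: construct $H'$ as the countable co-blow-up of $H$, use Lemma~\ref{lem:H'-pp-H} to reduce $\CSP(H^\ast)$ to $\CSP((H')^\ast)$, and identify $\CSP((H')^\ast)$ with $\SP(\mathcal C')$. The only difference is cosmetic: the paper obtains the last identification by a one-line appeal to Lemma~\ref{lem:H->H*} (noting that $\mathcal C'$ is closed under co-blow-ups and $H'$ is universal in $\mathcal C'$), whereas you unfold that argument by hand, explicitly building the edge set $E'$ from the fibers of $f$.
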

\begin{proof}
    Let $H'$ be the graph obtained from $H$ by blowing up each vertex to a countable
    set of co-twins. It is straightforward to observe that $H'$ is a universal
    graph in $\mathcal C'$, and that $H'$ is a split blow-up of $H$. 
    Since $\mathcal C'$ is closed under co-blow-ups, it follows by Lemma~\ref{lem:H->H*}
    that $\CSP((H')^\ast))$ described the SP for $\mathcal C'$. By Lemma~\ref{lem:H'-pp-H} 
    it follows that $\CSP((H')^\ast)$ is $\NP$-hard because $\CSP(H^\ast)$ is $\NP$-hard and $H$ is point-incomparable, 
    and thus the SP for $\mathcal C'$ is $\NP$-hard. 
\end{proof}

Recognizing line graphs of multigraphs can be done in
polynomial time. In fact, 
there are only finitely many minimal obstructions to this class~\cite{bermondJMPA52,zverovichDMA7}. 
Thus, the SP for multigraphs is in NP. It is straightforward to observe that the line
graph $J(2)$ of the infinite clique is a point-incomparable graph, so by Proposition~\ref{prop:C'-C},
it suffices to show that the CSP of $J(2)^\ast$ is $\NP$-hard to conclude that
the SP for line graphs of multigraphs is NP-complete. We will build again on results
from the CSP literature.

A $\sigma$-structure  $B$ is a
\emph{first-order reduct} 
of a $\tau$-structure $A$ if 
\begin{itemize}
    \item the vertex set of $B$ equals the vertex set of $A$, and
    \item for every symbol $S\in \sigma$ of arity $s$, there is a first-order
    $\tau$-formula $\phi_S(x_1,\dots, x_s)$ with $s$ many free-variables,
    such that a tuple $(b_1,\dots, b_s)$ belongs to the interpretation
    of $S$ in $B$ if and only if $A\models \phi_S(b_1,\dots, b_s)$.
\end{itemize}
For instance, if $D$ is a digraph, and $G$ is obtained
from $D$ by forgetting the orientation of the edges, then 
$G$ is a first-order reduct of $D$: simply consider the
first-order formula $\phi_E(x,y):= E(x,y)\lor E(y,x)$.
 Also notice that $H^\ast$ is a first-order reduct of $H$.

Theorem 44 in~\cite{bodirskySw} shows a complexity classification of
the CSP of first-order reducts $R$ of $J(2)$. It states that 
$\CSP(R)$ is NP-hard unless 
there exists another structure $R'$ such that
\begin{itemize}
    \item $R$ and $R'$ are \emph{homomorphically equivalent}, i.e., there exists a homomorphism 
    from $R$ to $R'$ and vice versa;
    \item $R'$ is a \emph{model-complete core}, i.e., every \emph{endomorphism of $R'$} (i.e., every homomorphism from $R'$ to $R'$ preserves all first-order formulas;
    \item $R'$ just has one element, or $R'$ is preserved by all permutations, and there exists an injective homomorphism from $(R')^2$ to $R'$. 
\end{itemize}
Note that the final item is extremely restrictive. It is easy to see that for the first-order reduct $J(2)^*$ of $J(2)$ there does not exist an $R'$ as stated above: clearly, since both the edge relation $E$ and the relation $N$ in $J(2)^*$ are non-empty, and
$R'$ is homomorphically equivalent, then
the same is true for $R'$. Hence, it cannot consist of just one point, and for the same reason it cannot be preserved by all permutations. It thus follows by Theorem~44 in~\cite{bodirskySw}
that $\CSP(J(2)^\ast)$ is $\NP$-complete.

\begin{corollary}\label{cor:line-multigraphs}
    The sandwich problem for line graphs of multigraphs is $\NP$-complete.
\end{corollary}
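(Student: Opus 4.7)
The plan is to simply assemble the ingredients laid out in the paragraphs preceding the corollary. Membership in $\NP$ is immediate: the recognition problem for line graphs of multigraphs is polynomial time solvable (since the class has only finitely many minimal obstructions, by the results of Bermond--Meyer and Zverovich cited in the paper), so on input $(V,E,N)$ one can non-deterministically guess the sandwich edge set $E'$ with $E\subseteq E'\subseteq E\cup N^{c}$ and verify in polynomial time that $(V,E')$ is a line graph of a multigraph.

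For $\NP$-hardness, I would invoke Proposition~\ref{prop:C'-C}, taking $\calC$ to be the class of finite line graphs and $\calC'$ to be the class of line graphs of multigraphs; as noted just above the corollary, $\calC'$ consists exactly of the co-blow-ups of graphs in $\calC$, and $J(2)$ is a universal object in $\calC$. Two things then need to be checked. First, that $J(2)$ is point-incomparable; this is the easy geometric observation already flagged in the text (for any two distinct edges $x,y$ of $K_\omega$, one can pick edges $w_1,w_2$ of $K_\omega$ each sharing a vertex with exactly one of $x,y$). Second, and this is the main step, one needs $\CSP(J(2)^{\ast})$ to be $\NP$-hard.

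For the second step I would appeal directly to Theorem~44 of~\cite{bodirskySw}, applied to the first-order reduct $J(2)^{\ast}$ of $J(2)$. According to that theorem, if $\CSP(J(2)^{\ast})$ is not $\NP$-hard, then there is a structure $R'$ homomorphically equivalent to $J(2)^{\ast}$ which is a model-complete core and which is either a one-element structure or is preserved by all permutations of its domain and admits an injective homomorphism from $(R')^{2}$ to $R'$. The hard-looking part here is actually very short: since both $B(J(2)^{\ast})$ and $R(J(2)^{\ast})$ are non-empty, any homomorphic image of $J(2)^{\ast}$ must likewise carry a non-empty blue and a non-empty red edge, so $R'$ cannot be a one-point structure and cannot be preserved by all permutations of its domain (a transposition swapping the endpoints of a blue edge with those of a red edge would have to map blue to red). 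This rules out all tractable alternatives, giving $\NP$-hardness of $\CSP(J(2)^{\ast})$.

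Putting the three pieces together: Proposition~\ref{prop:C'-C} with the point-incomparability of $J(2)$ transfers the hardness from $\CSP(J(2)^{\ast})$ to the sandwich problem for $\calC' = $ line graphs of multigraphs, and combined with the $\NP$ upper bound one concludes that $\SP(\calC')$ is $\NP$-complete. The only step that requires more than bookkeeping is invoking~\cite[Theorem~44]{bodirskySw}, but once one has that result, the verification that $J(2)^{\ast}$ falls on the hard side is essentially a one-line non-triviality argument about the non-emptiness of $B$ and $R$.
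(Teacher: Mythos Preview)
Your proposal is correct and follows essentially the same approach as the paper: membership in $\NP$ via the finite-obstruction characterization, and $\NP$-hardness by combining Proposition~\ref{prop:C'-C} (using that $J(2)$ is point-incomparable) with the hardness of $\CSP(J(2)^{\ast})$ obtained from Theorem~44 of~\cite{bodirskySw} via the observation that both relations of $J(2)^{\ast}$ are non-empty. The only cosmetic difference is in how you phrase the ``not preserved by all permutations'' step; the paper simply notes that non-emptiness of both $B$ and $R$ in any $R'$ homomorphically equivalent to $J(2)^{\ast}$ already rules out the tractable alternatives, which is the same content as your argument.
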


\subsection{Line graphs of bipartite multigraphs}
It is well-known that a graph $G$ is the line graph of a bipartite graph if and only if its vertices can be
represented as points of the grid $\mathbb Z^2$ in such a way that a $uv\in E(G)$ if and
only if $u$ and $v$ are represented by points on the same horizontal or the same vertical
line~\cite{petersonDAM126}.  The \emph{grid graph} $\GR$ is universal line graph of bipartite graphs,
and defined as follows:
\begin{itemize}
    \item its vertex set is $V(\GR) =  \mathbb Z^2$, and 
    \item its edge set is $\{(a,b)(c,d)\mid a = b$
and $b \neq d$, or $a\neq b$ and $b = d\}$.
\end{itemize}
It has been noted recently~\cite[Proposition 15]{bokPREPATATION} that a graph $G$ is a line graph
of a bipartite graph if and only if there is a $2$-edge-colouring of $G$ that avoids the following
forbidden structures, where dashed red indicates non-adjacent pairs of vertices, and thick blue and
thin green edges represent different edge colours (which corresponds to vertical and horizontal edges
in the grid representation of a line graph of a bipartite graph). 
\begin{center}
    \begin{tikzpicture}[scale = 0.6]
    \begin{scope}[xshift=-5cm]
        \node [vertex] (1) at (-1,0) {};
        \node [vertex] (2) at (1,0) {};
        \node [vertex] (3) at (0,2) {};
        
        \draw [edge, red, dashed] (1) to (3);
        \draw [edge, blue, very thick] (2) to (3);
        \draw [edge, blue, very thick] (1) to (2);

    \end{scope}
    \begin{scope}
        \node [vertex] (1) at (-1,0) {};
        \node [vertex] (2) at (1,0) {};
        \node [vertex] (3) at (0,2) {};
        
        \draw [edge, red, dashed] (1) to (3);
        \draw [edge, olive, thin] (2) to (3);
        \draw [edge, olive, thin] (1) to (2);
       
    \end{scope}
    \begin{scope}[xshift=5cm]
        \node [vertex] (1) at (-1,0) {};
        \node [vertex] (2) at (1,0) {};
        \node [vertex] (3) at (0,2) {};
        
        \draw [edge, blue, very thick] (1) to (3);
        \draw [edge, olive, thin] (2) to (3);
        \draw [edge, olive, thin] (1) to (2);

    \end{scope}
    \begin{scope}[xshift=10cm]
        \node [vertex] (1) at (-1,0) {};
        \node [vertex] (2) at (1,0) {};
        \node [vertex] (3) at (0,2) {};
        
        \draw [edge, blue, very  thick] (1) to (3);
        \draw [edge, blue, very  thick] (2) to (3);
        \draw [edge, olive, thin] (1) to (2);
    \end{scope}
    \end{tikzpicture}
\end{center}
This implies that an input $(V,E,N)$ to the SP for line graphs
of bipartite graphs is a yes-instance if and only if its
set of unordered pairs of vertices $uv$ with $u\neq v$ admit
a $\{b,g,r\}$-colouring satisfying:
\begin{itemize}
    \item if $uv\in N$, then $uv$ is coloured with $r$ (red),
    \item if $uv\in E$, then $uv$ is coloured with $b$ (blue) or $g$ (green), and
    \item for any three different vertices $u,v,w$ the triangle of unordered
    pairs $uv,vw,wu$ is either monochromatic $ggg$, $bbb$, $rrr$, or
    of the form $rrb$, $rrg$, or $bgr$.
\end{itemize}
This characterization yields a natural reduction of the SP for 
line graphs of bipartite graphs to the CSP of a finite
structure $A$ --- this is another example of a reduction 
to the finite introduced in Section~\ref{sect:split}. 

The vertex set of $A$ is $\{b,g,r\}$, the signature
contains two unary predicates $U_N$ and $U_E$, and one ternary symmetric
predicate $T$. The symbol $U_N$ 
denotes the relation 
$\{r\}$ in $A$, 
the symbol 
$U_E$ denotes $\{b,g\}$, and the symbol $T$ denotes $\{bbb, ggg, rrr, brr, grr, bgr\}$ (recall that this is a symmetric
relation; also note that we list precisely those triples that are not shown in the
illustration of the 2-edges colourings shown above). 

\begin{lemma}\label{lem:line-graphs-A}
    The SP for line graphs of bipartite graphs reduces in polynomial-time to $\CSP(A)$. 
\end{lemma}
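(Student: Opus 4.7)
The plan is to directly turn the colouring characterisation displayed just before the lemma into a homomorphism problem for $A$. Given an input $(V,E,N)$ to the sandwich problem for line graphs of bipartite graphs, I would construct a $\{U_N,U_E,T\}$-structure $B = B(V,E,N)$ as follows. Its vertex set is the set $P$ of unordered pairs $\{u,v\}$ of distinct vertices of $V$. The unary relations are interpreted by
\[
U_N(B) := \{\{u,v\} \in P : uv \in N\}, \qquad U_E(B) := \{\{u,v\} \in P : uv \in E\},
\]
and the ternary symmetric relation $T(B)$ consists of all (ordered) triples obtained from the three-element sets $\big\{\{u,v\},\{v,w\},\{u,w\}\big\}$ ranging over all triples of pairwise distinct $u,v,w \in V$. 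This construction is clearly computable in time polynomial in $|V|$, since $|P| = O(|V|^2)$ and $|T(B)| = O(|V|^3)$.

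Next I would check the two directions of correctness by unfolding the definition of homomorphism $B \to A$. A homomorphism $h \colon B \to A$ is precisely a map $h \colon P \to \{b,g,r\}$ such that: (i) every pair in $N$ is sent to $r$ (because $U_N(A) = \{r\}$); (ii) every pair in $E$ is sent to $b$ or $g$ (because $U_E(A) = \{b,g\}$); (iii) for every three distinct vertices $u,v,w \in V$ the colour triple $\big(h(\{u,v\}),h(\{v,w\}),h(\{u,w\})\big)$ lies in $T(A) = \{bbb,ggg,rrr,brr,grr,bgr\}$, up to permutation. This is exactly the $\{b,g,r\}$-colouring of unordered pairs described before the lemma statement. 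Hence $B \to A$ if and only if $(V,E,N)$ admits such a colouring, which by the cited characterisation from~\cite{bokPREPATATION} happens if and only if $(V,E,N)$ is a yes-instance of the SP for line graphs of bipartite graphs.

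The only subtle point is making sure that $T(A)$ is the correct list: because $T$ is interpreted as a symmetric ternary relation, the allowed unordered triples are exactly those \emph{not} depicted in the four forbidden figures (so $rbb$, $rgg$, $rrb$, $rrg$, $bgr$, together with the three monochromatic triples); this matches the list in the definition of $A$. Once this bookkeeping is confirmed, the polynomial-time reduction $(V,E,N) \mapsto B(V,E,N)$ is immediate, which is the claim of the lemma.
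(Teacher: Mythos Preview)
Your argument is essentially identical to the paper's: you build the same $\{U_N,U_E,T\}$-structure on unordered pairs (the paper calls it $I$, you call it $B$) and appeal to the colouring characterisation stated just before the lemma. One small slip in your final bookkeeping sentence: the allowed triples should be $rrb, rrg, bgr$ together with the three monochromatic ones (six in total), whereas you also list $rbb$ and $rgg$, which are precisely the first two forbidden patterns; since you already stated $T(A)$ correctly earlier this does not affect the proof, but you should fix that parenthetical.
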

\begin{proof}
    Let $\tau:=\{U_N,U_E,T\}$ be the signature of $A$ defined above.
    On an input $(V,E,N)$ to the SP of line graphs of bipartite graphs
    consider the finite structure $I$ whose vertices are unordered pairs
    $uv$ for all $u,v\in V$ with $u\neq v$. For every three pairwise
    distinct unordered pairs $u_1v_1,u_2v_2,u_3v_3$ spanning exactly 
    three vertices, add the unordered triple $u_1v_1~u_2v_2~u_3v_3$ to 
    the relation $T(I)$. Finally, if $uv\in N$ colour the vertex $uv$ in $I$
    with $U_N$, and if $uv\in E$, colour $uv$ in $I$ with $U_E$.
    It follows from the arguments above that 
    $I\to A$ if and only if $(V,E,N)$ is a yes-instance of the SP
    for line graphs of bipartite graphs.
\end{proof}

As we will see below, the CSP of $A$ is $\NP$-complete, so Lemma~\ref{lem:line-graphs-A}
does not imply that the SP for line graphs of bipartite graphs is polynomial-time solvable. 
We now show that $\GR^\ast$ pp-constructs $A$, and since the class of
line graphs of bipartite multigraphs equals the class of co-blow-ups of line graphs
of bipartite graphs, we conclude via Proposition~\ref{prop:C'-C} that the SP for line graphs
of bipartite multigraphs is $\NP$-complete.

We first notice that the inequality relation $\neq$ has a primitive positive definition in $\GR^\ast$. This
follows from the fact that $\GR$ is point-incomparable, i.e., for any two distinct vertices $u$ and $v$ of
$\GR$ there is a vertex $w$ adjacent to $u$ and not adjacent to $v$, and vice versa. Hence, the formula
$\exists w (B(w,x)\land R(w,y))$ pp-defines inequality in $\GR^\ast$. Now, we consider the $4$-ary relation
$S(x_1,x_2,y_1,y_2)$ defined by ``either $x_1x_2$ and $y_1y_2$ are both vertical edges,
or both are horizontal edges, or $x_1,x_2$ and $y_1,y_2$ are pairs on non-equal non-adjacent
vertices''. To see that $S$ admits a primitive positive definition in $\GR^\ast$, we consider the pp-formula
associated to the following graph, where the dotted edges indicate the inequality relation, and
black filled vertices correspond to existentially quantified variables.
\begin{center}
    \begin{tikzpicture}[scale = 0.8]

    \begin{scope}[xshift=5cm]
        \node [vertex, label = left:{$x_1$}] (x1) at (-2,1) {};
        \node [vertex, label = left:{$x_2$}] (x2) at (-2,-1) {};
        \node [vertex, label = right:{$y_1$}] (y1) at (2,1) {};
        \node [vertex, label = right:{$y_2$}]  (y2) at (2,-1) {};

        \node [vertex, fill = black, label = left:{$w_1$}] (1) at (-1,0) {};
        \node [vertex, fill = black, label = above:{$w_2$}] (2) at (0,1) {};
        \node [vertex, fill = black, label = below:{$w_3$}]  (3) at (0,-1) {};
        \node [vertex, fill = black, label = right:{$w_4$}]  (4) at (1,0) {};

        \foreach \from/\to in {x1/1, x2/1, 1/2, 1/3, 3/4, 2/4, 4/y1, 4/y2} 
        \draw [edge, blue] (\from) to (\to);

        \foreach \from/\to in {x1/2, 2/y1, x2/3, 3/y2} 
        \draw [edge, red, dashed] (\from) to (\to);
        \draw [edge, dotted] (x1) to  (x2);
        \draw [edge, dotted] (y1) to  (y2);
        \draw [edge, dotted] (2) to  (3);
        
    \end{scope}
    \end{tikzpicture}
\end{center}
\begin{align*}
    \gamma(x_1,x_2,y_1,y_2): = \exists w_1,w_2,w_3,w_4\; \big ( & B(w_1,w_2)\land B(w_2,w_4)\land B(w_4,w_3)\land B(w_3,w_1)\\
    \land\; & B(x_1,w_1)\land B(x_2,w_1) \land R(x_1, w_2) \land R(x_2,w_3) \\
    \land\; & B(y_1,w_4)\land B(y_2,w_4) \land R(y_1, w_2) \land R(y_2,w_3) \\
    \land\; & x_1\neq x_2\land y_1\neq y_2 \land w_2\neq w_3 \big ).
\end{align*}
In Figure~\ref{fig:gamma} we present a depiction showing that if $x_1x_2$ and $y_1y_2$ are
both vertical edges, then $\GR^\ast\models \gamma(x_1,x_2,y_1,y_2)$, and similarly, when $x_1,x_2$ and 
$y_1,y_2$ are pairs of non-equal non adjacent vertices. It follows that 
if $(u_1,u_2,v_1,v_2)$ belongs to the relation $S$ defined above, then 
$\GR^\ast\models \gamma(u_1,u_2,v_1,v_2)$; the proof of the converse implication is straightforward. 

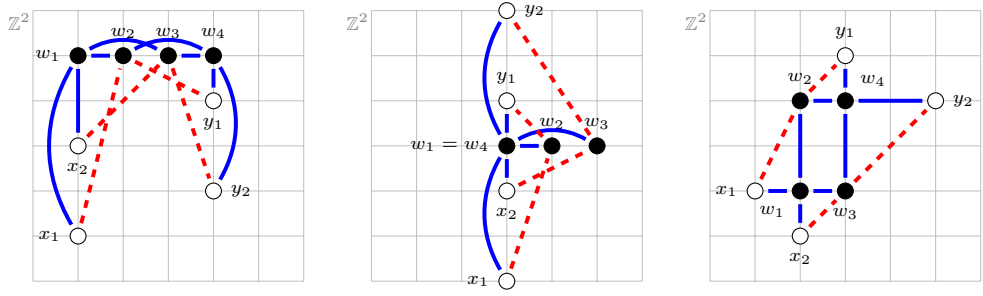
\begin{figure}[ht!]
\centering
\begin{tikzpicture}

  \begin{scope}[scale = 0.6]    
    \draw[step=1cm, lightgray, very thin] (-3,-3) grid (3,3);
    
    \node [vertex, label = left:{\scriptsize $x_1$}] (x1) at (-2,-2) {};
    \node [vertex, label = below:{\scriptsize $x_2$}] (x2) at (-2,0) {};
    \node [vertex, fill = black, label = left:{\scriptsize $w_1$}] (w1) at (-2,2) {};
    \node [vertex, fill = black, label = above:{\scriptsize $w_2$}] (w2) at (-1,2) {};
    \node [vertex, fill = black, label = above:{\scriptsize $w_3$}] (w3) at (0,2) {};
    \node [vertex,fill = black,  label = above:{\scriptsize $w_4$}] (w4) at (1,2) {};
    \node [vertex, label = below:{\scriptsize $y_1$}] (y1) at (1,1) {};
    \node [vertex, label = right:{\scriptsize $y_2$}] (y2) at (1,-1) {};
    
    \node at (-3.3,2.75) {\scriptsize \color{gray} $\mathbb Z^2$};
      
    \foreach \from/\to in  {x2/w1, w1/w2,  w3/w4, w4/y1} 
        \draw [edge, blue] (\from) to (\to);
        \draw [edge, blue] (w1) to [bend left] (w3);
        \draw [edge, blue] (w2) to [bend left] (w4);
        \draw [edge, blue] (x1) to [bend left] (w1);
        \draw [edge, blue] (y2) to [bend right] (w4);

        \foreach \from/\to in {x1/w2, w2/y1, x2/w3, w3/y2} 
        \draw [edge, red, dashed] (\from) to (\to);
  \end{scope}

   \begin{scope}[scale = 0.6, xshift = 7.5cm]    
    \draw[step=1cm, lightgray, very thin] (-3,-3) grid (3,3);
    
    \node [vertex, label = left:{\scriptsize $x_1$}] (x1) at (0,-3) {};
    \node [vertex, label = below:{\scriptsize $x_2$}] (x2) at (0,-1) {};
    \node [vertex, fill = black, label = left:{\scriptsize $w_1 = w_4$}] (w1) at (0,0) {};
    \node [vertex, fill = black, label = above:{\scriptsize $w_2$}] (w2) at (1,0) {};
    \node [vertex, fill = black, label = above:{\scriptsize $w_3$}] (w3) at (2,0) {};
    \node [vertex, label = above:{\scriptsize $y_1$}] (y1) at (0,1) {};
    \node [vertex, label = right:{\scriptsize $y_2$}] (y2) at (0,3) {};
    
    \node at (-3.3,2.75) {\scriptsize \color{gray} $\mathbb Z^2$};
      
    \foreach \from/\to in  {x2/w1, w1/w2,  w1/y1} 
        \draw [edge, blue] (\from) to (\to);
        \draw [edge, blue] (w1) to [bend left] (w3);
        \draw [edge, blue] (x1) to [bend left] (w1);
        \draw [edge, blue] (y2) to [bend right] (w1);

        \foreach \from/\to in {x1/w2, w2/y1, x2/w3, w3/y2} 
        \draw [edge, red, dashed] (\from) to (\to);
  \end{scope}

    \begin{scope}[scale = 0.6, xshift = 15cm]    
    \draw[step=1cm, lightgray, very thin] (-3,-3) grid (3,3);
    
    \node [vertex, label = left:{\scriptsize $x_1$}] (x1) at (-2,-1) {};
    \node [vertex, label = below:{\scriptsize $x_2$}] (x2) at (-1,-2) {};
    \node [vertex, fill = black, label = 225:{\scriptsize $w_1$}] (w1) at (-1,-1) {};
    \node [vertex, fill = black, label = above:{\scriptsize $w_2$}] (w2) at (-1,1) {};
    \node [vertex, fill = black, label = below:{\scriptsize $w_3$}] (w3) at (0,-1) {};
    \node [vertex,fill = black,  label = 45:{\scriptsize $w_4$}] (w4) at (0,1) {};
    \node [vertex, label = above:{\scriptsize $y_1$}] (y1) at (0,2) {};
    \node [vertex, label = right:{\scriptsize $y_2$}] (y2) at (2,1) {};
    
    \node at (-3.3,2.75) {\scriptsize \color{gray} $\mathbb Z^2$};
      
    \foreach \from/\to in  {x2/w1, w1/w2,  w3/w4, w4/y1, w1/w3, w2/w4, x1/w1, y2/w4} 
        \draw [edge, blue] (\from) to (\to);

        \foreach \from/\to in {x1/w2, w2/y1, x2/w3, w3/y2} 
        \draw [edge, red, dashed] (\from) to (\to);
  \end{scope}

\end{tikzpicture}
\caption{To the left, an illustration of witnesses for $w_1,w_2,w_3,w_4$ showing that
if the edges $x_1x_2$ and $x_1x_2$ are both vertical  laying in different vertical lines, then
$\GR^\ast\models \gamma(x_1,x_2,y_1,y_2)$. Similarly, in the middle an illustration that
$\GR^\ast\models \gamma(x_1,x_2,y_1,y_2)$ if $x_1x_2$ and $y_1y_2$ are vertical edges 
on the same vertical line. (By rotating the gadgets ninety degrees, we obtain similar pictures
showing that $\GR^\ast\models\gamma(x_1,x_2,y_1,y_2)$ when $x_1x_2$ and $y_1y_2$ are both
horizontal blue edges). To the right, a depiction showing that $\GR^\ast\models \gamma(x_1,x_2,y_1,y_2)$
if $x_1,x_2$ and $y_2,y_2$ are pairs of non-equal non-adjacent vertices.}
\label{fig:gamma}
\end{figure}

\begin{lemma}\label{lem:GR-pp-A}
    The $2$-edge-coloured graph $\GR^\ast$ pp-constructs the finite
    structure $A$.
\end{lemma}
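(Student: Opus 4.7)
The plan is to give a primitive positive construction of $A$ in $\GR^\ast$ of dimension $2$, reusing the pp-definitions of $\neq$ and of the four-ary relation $S$ (via the formula $\gamma$) already established above. Elements $(x_1,x_2) \in V(\GR)^2$ of the resulting pp-power $\Pi(\GR^\ast)$ should be thought of as encoding the \emph{type} of the grid pair $x_1,x_2$: blue if $x_1x_2$ is a vertical edge of $\GR$, green if it is a horizontal edge, and red if $x_1,x_2$ are distinct and non-adjacent; these three types will correspond to $b, g, r$ in $A$.

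The pp-definitions I would use are
\begin{align*}
    \delta_{U_N}(x_1,x_2) &:= R(x_1,x_2), \\
    \delta_{U_E}(x_1,x_2) &:= B(x_1,x_2), \\
    \delta_T(x_1,x_2,y_1,y_2,z_1,z_2) &:= \exists u,v,w\, \bigl( u \neq v \land v \neq w \land u \neq w \\
        &\qquad {}\land \gamma(x_1,x_2,u,v) \land \gamma(y_1,y_2,v,w) \land \gamma(z_1,z_2,u,w) \bigr),
\end{align*}
so $\delta_T$ is intended to hold exactly when the three input pairs share the types of the three edges of some triangle with distinct corners $u,v,w$ in $\GR^\ast$. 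It then remains to show that $\Pi(\GR^\ast)$ is homomorphically equivalent to $A$. For $k\colon A \to \Pi(\GR^\ast)$ I would pick concrete representatives, say $k(b) := ((0,0),(0,1))$, $k(g) := ((0,0),(1,0))$, $k(r) := ((0,0),(2,2))$, verify that $U_N(k(r))$ and $U_E(k(b)), U_E(k(g))$ hold, and for each of the six triples in $T(A)$ exhibit explicit witnesses $u,v,w$ (three collinear column-points for $bbb$; three collinear row-points for $ggg$; three pairwise diagonal points for $rrr$; an L-shape $(0,0),(0,1),(1,0)$ for $bgr$; a vertical edge together with a far-away diagonal point for $brr$; etc.). For $h\colon \Pi(\GR^\ast) \to A$ I would send each pair to its type, and observe that preservation of $U_N$ and $U_E$ is immediate from the definitions.

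The main obstacle is to verify that $\delta_T$ does not introduce spurious triples, i.e., that whenever $\delta_T(\bar x, \bar y, \bar z)$ holds, the triple $(h(\bar x), h(\bar y), h(\bar z))$ lies in $T(A) = \{bbb, ggg, rrr, brr, grr, bgr\}$. The geometry of $\GR$ does this work: if three distinct grid points $u,v,w$ satisfy that both $uv$ and $vw$ are vertical edges, then $u,v,w$ share a common column, forcing $uw$ to be vertical as well, so type triples $bbr$ and $bbg$ cannot arise; the symmetric argument with rows rules out $ggr$ and $bgg$. A short case analysis over the remaining configurations (mixed adjacent/non-adjacent pairs) confirms that every triangle of $\GR^\ast$ realizes exactly one of the six allowed types in $T(A)$, completing the proof.
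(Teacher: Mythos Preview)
Your proposal is correct and follows essentially the same approach as the paper: a $2$-dimensional pp-power with $\delta_{U_N}=R$, $\delta_{U_E}=B$, and $\delta_T$ built from three instances of $S$ (i.e., $\gamma$) on the sides of an existentially quantified triangle $u,v,w$, followed by the homomorphic-equivalence check via the type map. The only cosmetic differences are that the paper adds a domain formula $x\neq y$ (which you omit, harmlessly, since $\gamma$ already forces the relevant inequalities), that your conjuncts $u\neq v$, $v\neq w$, $u\neq w$ are redundant for the same reason, and that your $b$/$g$ assignment is swapped relative to one of the paper's two maps---immaterial since $T(A)$ is invariant under exchanging $b$ and $g$.
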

\begin{proof}
    We consider a $2$-dimensional pp-construction. The domain
    formula $\delta_D(x,y)$ is $x\neq y$ which we already argued
    is pp-definable in $\GR^\ast$. The binary pp-definition for $U_N$
    is simply $\delta_{U_N}:= R(x,y)$, and the binary pp-definition for $U_E$ is
    $\delta_{U_E} := B(x,y)$.
    Finally, the $6$-ary pp-definition for $T$ is
     \[
        \delta_T(x_1,y_1,x_2,y_2,x_3,y_3):= \exists  u,v,w\; \big (S(x_1,y_1,u,v) \land S(x_2,y_2, v,w)\land S(x_3,y_3,w,u) \big ).
     \]  

    First observe that the map $g\mapsto (0,0)(0,1)$, $b\mapsto (0,0)(1,0)$, $r\mapsto (0,1)(1,0)$
    defines a homomorphism from $(A,U_N,U_E,T)$ to $\Pi(\GR^\ast)$. Now, to see that
    $\Pi(\GR^{\ast})\to (A,U_N,U_E,T)$, consider the following map:
    a pair $uv \in N$ is mapped to $r$, a pair $uv \in E$ on a vertical line is mapped to $b$, and a pair $uv \in E$ on a horizontal line is mapped to $g$.
\end{proof}

\begin{proposition}\label{prop:line-multi-bip}
    The sandwich problem for line graphs of bipartite multigraphs is $\NP$-complete.
\end{proposition}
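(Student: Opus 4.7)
The plan is to establish membership in NP, and then to derive NP-hardness by chaining two reductions: the co-blow-up transfer given by Proposition~\ref{prop:C'-C}, followed by the pp-construction of Lemma~\ref{lem:GR-pp-A}. The remaining step, which will be the main obstacle, is to prove that the finite-template CSP of $A$ is NP-hard.

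Membership in NP follows from the fact that recognition of line graphs of multigraphs is polynomial-time solvable (the class has only finitely many minimal obstructions), and restricting to the bipartite case does not change this. Hence, for any candidate sandwich edge set $E'$ with $E\subseteq E'\subseteq V^2\setminus N$, verifying that $(V,E')$ is a line graph of a bipartite multigraph takes polynomial time.

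For NP-hardness I would apply Proposition~\ref{prop:C'-C} with $\mathcal C$ the class of line graphs of bipartite graphs and $\mathcal C'$ its class of co-blow-ups. Parallel edges in a bipartite multigraph correspond precisely to co-twin vertices in the line graph (and conversely), so $\mathcal C'$ equals the class of line graphs of bipartite multigraphs. The grid graph $\GR$ is a universal element of $\mathcal C$ by the characterization recalled at the start of the subsection. A short direct verification shows that $\GR$ is point-incomparable: for any two distinct grid points $p,q\in\mathbb Z^2$, one can always find a point $w$ lying on a row or column of $p$ but on neither the row nor the column of $q$, which is then a neighbour of $p$ but not of $q$, and symmetrically for $q$. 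Hence Proposition~\ref{prop:C'-C} reduces the task to showing that $\CSP(\GR^\ast)$ is NP-hard. By Lemma~\ref{lem:GR-pp-A} together with Lemma~\ref{lem:pp-construction-reduction}, there is a polynomial-time reduction from $\CSP(A)$ to $\CSP(\GR^\ast)$, so it suffices to prove that $\CSP(A)$ is NP-hard.

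The main obstacle is thus NP-hardness of the three-element CSP of $A$. I would attempt a direct polynomial-time reduction from a classical NP-hard Boolean-domain CSP such as positive NAE-$3$-SAT or positive $1$-in-$3$-SAT, using $U_N$ to fix auxiliary variables to the value $r$, $U_E$ to confine others to $\{b,g\}$, and composing ternary $T$-constraints with additional free auxiliary variables to encode each input clause. The technical subtlety is that $T$ restricted to three $U_E$-variables enforces only equality of all three, while $T$ with two $U_E$-variables and one $U_N$-variable enforces only disequality on $\{b,g\}$, so the clause gadget must route its constraint through free auxiliary variables whose values are constrained indirectly. Alternatively, one can invoke the finite-domain dichotomy (Theorem~\ref{thm:dicho}) and check by polymorphism analysis that $A$ admits no $4$-ary Siggers polymorphism, equivalently that $A$ primitively positively constructs $K_3$. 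Once NP-hardness of $\CSP(A)$ is in hand, the chain of reductions above concludes the proof.
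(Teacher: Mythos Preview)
Your proposal is correct and follows essentially the same route as the paper: apply Proposition~\ref{prop:C'-C} with $H=\GR$, then Lemma~\ref{lem:GR-pp-A} and Lemma~\ref{lem:pp-construction-reduction} to reduce to $\CSP(A)$, and finally argue that $\CSP(A)$ is NP-hard. For that last step the paper commits to your second alternative, verifying by an exhaustive computer search that $A$ has no $4$-ary Siggers polymorphism and invoking Theorem~\ref{thm:dicho}; it does not attempt a direct gadget reduction from a Boolean CSP.
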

\begin{proof}
     Since $\GR$ is a point-incomparable universal line graph, and the
     class of line graphs of bipartite multigraphs is the class of co-blow-ups of line graphs of
     bipartite graphs, it suffices to show that $\CSP(\GR^\ast)$ is $\NP$-complete (Proposition~\ref{prop:C'-C}).
    In Lemma~\ref{lem:GR-pp-A} we showed that $\GR^\ast$ pp-constructs the finite
    structure $A$, and so it suffices to show that $\CSP(A)$ is $\NP$-complete. 
    This follows from the (easy direction of the) finite-domain dichotomy theorem, Theorem~\ref{thm:dicho},
    if we verify that $A$ has no Siggers polymorphism. This was done by an exhaustive approach with a computer
    program.\footnote{The authors thank Paul Winkler for implementing this.}.
\end{proof}

\section{The Salt-free Vegan Sandwich Problem}
\label{sect:salt-free}
One of the few remaining open cases 
in a sandwich problem classification project from~\cite{alvaradoAOR280} is the complexity of
the sandwich problem for $\{I_4,P_4\}$-free graphs. In this section we prove that the homomorphism
perspective on SP yields a simple answer to this question: the SP for $\{I_4,P_4\}$-free graphs
is $\NP$-complete (Corollary~\ref{cor:perfect-Kk-free}). 

\begin{lemma}\label{lem:bounded-clique}
    Let $\mathcal C$ be a hereditary class of graphs which has the joint embedding property 
    and is preserved under blow-ups. If there is a positive integer $k\ge 3$ such that
    $K_k\in \mathcal C$, and every graph $G\in\mathcal C$ has chromatic number bounded
    by $k$, then the SP for $\mathcal C$ is $\NP$-hard. 
\end{lemma}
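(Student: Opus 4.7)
My plan is to reduce $k$-\textsc{Colouring} (which is NP-hard for $k\ge 3$) directly to $\SP(\mathcal C)$, bypassing the CSP formalism. The key structural observation I would exploit is that $\mathcal C$ automatically contains every finite complete $k$-partite graph: such a graph is a blow-up of $K_k\in \mathcal C$ (obtained by iteratively attaching twins to each of the $k$ vertices of $K_k$), and by hypothesis $\mathcal C$ is closed under blow-ups. Combined with heredity, this implies that every finite complete multipartite graph with at most $k$ parts belongs to $\mathcal C$.

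Given a graph $G=(V,E)$, the reduction outputs the instance $(V,E,\varnothing)$ of $\SP(\mathcal C)$, where all edges of $G$ are blue and there are no forbidden (red) pairs. For the forward direction, I would take a proper $k$-colouring $c\colon V\to[k]$ of $G$ and set $E'=E\cup\{uv\colon c(u)\neq c(v)\}$; then $(V,E')$ is a complete multipartite graph with at most $k$ parts, hence lies in $\mathcal C$ by the observation above, certifying that $(V,E,\varnothing)$ is a yes-instance. For the backward direction, if $(V,E,\varnothing)\in\SP(\mathcal C)$ then some $(V,E')\in\mathcal C$ satisfies $E\subseteq E'$; the chromatic-number hypothesis on $\mathcal C$ yields $\chi(V,E')\le k$, and since $G$ is a subgraph of $(V,E')$ we conclude $\chi(G)\le k$.

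The argument has no real obstacle: the heart of the proof is the one-line observation that blow-up closure together with $K_k\in\mathcal C$ forces every complete multipartite graph with at most $k$ parts into $\mathcal C$, and the chromatic bound converts the converse direction instantly. I note in passing that the JEP hypothesis is not needed for this particular reduction; it presumably enters elsewhere, ensuring via Proposition~\ref{prop:SP->CSP} that $\SP(\mathcal C)$ can additionally be realised as the CSP of a $2$-edge-coloured graph, which is the unifying viewpoint of the paper.
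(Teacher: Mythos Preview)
Your proof is correct, but it takes a different route from the paper's proof of this lemma. The paper argues via the CSP machinery: it takes a universal graph $H$ for $\mathcal C$ (this is where JEP is used), observes that $H$ is homomorphically equivalent to $K_k$ (since $K_k$ embeds into $H$, and $H\to K_k$ by compactness from the chromatic bound), concludes that $H^\ast$ pp-constructs $K_k$, and then invokes Lemma~\ref{lem:H->H*} to identify $\SP(\mathcal C)$ with $\CSP(H^\ast)$. Your direct reduction $(V,E)\mapsto (V,E,\varnothing)$ bypasses all of this and, as you correctly note, does not need JEP at all. In fact, your argument is exactly the content of the paper's next lemma, Lemma~\ref{lem:kcolred}, which the authors state immediately afterwards as a strictly more elementary and more general statement (requiring neither JEP nor blow-up closure, only that $\mathcal C$ contain all complete $k$-partite graphs and have bounded chromatic number). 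So your approach is the cleaner one for the bare hardness claim; the paper's proof of Lemma~\ref{lem:bounded-clique} is there to illustrate the pp-construction viewpoint that the section is built around.
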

\begin{proof}
    Let $H$ be a universal graph in $\mathcal C$. Notice that $H$ is homomorphically
    equivalent to $K_k$: by assumption, $K_k$ embeds into $H$. Conversely, $H$
    maps to $K_k$ by compactness, 
    because every finite subgraph $H'$ of $H$ maps homomorphically to $K_k$.
    Hence, $H^\ast$ pp-constructs $K_k$, and so $\CSP(H^\ast)$ is 
    $\NP$-hard. The claim now follows via Lemma~\ref{lem:H->H*}. 
\end{proof}

If $\mathcal C$ is a class as in Lemma~\ref{lem:bounded-clique} and $G=(V,E)$ 
is a graph, then $(V;E,\varnothing)$ is a yes-instance to the SP for $\mathcal C$
if and only if $\mathcal C$ is $k$-colourable. Actually, this simple reduction 
arising from the CSP approach to SP can be extended to other sandwich problems beyond
the scope of CSPs. Recall that a graph $G$ is  \emph{complete $k$-partite} if its vertex
set admits a partition $(V_1,\dots, V_m)$ with $m\le k$ such that
there is an edge $uv\in E(G)$ if and only if $u\in V_i$ and $v\in V_j$ for $i\neq j$;
equivalently, $G$ is a complete $k$-partite graph if it is a blow-up of a complete
graph on at most $k$ vertices.

\begin{lemma}\label{lem:kcolred}
The SP for a class $\mathcal C$ is $\NP$-hard whenever there is a 
positive integer $k$ such that 
\begin{itemize}
    \item $\chi(G)\le k$ for every $G\in \mathcal C$, and
    \item $\mathcal C$ contains all complete $k$-partite graphs.
\end{itemize}
\end{lemma}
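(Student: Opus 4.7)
The plan is to bypass the CSP framework of Lemma~\ref{lem:bounded-clique} with a direct polynomial-time reduction from $k$-\textsc{Colouring} to $\SP(\calC)$ for $k\ge 3$ (for $k \le 2$ the problem $k$-\textsc{Colouring} is in $\PO$, so no hardness is claimed in that range). On input a graph $G=(V,E)$ I would output the $2$-edge-coloured instance $(V,E,\varnothing)$, i.e., the edges of $G$ are declared \emph{required} and no pair is forbidden.

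For the forward direction, suppose $G$ admits a proper $k$-colouring with colour classes $V_1,\dots,V_k$, and let $H$ be the complete $k$-partite graph with parts $V_1,\dots,V_k$. Every edge of $G$ connects two distinct colour classes, so $E\subseteq E(H)$; by the second hypothesis $H\in\calC$. Thus $H$ witnesses that $(V,E,\varnothing)$ is a yes-instance.

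Conversely, if $(V,E,\varnothing)$ is a yes-instance, then there is an edge set $E'\supseteq E$ with $(V,E')\in\calC$. The first hypothesis yields $\chi((V,E'))\le k$, and because $G=(V,E)$ is a spanning subgraph of $(V,E')$ we conclude $\chi(G)\le k$. Combining the two directions with the fact that $k$-\textsc{Colouring} is $\NP$-complete for $k\ge 3$ establishes $\NP$-hardness of $\SP(\calC)$.

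There is no substantive obstacle here: the point is really the observation, already implicit in the proof of Lemma~\ref{lem:bounded-clique}, that setting $N=\varnothing$ turns the sandwich problem into the question of whether $G$ can be completed to a graph in $\calC$. The chromatic-number bound together with the availability of all complete $k$-partite graphs collapses this question to $k$-colourability of $G$, making the hypothesis of preservation under blow-ups (used in Lemma~\ref{lem:bounded-clique}) unnecessary.
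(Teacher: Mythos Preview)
Your proof is correct and follows exactly the paper's approach: the paper's proof is the single line ``The map $(V,E)\mapsto(V,E,\varnothing)$ is a reduction from $\CSP(K_k)$ to $\SP(\mathcal C)$'', and you have spelled out the two directions of that reduction. Your remark about $k\le 2$ is apt; the lemma as stated (and the paper's one-line proof) tacitly relies on $k\ge 3$, which is the only case in which it is applied later.
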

\begin{proof}
    The map $(V,E)\mapsto(V,E,\varnothing)$ is a reduction from $\CSP(K_k)$ to $\SP(\mathcal C)$.
\end{proof}

\begin{theorem}\label{thm:perfect}
    Let $\mathcal F$ be a set of non-complete point-determining graphs such that
    every $\mathcal F$-free graph is a perfect graph. Then for every positive integer
    $k\ge 4$, the sandwich problem for $(\mathcal F\cup\{K_k\})$-free graphs is
    $\NP$-hard.
\end{theorem}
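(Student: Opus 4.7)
The plan is to apply Lemma~\ref{lem:kcolred} with the integer $k' := k-1 \ge 3$, showing that the class $\mathcal C$ of $(\mathcal F \cup \{K_k\})$-free graphs satisfies both hypotheses of that lemma.

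First, I would verify the chromatic number bound. Every graph $G \in \mathcal C$ is $\mathcal F$-free, hence perfect by assumption on $\mathcal F$, and is also $K_k$-free. Perfection gives $\chi(G) = \omega(G)$, and the $K_k$-freeness gives $\omega(G) \le k-1$. Thus $\chi(G) \le k-1 = k'$ for every $G \in \mathcal C$, which is the first hypothesis of Lemma~\ref{lem:kcolred}.

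Next, I would show that every complete $k'$-partite graph belongs to $\mathcal C$. Such a graph $G$ is a blow-up of a complete graph on at most $k-1$ vertices, so $\omega(G) \le k-1$ and in particular $G$ is $K_k$-free. The key observation is that any two vertices lying in the same part of $G$ are twins; consequently, any point-determining induced subgraph of $G$ must contain at most one vertex from each part, and hence must be a complete graph. Since every member of $\mathcal F$ is point-determining but non-complete, no member of $\mathcal F$ can be an induced subgraph of $G$. Therefore $G$ is $\mathcal F$-free as well, i.e., $G \in \mathcal C$.

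Having verified both hypotheses, Lemma~\ref{lem:kcolred} immediately yields that the sandwich problem for $\mathcal C$ is $\NP$-hard (the reduction being the trivial map $(V,E) \mapsto (V,E,\varnothing)$ from $k'$-colouring, which is $\NP$-hard since $k' \ge 3$). The only delicate point is the verification that complete multipartite graphs avoid all forbidden subgraphs in $\mathcal F$; this is where the twin-free assumption on members of $\mathcal F$ is essential, and without the ``non-complete'' qualifier one could not rule out the trivial obstruction $\mathcal F = \{K_1\}$ that makes the class empty. Everything else is a direct consequence of the fact that perfect $K_k$-free graphs are $(k-1)$-colourable.
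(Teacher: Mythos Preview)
Your proof is correct and follows essentially the same approach as the paper: both reduce from $(k-1)$-colouring via Lemma~\ref{lem:kcolred}, using perfection and $K_k$-freeness to bound the chromatic number by $k-1$, and using the point-determining non-complete hypothesis on $\mathcal F$ to place every complete $(k-1)$-partite graph inside the class. Your write-up is in fact slightly tighter than the paper's, which states the chromatic bound as $k$ rather than $k-1$; your version with $k'=k-1$ is the sharp one and matches the paper's actual reduction.
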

\begin{proof}
    Since every $K_k$-free perfect graph has chromatic number at most
    $k$, it follows that every $(\mathcal F\cup\{K_k\})$-free graph is
    $k$-colourable. Since every graph in $\mathcal F$ is point-determining, 
    every complete $(k-1)$-partite graph is $(\mathcal F\cup\{K_k\})$-free.
    It readily follows that a graph $G= (V,E)$ is $k$-colourable if and only if
    $(V,E,\varnothing)$ is a yes-instance to the SP for $(\mathcal F\cup\{K_k\})$-free
    graphs. The hardness of the problem now follows because $k-1\ge 3$. 
\end{proof}

\begin{corollary}\label{cor:perfect-Kk-free}
    For every positive integer $k$ one of the following statements holds:
    \begin{itemize}
        \item $k\le 3$, and in this case the SP for $\{P_4,K_k\}$-free graphs, the SP for
        $\{P_4,I_k\}$-free graphs, 
        and the SP for $K_k$-free perfect graphs are polynomial-time solvable;
        \item otherwise, $k\ge 4$, and in this case the SP for
        $\{P_4,K_k\}$-free graphs, the SP for
        $\{P_4,I_k\}$-free graphs, and the SP for $K_k$-free perfect graphs are $\NP$-complete.
    \end{itemize}
\end{corollary}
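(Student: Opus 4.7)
The plan is to obtain Corollary~\ref{cor:perfect-Kk-free} as a direct corollary of Theorem~\ref{thm:perfect} on the hardness side, together with a simple reduction via complementation, and to dispatch the tractable side by a triviality together with one citation from the literature.

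For the three NP-hardness claims in the range $k \geq 4$, I would invoke Theorem~\ref{thm:perfect} with three different choices of $\mathcal F$. For $\{P_4, K_k\}$-free graphs take $\mathcal F = \{P_4\}$: the graph $P_4$ is not complete, is point-determining (the four vertices of a path have pairwise distinct neighbourhoods), and $P_4$-free graphs are exactly cographs, which are perfect. For $K_k$-free perfect graphs take $\mathcal F$ to be the (infinite) family of all odd holes $C_{2n+1}$ with $n \geq 2$ together with their complements; the Strong Perfect Graph Theorem tells us that $\mathcal F$-free graphs are precisely the perfect graphs, every member of $\mathcal F$ is obviously non-complete, and each is point-determining because distinct vertices of an odd cycle of length at least $5$ have distinct two-element neighbourhoods (and complementation preserves point-determinacy). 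For $\{P_4, I_k\}$-free graphs I would observe that the map $(V,E,N) \mapsto (V,N,E)$ is a polynomial-time reduction from the SP of any class $\calC$ to the SP of its complement class $\{\overline G : G \in \calC\}$; since $\overline{P_4} = P_4$ and $\overline{I_k} = K_k$, this reduces the problem to the case already handled.

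For the tractable direction $k \leq 3$, the cases $k=1$ and $k=2$ are essentially vacuous ($K_1$-free means the empty graph, and $K_2$-free means edgeless graphs, so the SP is trivial in each of the three listed classes). For $k = 3$, the class of $K_3$-free perfect graphs coincides with the class of bipartite graphs (since for such a graph $\chi = \omega \leq 2$), and the SP for bipartite graphs is polynomial-time solvable by~\cite{golumbicJA19}; the classes of $\{P_4,K_3\}$-free graphs and $\{P_4,I_3\}$-free graphs are tightly structured subclasses (disjoint unions of complete bipartite graphs, and their complements) whose SPs are polynomial-time solvable, which can either be checked directly via the structural description or read off from the cited classifications in Example~\ref{ex:literature}.

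The only step that requires a moment of care — and which I expect to be the main obstacle — is verifying the hypotheses of Theorem~\ref{thm:perfect} for the infinite family $\mathcal F$ of odd holes and odd antiholes: concretely, that Theorem~\ref{thm:perfect} applies for infinite $\mathcal F$ (its proof does not use finiteness), and that every odd hole and odd antihole is point-determining and non-complete. Everything else is a straightforward bookkeeping step once the complementation reduction is in hand.
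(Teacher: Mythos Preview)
Your approach matches the paper's almost exactly: apply Theorem~\ref{thm:perfect} with $\mathcal F=\{P_4\}$ for the cograph case and with $\mathcal F$ equal to the odd holes and odd antiholes for the perfect case (the paper already observes that these are point-determining), handle $\{P_4,I_k\}$ by complementation, and dispatch $k\le 3$ by the structural descriptions you indicate.

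There is, however, one genuine omission. The corollary asserts NP-\emph{completeness}, not just NP-hardness, and your plan never addresses membership in NP. For $\{P_4,K_k\}$-free and $\{P_4,I_k\}$-free graphs this is immediate (finite forbidden sets), but for $K_k$-free perfect graphs it is not: you need that perfect graphs can be recognised in polynomial time, which is a deep result relying on the Strong Perfect Graph Theorem and the recognition algorithm of Chudnovsky, Cornu\'ejols, Liu, Seymour, and Vu\v{s}kovi\'c. The paper cites this explicitly; your proposal should too, since without it the ``NP-complete'' conclusion for the perfect-graph case is unsupported.
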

\begin{proof}
    By going to the complement, the statements about the SP for $\{P_4,K_k\}$-free
    graphs imply the statement about the SP for $\{P_4,I_k\}$-free graphs, so we only consider
    the SP for $\{P_4,K_k\}$-free graphs and the SP for $K_k$-free perfect graphs. 
    Containment in NP follows because $K_k$-free perfect graphs and $\{P_4,K_k\}$-free graphs
    can be recognized in polynomial-time (a polynomial-time algorithm to test for perfect graphs follows from~\cite{PerfectAlg,StrongPerfect}). 
    The hardness claim in the second item follows from Theorem~\ref{thm:perfect}, since $P_4$-free graphs
    are easily seen to be perfect.
    The tractable cases are trivial for $k \le 2$. For $k = 3$, 
    the SP for $K_3$-free perfect graphs is just the SP for
    bipartite graphs (which is in P). The SP for $\{P_4,K_3\}$-free
    graphs is the SP for the class of disjoint unions of
    complete bipartite graphs, and thus also polynomial-time solvable.
\end{proof}

\section{The Gy\'arf\'as--Sumner Sandwich Problem}
\label{sect:Gs--SP}

The Gy\'arf\'as--Sumner conjecture (see, e.g.,\cite{CHUDNOVSKY201411}) asserts that for every tree
$T$ there is a function $f\colon\mathbb Z^+\to \mathbb Z^+$ such that
the chromatic number of a $T$-free graph $G$ is bounded by
$f(\omega(G))$, where $\omega(G)$ is the size of a maximum complete subgraph of $G$.

The Brakensiek--Gurusuwami conjecture 
is a hardness conjecture for 
so-called \emph{promise CSPs}. A promise CSP is given by a pair of relational structures $(B,C)$
with a homomorphism from $B$ to $C$. 
The task is to decide whether a given finite structure $A$ has a homomorphism to $B$, or not even a homomorphism to $C$. 
If neither of the two cases applies, the algorithm can answer arbitrarily. This computational problem is denoted by $\PCSP(B,C)$. 
Brakensiek and Gurusuwami~\cite{BrakensiekGuruswami18} conjectured that $\PCSP(K_3,K_k)$ is NP-hard for every $k \geq 3$.
The two conjectures together imply the following one.

\begin{conjecture}[The Gy\'arf\'as--Sumner Sandwich Problem]\label{conj:TKk}
    For every positive integer $k\ge 4$ and every (possibly infinite) set of non-star trees
    $\mathcal T$, 
    the sandwich problem for $(\mathcal T\cup\{K_k\})$-free graphs is $\NP$-hard.
\end{conjecture}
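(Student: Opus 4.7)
The plan is to combine the two hypotheses via a direct edge-addition reduction in the spirit of Lemma~\ref{lem:kcolred}, upgraded from a CSP to a PCSP reduction. First I would fix any $T \in \mathcal T$ and apply Gy\'arf\'as--Sumner to $T$ to obtain a bounding function $f_T$; setting $c:=f_T(k-1)$, every graph in the class $(\mathcal T\cup\{K_k\})$-free is $T$-free with clique number at most $k-1$, and therefore has chromatic number at most $c$. Since $c \ge k-1 \ge 3$, the Brakensiek--Guruswami conjecture applies to $\PCSP(K_3,K_c)$.

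Next I would verify that every complete multipartite graph lies in $(\mathcal T\cup\{K_k\})$-free. The key observation is that induced subgraphs of complete multipartite graphs are themselves complete multipartite, so a tree $T'$ that embeds as an induced subgraph of one must itself be complete multipartite; but the only complete multipartite trees are stars (indeed, $K_{m,n}$ is a tree iff $(m-1)(n-1)=0$, and any complete $p$-partite graph with $p\ge 3$ contains a triangle and so is not a tree). Thus no non-star tree embeds into any complete multipartite graph, and since $k\ge 4$ every complete $3$-partite graph belongs to the class.

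With these two ingredients the map $(V,E) \mapsto (V,E,\varnothing)$ is a reduction from $\PCSP(K_3,K_c)$ to $\SP((\mathcal T\cup\{K_k\})\text{-free})$: on a ``yes'' input $(V,E)\to K_3$, a $3$-colouring partitions $V$ into three independent sets whose complete join is a complete $3$-partite graph in the class containing $(V,E)$ as a spanning subgraph, so $(V,E,\varnothing)$ is a yes-instance of the SP; on a ``no'' input with $\chi(V,E)>c$, no edge-superset $(V,E')$ has chromatic number at most $c$, so $(V,E,\varnothing)$ is a no-instance. I do not anticipate a technical obstacle beyond the correct pairing of the two conjectures: Gy\'arf\'as--Sumner bounds the class from above so that chromatically large inputs cannot be sandwiched into it, while Brakensiek--Guruswami supplies hardness even when the ``yes'' side of the promise is only $3$-colourability.
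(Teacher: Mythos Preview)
Your proposal is correct and follows essentially the same approach as the paper: use Gy\'arf\'as--Sumner to bound the chromatic number of the class, observe that complete multipartite graphs of small width lie in the class because non-star trees are not complete multipartite, and conclude that $(V,E)\mapsto(V,E,\varnothing)$ is a reduction from an approximate colouring PCSP. The only cosmetic difference is that you reduce from $\PCSP(K_3,K_c)$ (invoking the Brakensiek--Guruswami conjecture exactly as stated), whereas the paper reduces from $\PCSP(K_{k-1},K_N)$; both choices work, and yours avoids the extra step of noting that hardness of $\PCSP(K_3,K_N)$ implies hardness of $\PCSP(K_{k-1},K_N)$. One small wording issue: your sentence ``every complete multipartite graph lies in $(\mathcal T\cup\{K_k\})$-free'' is literally false (complete $k$-partite graphs contain $K_k$), but your actual argument correctly restricts to complete $3$-partite graphs, which is all you need.
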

\begin{proof}[Proof of ``GS-conjecture + BG-conjecture $\implies$ Conjecture~\ref{conj:TKk}'']
    By the GS-conjecture there is a positive integer $N$ such that
    every $(\mathcal T\cup \{K_k\})$-free graph is $N$-colourable. Since 
    no tree $T$ in $\mathcal T$ is a star,
    every complete $(k-1)$-partite graph is $(\mathcal T\cup \{K_k\})$-free.
    Therefore, the reduction $(V,E)\mapsto (V,E,\varnothing)$ satisfies the
    following: 
    \begin{itemize}
        \item if $(V,E)$ is $(k-1)$-colourable, then $(V,E,\varnothing)$ is a 
        yes-instance to the SP for $\{T,K_k\}$-free graphs, and
        \item if $(V,E)$ is not $N$-colourable, then $(V,E,\varnothing)$ is
        a no-instance to the SP for $(\mathcal T\cup \{K_k\})$-free graphs. 
    \end{itemize}
    Therefore, the reduction $(V,E)\mapsto (V,E,\varnothing)$ is a
    valid reduction from $\PCSP(K_{k-1},K_N)$ to the SP for
    $(\mathcal T\cup \{K_k\})$-free-graphs. Therefore, the BG-conjecture implies
    that this SP is $\NP$-hard.
\end{proof}

The Gy\'arf\'as--Sumner conjecture has been confirmed in certain cases, e.g., for 
paths~\cite{Gyarfas1987,scott2020survey}. 
The sandwich problem for $\{P_n,K_k\}$-free is a CSP whenever $n \ge 4$, 
and in these cases,
we can also confirm Conjecture~\ref{conj:TKk}. To do so, we first prove the following lemma, which builds on known results
from PCSPs.

\begin{lemma}\label{lem:P5}
    For every integer $k\ge 3$ the SP for $\{P_5,K_k\}$-free 
    graphs is $\NP$-complete.
\end{lemma}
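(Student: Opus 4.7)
The plan is to reduce a known NP-hard promise constraint satisfaction problem to the sandwich problem for $\{P_5,K_k\}$-free graphs, following the template already used (conditionally) in the proof that the GS-conjecture together with the BG-conjecture imply Conjecture~\ref{conj:TKk}. The two ingredients I need are: (i) a chromatic number bound, and (ii) a rich family of graphs inside the class that can serve as the ``yes-template'' of a PCSP. For (i), I would invoke Gy\'arf\'as's confirmation of the Gy\'arf\'as--Sumner conjecture for paths, which yields a positive integer $N = N(k)$ such that every $\{P_5,K_k\}$-free graph has chromatic number at most $N$. For (ii), I would take $C_5$ and its blow-ups: blow-ups preserve triangle-freeness, hence $K_k$-freeness for $k\geq 3$; moreover $C_5$ itself is $P_5$-free (its only $5$-vertex induced subgraph is the cycle, not the path), and $P_5$-freeness is preserved under blow-ups because $P_5$ is point-determining.

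With these ingredients in place, the reduction sends an input $G=(V,E)$ of $\PCSP(C_5, K_N)$ to the sandwich instance $(V,E,\varnothing)$. If $G\to C_5$ via a homomorphism $h$, then the graph $(V,E^{\ast})$ with $E^{\ast}:=\{uv\colon u\neq v,\; h(u)h(v)\in E(C_5)\}$ is a blow-up of $C_5$, hence belongs to the class $\{P_5,K_k\}$-free, and contains $G$ as a subgraph; so $(V,E,\varnothing)$ is a yes-instance. Conversely, if $(V,E,\varnothing)$ is a yes-instance, the sandwich supergraph $(V,E')$ lies in $\{P_5,K_k\}$-free, thus has chromatic number at most $N$; since $E\subseteq E'$, also $\chi(V,E)\le N$, i.e.\ $G\to K_N$. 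Containment in NP is immediate from polynomial-time recognizability of $\{P_5,K_k\}$-free graphs (one checks all vertex subsets of size $\max(5,k)$).

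The final step is to invoke an unconditional NP-hardness result for $\PCSP(C_5, K_N)$ from the PCSP literature (for instance the topological hardness machinery of Krokhin--Opr\v{s}al and subsequent refinements, which yield hardness of approximate coloring of $C_5$-colorable graphs for fixed palettes). This is the main obstacle of the proof plan: one needs a hardness statement that covers the specific, potentially large, value of $N$ produced by the Gy\'arf\'as bound. For the subrange $k\ge 4$ this obstacle can be sidestepped by reducing instead from $\PCSP(K_{k-1}, K_N)$, since complete $(k-1)$-partite graphs are $P_4$-free and $K_k$-free and thus also $\{P_5, K_k\}$-free; hardness of this instance of the Brakensiek--Guruswami family is established unconditionally by known PCSP results. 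The case $k=3$ is genuinely more delicate, as $\PCSP(K_2,K_N)$ is tractable, so one must rely on a PCSP hardness result for a triangle-free yes-template such as $C_5$.
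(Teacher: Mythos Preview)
Your overall template is exactly the one the paper uses, but the proposal has a real gap in both branches, coming from the same source: you never pin down the chromatic bound $N$, and then you appeal to PCSP hardness results that are not actually available for the values of $N$ you would get.

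For $k\ge 4$ you claim that the hardness of $\PCSP(K_{k-1},K_N)$ ``is established unconditionally by known PCSP results''. This is precisely the Brakensiek--Guruswami conjecture, which is open for general $N$; the known unconditional results only cover small gaps between the two cliques, far below the Gy\'arf\'as bound for $\{P_5,K_k\}$-free graphs. Likewise, for $k=3$ you need $\PCSP(C_5,K_N)$ to be hard for whatever $N$ comes out of the Gy\'arf\'as path theorem, and that is also open for general $N$. So as written, neither branch closes.

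The paper avoids this in two moves. First, it reduces the problem to $k=3$ by the simple gadget of adding a single universal vertex: $(V,E,N)\mapsto (V\cup\{v'\},\,E\cup\{vv':v\in V\},\,N)$ is a reduction from $\SP(\{P_5,K_{k-1}\}\text{-free})$ to $\SP(\{P_5,K_k\}\text{-free})$. Second, and this is the point you are missing, for $k=3$ one does not need the general Gy\'arf\'as bound at all: Sumner proved that every $\{P_5,K_3\}$-free graph is $3$-colourable, so $N=3$. With $N=3$ your own argument goes through verbatim and lands on $\PCSP(C_5,K_3)$, whose NP-hardness is the Barto--Bul\'in--Krokhin--Opr\v{s}al result the paper cites. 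In short: reduce to $k=3$, then use Sumner to make $N=3$; the PCSP obstacle you flagged then disappears.
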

\begin{proof}
    The mapping $(V,E,N)\mapsto (V\cup\{v'\},E\cup\{vv'\colon u\in V\},N)$ is
    a polynomial-time reduction from the SP for $\{K_{k-1},P_5\}$-free graphs to the
    SP for $\{P_5,K_k\}$-free graphs. Hence, it suffices to prove the claim for $k = 3$. 
    Sumner proved that every $\{P_5,K_3\}$-free graph is $3$-colourable~\cite{sumnerTAG},
    and clearly, every blow-up of $C_5$ is $\{P_5,K_3\}$-free. This means that if a graph
    $(V,E)$ admits a homomorphism to $C_5$, then $(V,E,\varnothing)$ is a yes-instance to
    the SP for $\{P_5,K_3\}$-free graphs, and if $(V,E,\varnothing)$ is a yes-instance to
    the SP for $\{P_5,K_3\}$-free graphs, then  $(V,E)$ is $3$-colourable. Hence, $\PCSP(C_5,K_3)$
    reduces in polynomial-time to the SP for $\{P_5,K_3\}$-free graphs, and this proves the
    claim because $\PCSP(C_5,K_3)$ is $\NP$-hard~\cite{BartoBKO21}.
\end{proof}

\begin{theorem}
    For all integers $n,k\ge 4$  the SP for $\{P_n,K_k\}$-free graphs is $\NP$-complete.
\end{theorem}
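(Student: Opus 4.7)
The plan is to prove hardness by induction on $n$, with two base cases; membership in NP is routine since $P_n$ and $K_k$ have bounded size, so any candidate completion $E'$ can be checked for induced-$P_n$- and induced-$K_k$-freeness in polynomial time. The base cases are $n=4$, handled by Corollary~\ref{cor:perfect-Kk-free}, and $n=5$, handled by Lemma~\ref{lem:P5}. The inductive step will reduce the SP for $\{P_{n-2},K_k\}$-free graphs to the SP for $\{P_n,K_k\}$-free graphs, so iterating from the two base cases covers the even $n\ge 4$ from $n=4$ and the odd $n\ge 5$ from $n=5$, yielding all $n\ge 4$.

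The reduction attaches a private pendant to every vertex. Given $(V,E,N)$, set
\begin{align*}
V' &:= V\cup\{p_v:v\in V\}, \\
E' &:= E\cup\{vp_v:v\in V\}, \\
N' &:= N\cup\{p_uv:u,v\in V,\,u\neq v\}\cup\{p_up_v:u\neq v\}.
\end{align*}
The forbidden pairs in $N'$ pin each pendant $p_v$ to have its unique neighbour equal to $v$ in any completion $E'''$ of $(V',E',N')$. Consequently, cliques in $(V',E''')$ lie entirely inside $V$, so $K_k$-freeness on $V'$ is equivalent to $K_k$-freeness on the induced subgraph on $V$; and every induced path in $(V',E''')$ is an induced path inside $V$ possibly extended by a pendant at one or both endpoints.

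Both directions of the correctness then fall out easily. Forward: given a $\{P_{n-2},K_k\}$-free completion $E''$ of $(V,E,N)$, the set $E''':=E''\cup\{vp_v:v\in V\}$ is $K_k$-free and all induced paths use at most $(n-3)+2=n-1$ vertices, hence is $\{P_n,K_k\}$-free. Conversely, a $\{P_n,K_k\}$-free completion $E'''$ of $(V',E',N')$ restricts to $E'':=E'''\cap V^2$, which is $K_k$-free, and which must be $P_{n-2}$-free: an induced $P_{n-2}$ on $v_1,\dots,v_{n-2}$ in $(V,E'')$ would extend to the induced $P_n$ given by $p_{v_1},v_1,\dots,v_{n-2},p_{v_{n-2}}$ in $(V',E''')$, since the constraints in $N'$ force $p_{v_1}$ and $p_{v_{n-2}}$ to be non-adjacent to all interior vertices and to each other. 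I do not foresee any real obstacle: the correctness of the gadget hinges only on the elementary pendant-degree observation, and the two base cases are already in hand through the results cited above.
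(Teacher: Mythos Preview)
Your proposal is correct and is essentially identical to the paper's own proof: the paper uses the same two base cases ($n=4$ via Corollary~\ref{cor:perfect-Kk-free}, $n=5$ via Lemma~\ref{lem:P5}) and the same pendant-attaching reduction from $\SP(\{P_{n-2},K_k\}\text{-free})$ to $\SP(\{P_n,K_k\}\text{-free})$, with the correctness argued via the same clique and path-extension observations. Your write-up is in fact slightly more detailed than the paper's in spelling out both directions of the reduction.
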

\begin{proof}
    The case $n = 4$ follows by Corollary~\ref{cor:perfect-Kk-free}, and the case $n = 5$ follows by 
    Lemma~\ref{lem:P5}. Finally, for the
    cases $n\ge 6$ we observe that the SP for $\{P_{n-2},K_k\}$-free graphs reduces in polynomial time
    to the SP for $\{P_n,K_k\}$-free graphs. Indeed, consider an input $(V,E,N)$ to the former
    problem and construct the following input $(V',E',N')$: $V'$ contains $V$
    and for every $v\in V$ a new vertex $u_v$; $E'$ contains $E$ and for every $v\in V$
    we include the edge $vu_v$ in $E'$; and $N'$ contains $N$ together with all pairs
    $u_vu_w$ and $u_vw$ whenever $v \neq w$.
    Observe that the mapping
    $(V,E,N)\mapsto (V',E',N')$ is a polynomial-time
    reduction from the SP for $\{P_{n-2},K_k\}$-free graphs to the SP for
    $\{P_n,K_k\}$-free graphs: for every edge set $E_1$ with $E\subseteq E_1$ 
    and $E_1\cap N'= \varnothing$ there is a clique on $k\ge 4$ vertices in $(V,E_1)$ 
    if and only if there is a clique on $k$ vertices in $(V',E' \cup E_1)$, and
    every induced path in $(V,E_1)$ can be extended by one leaf on each side
    in $(V',E'\cup E_1)$.
\end{proof}

\section{A coNP-intermediate Graph Sandwich Problem}
\label{sect:non-dicho}
Note that for every graph class $\mathcal K$ which is closed under taking (not necessarily induced) subgraphs,
the sandwich problem for ${\mathcal K}$ and the recognition problem for ${\mathcal K}$ clearly have the
same computational complexity (see also~\cite[Proposition~3.1]{golumbicJA19}).

\begin{theorem}\label{thm:no-dicho}
    If $\PO \neq \coNP$, then there exists a graph class ${\mathcal C}$ such that 
    $\SP({\mathcal C})$ is coNP-intermediate. The class 
    ${\mathcal C}$ can be chosen to be closed under subgraphs, 
    and the recognition problem for ${\mathcal C}$ is coNP-intermediate as well. 
\end{theorem}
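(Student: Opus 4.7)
The plan is to adapt Ladner's theorem to the setting of subgraph-closed graph classes. Since for a class $\mathcal{C}$ closed under subgraphs the sandwich problem coincides with the recognition problem (by the observation preceding the statement), it will be enough to exhibit such a class whose recognition problem is $\coNP$-intermediate.

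I will start from a subgraph-closed class $\mathcal{C}_0$ whose recognition problem is $\coNP$-complete, as produced in the preceding part of this section (if necessary, the class constructed there is modified to be subgraph-closed, or one is built directly by encoding a $\coNP$-complete problem into a forbidden-subgraph family). Enumerate the minimal (not necessarily induced) forbidden subgraphs of $\mathcal{C}_0$ as $F_1,F_2,\dots$ in polynomial time, ordered by vertex count; this family must be infinite, as otherwise recognition would be in $\PO$. For any non-decreasing polynomial-time computable $g\colon \mathbb{N}\to\mathbb{N}$, set
\[
    \mathcal{C}_g := \{G : F_i \not\subseteq G \text{ for every } i \le g(|V(G)|)\}.
\]
The monotonicity of $g$ makes $\mathcal{C}_g$ closed under subgraphs: if $H \subseteq G \in \mathcal{C}_g$, then $g(|V(H)|) \le g(|V(G)|)$, so any forbidden-subgraph violation of $H$ would already be a violation of $G$. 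A non-membership certificate is an index $i \le g(|V(G)|)$ together with an embedding $F_i \hookrightarrow G$, so recognition of $\mathcal{C}_g$ lies in $\coNP$. Note that $g \equiv 0$ gives the class of all graphs (recognition in $\PO$), while $g$ growing fast enough to eventually include every relevant $F_i$ recovers $\mathcal{C}_0$ ($\coNP$-complete recognition).

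The core step is to build $g$ by a Ladner-style diagonalization against enumerations $M_1,M_2,\dots$ of polynomial-time clocked Turing machines and $R_1,R_2,\dots$ of polynomial-time many-one reductions, using alternating stages. To defeat $M_k$, allow $g$ to advance on a length $n$ large enough that $\mathcal{C}_g$-recognition on $n$-vertex inputs coincides with $\mathcal{C}_0$-recognition; since the latter is not in $\PO$, a brute-force search through graphs of size $O(\log n)$ (whose $\mathcal{C}_0$-status can be verified in time polynomial in $n$ using the $\coNP$ certificate) locates a graph on which $M_k$ errs. To defeat $R_k$, freeze $g$ at its current value $v$ along a long plateau of lengths; on the plateau, $\mathcal{C}_g$-recognition reduces to checking the fixed list $F_1,\dots,F_v$ and is therefore in $\PO$, so if $R_k$ were a valid reduction from $\mathcal{C}_0$ to $\mathcal{C}_g$, composing it with this polynomial-time decision on the plateau would give a polynomial-time algorithm for $\mathcal{C}_0$-recognition on inputs whose $R_k$-image lands in the plateau; a diagonal search over inputs of length at most $\log n$ then produces a witness to the failure of $R_k$, exactly as in Ladner's original argument.

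The hard part, as in Ladner's proof, will be orchestrating the stage-switching so that $g$ stays non-decreasing and polynomial-time computable while each $M_k$ and each $R_k$ is eventually addressed; the standard Ladner bookkeeping should adapt here, with the polynomial-time decidability of $\mathcal{C}_g$-recognition on $g$-plateaus playing the role of decidability on ``trivial lengths'' in the usual proof. Once such a $g$ is in place, $\mathcal{C}_g$ is subgraph-closed, its recognition lies in $\coNP \setminus \PO$ and is not $\coNP$-hard, and therefore, by the observation preceding the statement, $\SP(\mathcal{C}_g)$ is $\coNP$-intermediate as well.
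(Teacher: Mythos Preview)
Your overall plan is sound and matches the paper's high-level strategy: start from a subgraph-closed class with $\coNP$-complete recognition (the paper's ${\mathcal K}_T$, with $T=\{5,15,45,\dots\}$ and obstructions $C_n$ for $n\in T$) and run a Ladner-style diagonalization. Where you diverge is in the parameterization of the intermediate class. The paper does not introduce a size-dependent threshold function $g$; instead it directly selects a subset $S\subseteq T$ of forbidden cycle lengths via a self-simulating machine $F$, so that ${\mathcal K}_S$ is again defined by a \emph{fixed} family of forbidden subgraphs. The ``hard'' phases correspond to $f(n)$ even (so $k_n\in S$), the ``easy'' phases to $f(n)$ odd (so $k_n\notin S$), and the contradictions in Claims~2 and~3 come from two simple facts proved beforehand: if $T\setminus S$ is finite then $\SP({\mathcal K}_T)$ reduces to $\SP({\mathcal K}_S)$, and if $S$ is finite then $\SP({\mathcal K}_S)\in\PO$.

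Your threshold variant also goes through, but it is a bit more delicate than you let on. The class ${\mathcal C}_g$ is subgraph-closed as you argue, yet it is not given by a fixed forbidden family: the same graph $F_i$ may be ``allowed'' at small sizes and ``forbidden'' at larger ones, so the minimal obstructions of ${\mathcal C}_g$ are not simply the $F_i$. In the advance phase you need ${\mathcal C}_g$ to agree with ${\mathcal C}_0$ on $m$-vertex graphs, which requires $g(m)$ to dominate the number of minimal obstructions of ${\mathcal C}_0$ with at most $m$ vertices; this is unproblematic for ${\mathcal K}_T$ (only $O(\log m)$ obstructions up to size $m$) but would fail for an arbitrary $\coNP$-complete ${\mathcal C}_0$, so your parenthetical ``or one is built directly'' is doing real work here. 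The paper's subset-selection approach sidesteps all of this: the resulting class has the transparent description ``$C_n$-subgraph-free for $n\in S$'', and the two one-line observations above replace the plateau/advance bookkeeping you defer.
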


For a set $S$ of integers $n \geq 5$, let ${\mathcal K}_S$ be the class
of $C_n$-subgraph-free graphs, 
for $n \in S$. 
Clearly, this class is hereditary and even closed under subgraphs, and closed under disjoint unions.

\begin{observation}\label{obs:coNP}
If there is a polynomial-time algorithm that decides whether $n \in S$, for $n \in {\mathbb N}$ given in unary,
then $\SP({\mathcal K}_S)$ is in $\coNP$: on input $(G,G')$,  verify non-deterministically for all subgraphs $H$ of $G$ with $|V(H)|=n$ whether $H$ is isomorphic to $C_n$ (which can be done in polynomial time).
\end{observation}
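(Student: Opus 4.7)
The plan is to reduce the sandwich problem $\SP({\mathcal K}_S)$ to the recognition problem for ${\mathcal K}_S$, and then exhibit an $\NP$-verifier for the complement of recognition. The first step uses the observation mentioned in the paragraph preceding the statement: because ${\mathcal K}_S$ is closed under taking subgraphs, the sandwich problem and the recognition problem coincide (up to polynomial-time reduction). More precisely, on input $(V,E_1,E_2)$ with $E_1\subseteq E_2$, the instance is a yes-instance of $\SP({\mathcal K}_S)$ if and only if $(V,E_1)\in {\mathcal K}_S$: if some $E$ with $E_1\subseteq E\subseteq E_2$ yields a graph in ${\mathcal K}_S$, then by closure under subgraphs also $(V,E_1)\in {\mathcal K}_S$; conversely, $E:=E_1$ is a witness.

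Second, I would show that $(V,E_1)\notin {\mathcal K}_S$ admits a short, poly-time checkable certificate. By the definition of ${\mathcal K}_S$, this non-membership holds if and only if there is some $n\in S$ and a sequence $v_1,\dots,v_n$ of distinct vertices in $V$ such that $v_iv_{i+1}\in E_1$ for $1\le i<n$ and $v_nv_1\in E_1$. Any such $n$ satisfies $n\le |V|$, so its unary encoding has polynomial size in the input length, and the cycle $v_1,\dots,v_n$ is itself of polynomial size. This gives a candidate $\NP$-certificate of no-instance for $\SP({\mathcal K}_S)$.

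Third, I would verify that this certificate can be checked in polynomial time: (i) the hypothesis on $S$ yields a polynomial-time test that $n\in S$ from the unary representation; (ii) distinctness of $v_1,\dots,v_n$ and incidence of the consecutive pairs to $E_1$ are trivially poly-time. Hence the complement of $\SP({\mathcal K}_S)$ lies in $\NP$, i.e., $\SP({\mathcal K}_S)\in\coNP$. There is no real obstacle here; the only subtle point is that one must use the unary encoding when invoking the decider for $S$, which is exactly the reason why the hypothesis is stated with ``$n$ given in unary'' and why we can afford to guess $n\le |V|$ as part of the certificate.
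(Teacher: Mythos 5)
Your proposal is correct and follows the same route as the paper: the remark preceding the observation reduces the sandwich problem to recognition via closure under subgraphs, and the coNP certificate is a cycle of length $n\in S$ in $(V,E_1)$, with $n\le |V|$ so that the unary-time decider for $S$ applies. You have merely spelled out the details that the paper leaves as a one-line sketch.
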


\begin{observation}
    \label{obs:reduce}
    If $S' \subseteq S$ and $S \setminus S'$ is finite, then 
    there is a polynomial-time reduction from 
    $\SP({\mathcal K}_{S})$ to  $\SP({\mathcal K}_{S'})$. 
In particular, if $S$ is finite, then $\SP({\mathcal K}_S)$ is in P (since $\SP({\mathcal K}_{\emptyset})$ is clearly in P). 
\end{observation}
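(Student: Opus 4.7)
The plan is to exploit the fact that $\mathcal{K}_S$ is closed under (not necessarily induced) subgraphs. By the remark preceding Observation~\ref{obs:coNP}, this implies that $\SP(\mathcal{K}_S)$ on an input $(V, E_1, E_2)$ is a yes-instance if and only if $(V, E_1) \in \mathcal{K}_S$. Indeed, $E := E_1$ is always a valid choice when $(V, E_1) \in \mathcal{K}_S$, while conversely any edge set $E$ with $E_1 \subseteq E \subseteq E_2$ preserves every $C_n$-subgraph already present in $(V, E_1)$. So the sandwich problem collapses to the recognition problem for $\mathcal{K}_S$ evaluated on $(V, E_1)$, and it suffices to check in polynomial time whether $(V, E_1)$ contains any $C_n$-subgraph with $n \in S \setminus S'$.

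Concretely, let $T := S \setminus S'$, which is finite by hypothesis, and set $N := \max T$ (or $N := 0$ if $T = \emptyset$, in which case the reduction is the identity). The reduction on input $(V, E_1, E_2)$ first enumerates, for each $n \in T$, all ordered $n$-tuples of distinct vertices of $V$ and tests whether any of them forms a $C_n$-subgraph in $(V, E_1)$; this takes time $O(|V|^N)$, which is polynomial in the input size since $N$ is a constant depending only on $S$ and $S'$. If some such cycle is detected, then $(V, E_1) \notin \mathcal{K}_S$, so the original input is a no-instance of $\SP(\mathcal{K}_S)$, and the reduction outputs a fixed no-instance of $\SP(\mathcal{K}_{S'})$: pick any $n' \in S'$ and return the pair $((V(C_{n'}), E(C_{n'})), (V(C_{n'}), E(C_{n'})))$. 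If no such cycle is found, then the extra constraints contributed by $T$ are automatically satisfied by $(V, E_1)$, so $(V, E_1) \in \mathcal{K}_S$ if and only if $(V, E_1) \in \mathcal{K}_{S'}$, and the reduction simply outputs $(V, E_1, E_2)$ unchanged.

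For the ``in particular'' statement, I take $S' = \emptyset$: the problem $\SP(\mathcal{K}_\emptyset)$ trivially accepts every input, so the preprocessing above already decides $\SP(\mathcal{K}_S)$ in polynomial time (accept when no extra cycle is found, reject otherwise). For $S' = \emptyset$ there is no no-instance available to emit, so in this degenerate case the reduction is best viewed as a polynomial-time Turing reduction to a trivial oracle, or equivalently as a standalone decision procedure. There is no real obstacle in the argument: the subgraph-closure of $\mathcal{K}_S$ does all the work by collapsing the sandwich problem into a recognition problem, and the finiteness of $S \setminus S'$ together with the bounded length of each extra forbidden cycle makes the additional checks polynomial.
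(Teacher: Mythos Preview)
Your argument is correct. The paper states this as an observation without proof, so there is no written argument to compare against; your approach---collapsing the sandwich problem to the recognition problem via subgraph-closure, then brute-forcing the finitely many extra forbidden cycle lengths---is exactly the intended one, and your treatment of the degenerate case $S' = \emptyset$ (where no many-one reduction can exist to the trivial all-yes problem, but a direct polynomial-time decision procedure suffices) is the right way to read the ``in particular'' clause.
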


We first construct a graph class ${\mathcal K}$ whose sandwich problem is coNP-complete; this will be needed in the proof of 
Theorem~\ref{thm:no-dicho}. Define $k_0,k_1,k_2,\dots$
inductively as follows:
$k_0 :=5$, and $k_{i+1} := 3 k_i$ for all $i \in {\mathbb N}$. 
Let $T := \{k_i \mid i \in {\mathbb N}\}$. 
Note that for a number $n \in {\mathbb N}$ given in unary, one can decide in polynomial
time in $n$ whether $n \in T$. Also note that all numbers in $T$ are odd.

\begin{proposition}
    The graph sandwich problem for ${\mathcal K}_T$ is $\coNP$-complete.
\end{proposition}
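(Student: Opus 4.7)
The plan is to reduce an appropriate coNP-complete problem to $\SP(\mathcal{K}_T)$. Since $\mathcal{K}_T$ is subgraph-closed, its sandwich problem coincides with its recognition problem, and Observation~\ref{obs:coNP} already gives containment in coNP, because $T$ is polynomial-time decidable on unary inputs (check whether $n/5$ is a power of $3$). So the task is to show that the recognition problem --- does $G$ avoid $C_n$ as a subgraph for every $n \in T$? --- is coNP-hard.

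I would reduce from the complement of the Hamiltonian $s$-$t$ path problem on bipartite graphs, which is coNP-complete. Given a bipartite graph $G$ on $m$ vertices with distinguished endpoints $s \ne t$, let $k_{i-1}$ be the smallest element of $T$ with $k_{i-1} \geq (m-1)/2$, and set $k := k_i = 3 k_{i-1}$. Because consecutive elements of $T$ differ by a factor of $3$, $k$ is bounded by $9(m-1)/2$ and hence polynomial in $m$, and moreover $m \leq 2 k_{i-1} + 1$. I build $G'$ from $G$ by adjoining a fresh path $s - u_1 - u_2 - \cdots - u_{k-m} - t$ whose $k-m$ internal vertices are new, so that $|V(G')| = k$.

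The core of the analysis is a cycle classification in $G'$. Every $u_j$ has degree $2$, so any simple cycle of $G'$ either avoids the new internal vertices (and hence lies inside $G$), or it contains some $u_j$ and therefore uses the entire adjoined path, closing via a simple $s$-$t$ path in $G$. Cycles of the first type are even because $G$ is bipartite, while every element of $T$ is odd, so they contribute no $C_n$ with $n \in T$. Cycles of the second type have length $(k-m+1) + d$ where $d$ is the length of some simple $s$-$t$ path in $G$, so their length lies in $[k-m+2,\, k]$. The choice $k_{i-1} \geq (m-1)/2$ guarantees $k - m + 2 > k_{i-1}$, so this interval meets $T$ only at $k$ itself; and a cycle of length exactly $k$ exists iff $d = m-1$, i.e., iff $G$ admits a Hamiltonian $s$-$t$ path. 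Hence $G' \in \mathcal{K}_T$ iff $G$ has no Hamiltonian $s$-$t$ path, which is the required reduction.

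The main subtlety I expect is engineering $k$ so that \emph{both} kinds of cycles stay out of $T$: a parity argument (``$T$ is entirely odd, $G$ is bipartite'') handles the cycles inside $G$, while the multiplicative gap $k_i = 3 k_{i-1}$ in $T$ handles the cycles through the padding path; neither device alone would suffice. This is presumably exactly why the sequence has been defined with $k_0 = 5$ and $k_{i+1} = 3 k_i$ rather than, for instance, an arithmetic progression or a sequence including even numbers.
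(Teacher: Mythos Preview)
Your proposal is correct and follows essentially the same approach as the paper: reduce from the complement of the bipartite Hamiltonian $s$--$t$ path problem by padding with a path so that the total number of vertices hits a suitable element of $T$, use that all elements of $T$ are odd to rule out cycles contained in the bipartite part, and use the multiplicative gap $k_{i}=3k_{i-1}$ to ensure that the only $T$-length cycle through the padding corresponds to a Hamiltonian path. The only difference is cosmetic: the paper targets $k_{i+2}$ where $k_i \le |V(G)| < k_{i+1}$, whereas you target $k_i$ where $k_{i-1}$ is the least element of $T$ at least $(|V(G)|-1)/2$; your choice gives a slightly smaller gadget but the analysis is the same.
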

\begin{proof}
    By Observation~\ref{obs:coNP}, $\SP({\mathcal K}_T)$ is in coNP. 
    We present a polynomial-time reduction from the complement of the following computational problem, which is well-known to be NP-complete (see, e.g.,~\cite{doi:10.1137/0211056}), to $\SP({\mathcal K}_T)$.
    The input of this problem consists of a finite bipartite graph $G$ with special vertices $s$ and $t$, and the task is to decide whether there exists a Hamiltonian path from $s$ to $t$.  
    
    Let $n := |V(G)|$.
    Let $i \in {\mathbb N}$ be the unique number
    such that $k_i \leq n < k_{i+1}$. It is easy to see that $i$ and $k_i$ can be
    computed in polynomial time. Let $H$ be the graph obtained from $G$ by adding a path $P_l$ with $l \geq 1$ vertices, and adding an edge from $t$ to the first vertex of the path, and an edge from the last vertex of the path to $s$. We choose $l$ such that
the resulting graph $H$ has exactly $k_{i+2} = 3 k_{i+1}$ vertices.
We claim that $G$ has a Hamiltonian path from $s$ to $t$ if and only if $H$ has a cycle of length $s$, for some $s \in T$. 

If $G$ has a Hamiltonian path $u_0,\dots,u_{n-1}$, then 
we may use the copy of $P_l$ in $H$ to find a Hamiltonian cycle in $H$, which has length $|V(H)| = k_{i+2} \in T$.
Conversely, suppose that $H$ has a cycle
of length $k \in T$. Since $k \in T$ is odd, and
$G$ is bipartite, the cycle must traverse all of $P_l$. 
Since $n \leq p_{i+1} < l$, we must have $k = p_{i+2}$.  
Thus, $H$ has a 
Hamiltonian cycle. We may assume that the cycle starts in $s$ and that it first traverses $G$ and then $P_l$.
Hence, we found a Hamiltonian path from $s$ to $t$ in $G$, and thus $H \notin \SP({\mathcal K}_T)$.
\end{proof}

\begin{proof}[Proof of Theorem~\ref{thm:no-dicho}]
We make a Ladner-type construction~\cite{Ladner,BodirskyGrohe}, and define a set $S \subseteq {\mathbb N}$ such that $\SP({\mathcal K}_S)$ is coNP-intermediate using a Turing machine $F$ as follows. We fix one of the standard encodings of undirected graphs as strings in order to perform computations on graphs with Turing machines.
Let $M_1,M_2,\dots$ be an enumeration of all polynomial-time bounded Turing machines, and let $R_1,R_2,\dots$ be an enumeration of all polynomial-time bounded reductions. We assume that these enumerations are effective; it is well known that such enumerations exist. 

Let $F$ be the Turing machine which takes as input one number $n$ in unary representation and returns another number $f(n) \in {\mathbb N}$; the set $S$ will then be  
$$ S := \{k_n \mid f(n) \text{ is even} \}$$
for $k_0,k_1,k_2,\dots$ as defined above. 

The machine simulates itself on input $1$, then on input $2$, and so on, until the number of computation steps exceeds $n$. 
Let $k$ be the value computed by
$F$ for the last input $i$ on which the simulation was completely performed. 

Next, $F$ enumerates for $s=1,2,3,\dots$ all undirected graphs $G$ on the vertex set $\{1,\dots,s\}$.
If $k$ is even, then for each graph $G$ in the enumeration, $F$ simulates $M_{k/2}$ on the encoding of $G$. 
Moreover, $F$ computes whether $G$ belongs to ${\mathcal K}_S$, again simulating itself. 
If 
\begin{itemize}
    \item $M_{k/2}$ rejects and $G \in {\mathcal K}_S$, or 
    \item $M_{k/2}$ accepts and 
    $G \notin {\mathcal K}_S$,
\end{itemize}
then $F$ returns $k+1$. 

If $k$ is odd, then  
$F$ simulates the computation of $R_{\lfloor k/2 \rfloor}$ on the encoding of $G$. Then $F$ computes whether the output of
$R_{\lfloor k/2 \rfloor}$ encodes a graph $G' \in {\mathcal K}_T$, 
and whether $G \in {\mathcal K}_S$, again simulating itself. 
If 
\begin{itemize}
    \item $G \in {\mathcal K}_S$ and $G' \notin {\mathcal K}_T$, or 
    \item $G \notin {\mathcal K}_S$ and $G' \in {\mathcal K}_T$, 
\end{itemize}
then $F$ returns $k+1$. 
Whenever the number of computation steps of $F$ exceeds $2n$ the machine returns $k$.

{\bf Claim 1.} 
$\SP({\mathcal K}_S)$ is in coNP. This follows from Observation~\ref{obs:coNP}, because on input $n$ the Turing machine $F$ takes at most $2n$ computation steps. 

{\bf Claim 2.} $\SP({\mathcal K}_S)$ is not in P. 
If $\SP({\mathcal K}_S)$ is in P, then there exists an $i \in {\mathbb N}$ such that $M_i$ decides 
$\SP({\mathcal K}_S)$. 
It can then be shown that $f(n)$ is even for all but finitely many $n$; thus, $T \setminus S$ is finite, and hence
$\SP({\mathcal K}_T)$ reduces to
$\SP({\mathcal K}_S)$ by Observation~\ref{obs:reduce}, 
in contradiction to our assumption that $P \neq \coNP$.

{\bf Claim 3.} 
$\SP({\mathcal K}_S)$ is not $\coNP$-hard. 
If $\SP({\mathcal K}_S)$ is coNP-hard, then there exists an $i \in {\mathbb N}$ such that $R_i$ is a reduction from $\SP({\mathcal K}_T)$ to $\SP({\mathcal K}_S)$. 
In this case it can be shown that $f(n)$ is odd for all but finitely many $n$. This implies that
$S$ is finite, and hence 
$\SP({\mathcal K}_S)$ is in P by Observation~\ref{obs:reduce}. 
Again, we reach a contradiction to our assumption that  $P \neq \coNP$.
\end{proof}

\section{Conclusions and Open Problems}
\label{sec:conclusions}

We have established a connection between graph sandwich problems and (infinite-domain) constraint satisfaction problems:
we clarified under which conditions on a graph class ${\mathcal C}$ the graph sandwich problem for ${\mathcal C}$
is of the form $\CSP(H^*)$ for some countably infinite graph $H$. The theory of CSPs is particularly well-developed for
\emph{$\omega$-categorical} structures $H^*$ (introduced below; every finite structure is $\omega$-categorical), 
because for such structures, one can use concepts and results from universal algebra, which was crucial for solving the
Feder-Vardi dichotomy conjecture mentioned in the introduction.

An \emph{automorphism} of a $\tau$-structure $A$ is a bijection
$f\colon A\to A$ such that for each $R\in \tau$ of arity $r$
there is a tuple $(a_1,\dots, a_r)\in  R(A)$ if and only
if $(f(a_1),\dots, f(a_r))\in A$. We denote by $\Aut(A)$ the
automorphism group of $A$. 
A structure $A$ is  \emph{$\omega$-categorical}
if $\Aut(A)$ has for every positive integer $k$ finitely many orbits in its componentwise
action on $k$-tuples; this is not the official definition, but equivalent to it by the
theorem of Engeler, Svenonius, and Ryll-Nardzewski (see, e.g.,~\cite{Hodges}). 
Every finite structure is $\omega$-categorical, because there are only 
finitely many $k$-tuples for every positive integer $k$. 
The infinite clique $(\mathbb N; \{(x,y)\in \mathbb N^2\colon x\neq y\})$
is $\omega$-categorical because $(x_1,\dots, x_k)$ and
$(y_1,\dots, y_k)$ belongs to the same orbit whenever
$x_i = x_j\iff y_i = y_j$ for all $i,j\in[k]$. 
It is easy to see that every structure which is first-order interpretable
in an $\omega$-categorical structure is $\omega$-categorical as well, which shows that 
the universal graphs from  Sections \ref{sect:multipartite},
\ref{sect:threshold},  and~\ref{sect:permutation} 
are $\omega$-categorical.  In fact, it can be shown that the countably infinite
universal structures that appeared in  Sections \ref{sect:line}, \ref{sect:split},
and~\ref{sect:comparability} are $\omega$-categorical as well.

On the other hand, 
the infinite directed path $\vec{P}:= (\mathbb Z,\{(x,y)\colon y = x +1\})$
is not $\omega$-categorical: for every pair of positive integers $n\neq m$,
there is no automorphism of $\vec{P}$ mapping $(0,n)$ to $(0,m)$, and so
there are infinitely many orbits of pairs. 

Automorphisms preserve first-order
formulas (i.e., if $A\models\phi(a_1,\dots, a_n)$ and $f\in \Aut(A)$, 
then $A\models \phi(f(a_1),\dots, f(a_n))$), and so, if
$B$ is a first-order reduct of $A$, then every automorphism
of $A$ is an automorphism of $B$. In other words, 
$\Aut(A)$ is a subgroup of $\Aut(B)$ as a permutation group. 
It turns out that when $A$ and $B$ are $\omega$-categorical 
then the converse also holds: $B$ is a first-order reduct 
of $A$ if and only if $\Aut(A) \subseteq \Aut(B)$~\cite[Corollary 7.3.3]{HodgesLong}.
Similarly, $B$ is a primitive positive reduct of $A$ (i.e., all relations of
$B$ are primitively positively definable in $A$) if and only if $\Pol(A) \subseteq \Pol(B)$. 
It follows that the complexity of the CSP of an $\omega$-categorical structure $B$
only depends on $\Pol(B)$.

 \begin{corollary}\label{cor:SP-CSP}
     Let $\calC$ be a hereditary class with the joint embedding property, and
     preserved by blow-ups or by co-blow-ups. If there is a universal $\omega$-categorical
     graph in $\calC$, then $\SP(\calC)$ equals $\CSP(H^*)$ for an
     $\omega$-categorical graph $H$, and its complexity only depends on
     $\Pol(H^*)$ (see Section~\ref{sect:threshold}).
 \end{corollary}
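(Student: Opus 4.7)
The plan is to assemble the statement from three ingredients already in place. First, I would apply Lemma~\ref{lem:H->H*} to obtain the equality $\SP(\calC) = \CSP(H^\ast)$, where $H$ is the assumed universal $\omega$-categorical graph in $\calC$. To do this, I only need to check the hypotheses of the lemma: $\calC$ is hereditary, has the joint embedding property, and is preserved under split blow-ups. The last one holds because the text observes that any class preserved by blow-ups (resp.\ co-blow-ups) is automatically preserved by split blow-ups (take $I = V$ or $C = V$ in the split partition).

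Next, I would argue that $H^\ast$ is $\omega$-categorical. Note that $H^\ast$ is a first-order reduct of $H$: the blue relation satisfies $B(x,y)\leftrightarrow E(x,y)$ and the red relation satisfies $R(x,y) \leftrightarrow (x\neq y \land \lnot E(x,y))$. Consequently every automorphism of $H$ preserves both $B$ and $R$, so $\Aut(H)\subseteq \Aut(H^\ast)$ as permutation groups. Since orbits of $\Aut(H^\ast)$ on $k$-tuples are unions of orbits of $\Aut(H)$ on $k$-tuples, and the latter are finitely many by the $\omega$-categoricity of $H$ (via Engeler--Svenonius--Ryll-Nardzewski), the former are finitely many as well, and therefore $H^\ast$ is $\omega$-categorical.

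Finally, the complexity assertion is a direct consequence of the paragraph preceding the corollary: for $\omega$-categorical structures $B$, one has $\Pol(B)\subseteq \Pol(B')$ if and only if $B'$ is a primitive positive reduct of $B$, and hence the primitive positive reducts of $B$ (and, by Lemma~\ref{lem:pp-construction-reduction}, the complexity of $\CSP(B)$ up to log-space reductions) are determined by $\Pol(B)$. Applying this to $B = H^\ast$ gives the claim.

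There is no real obstacle: every step is an immediate combination of already established results, and the role of the corollary is mainly to package the connection so that the algebraic approach to CSPs can be applied to graph sandwich problems whenever a universal $\omega$-categorical graph exists.
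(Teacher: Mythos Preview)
Your proposal is correct and follows precisely the approach the paper intends: the corollary is stated without its own proof because it is a direct packaging of Lemma~\ref{lem:H->H*}, the remark that closure under (co-)blow-ups implies closure under split blow-ups, the observation that $H^\ast$ is a first-order reduct of $H$ (hence $\omega$-categorical whenever $H$ is), and the preceding paragraph on $\Pol$ determining complexity for $\omega$-categorical templates. There is nothing to add.
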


This motivates the following research questions which are left for the future. 

\begin{itemize}
    \item Is there an $\omega$-categorical perfect graph? 
    \item Is there a finite set of finite graphs ${\mathcal F}$ such that the SP for ${\mathcal F}$-free graphs is \emph{NP-intermediate}, i.e., in NP, but neither in P nor NP-complete? 
    \item Is there an algorithm that decides for a given finite set of finite graphs whether the graph sandwich problem for the class $\mathcal C$ of ${\mathcal F}$-free graphs is a CSP? Braunfeld~\cite{BraunfeldUndec} showed that given a finite set of finite graphs ${\mathcal F}$, it is undecidable whether the class of ${\mathcal F}$-free graphs has the JEP; hence, by our results, it suffices to verify that the undecidability result of Braundfeld works when restricted to point determining graphs. 
    \item Is there an algorithm that decides for a given finite set of connected point-determining graphs $\mathcal F$
    whether the SP for $\mathcal F$-free graphs is in P (assuming P$\neq$NP)?
    This problem is sometimes referred to as the \emph{tractability problem} which is known to be undecidable for
    several formalisms (e.g., for existential second-order logic~\cite{Book}).
\end{itemize}

\bibliographystyle{abbrv}
\bibliography{global.bib}

\end{document}